\documentclass[12pt]{article}
\newif\ifblind
\blindfalse

\newif\ifarxiv
\arxivtrue

\usepackage[affil-it]{authblk} 
\usepackage{amssymb,amsbsy,amsfonts,amsmath,amsthm,xspace}
\usepackage{lscape}
\usepackage{multibib}
\usepackage[round]{natbib}
\usepackage{graphicx}
\usepackage{subcaption}
\graphicspath{{figures/}}
\usepackage{color}
\usepackage[hidelinks]{hyperref}
\usepackage{bm}
\usepackage{comment}
\usepackage{tikz}
\usepackage{booktabs}

\newcites{body}{References}
\newcites{supp}{References}

\def\twoImages#1#2#3#4#5#6
{
\centerline{\hfill
\includegraphics[width=#2]{#1}
\hfill
\includegraphics[width=#5]{#4}
\hfill
}
}

\def\fourImages#1#2#3#4#5#6
{
\centerline{\hfill
\includegraphics[width=#5]{#1}
\hfill
\includegraphics[width=#6]{#2}
\hfill
}
\centerline{\hfill
\includegraphics[width=#5]{#3}
\hfill
\includegraphics[width=#6]{#4}
\hfill
}
}

\newcommand{\indist}{\overset{d}{\rightarrow}}
\newcommand{\inprob}{\overset{P}{\rightarrow}}

\newcommand{\tp}{\intercal}

\newcommand{\pd}[2]{\frac{\partial #1}{\partial #2}}

\newcommand{\expect}{\mathbb{E}}
\newcommand{\prob}{\mathbb{P}}
\newcommand{\var}{\operatorname{Var}}
\newcommand{\cov}{\operatorname{Cov}}
\newcommand{\acov}{\operatorname{aCov}}

\newcommand{\real}{\mathbb{R}}

\newcommand{\realp}{\mathbb{R}^p}

\newcommand{\Anet}{\bm{A}}
\newcommand{\Ynet}{\bm{Y}}
\newcommand{\Anetk}[1]{\bm{A}^{(#1)}}
\newcommand{\Ynetk}[1]{\bm{Y}^{(#1)}}

\newcommand{\Yvec}{\vec{\bm{Y}}}

\newcommand{\subtri}[1]{
  #1_{\tikz[baseline=-0.5ex, scale=0.3]{
    \coordinate (A) at (0,0);
    \coordinate (B) at (1,0);
    \coordinate (C) at (0.5,0.866); 
    \draw[line width=0.5pt] (A) -- (B) -- (C) -- (A);
  }}
}

\newcommand{\subtstar}[1]{
  #1_{\tikz[baseline=-0.5ex, scale=0.3]{
    \coordinate (A) at (0,0);
    \coordinate (B) at (1,0);
    \coordinate (C) at (0.5,0.866);
    \draw[line width=0.5pt] (B) -- (C) -- (A);
  }}
}

\newcommand{\subtchain}[1]{
  #1_{\tikz[baseline=-0.5ex, scale=0.3]{
    \coordinate (A) at (0,0);
    \coordinate (B) at (1,0);
    \coordinate (C) at (0.5,0.866);
    \draw[line width=0.5pt] (A) -- (B) -- (C);
    \draw[dotted, line width=0.5pt] (C) -- (A);
  }}
}

\newcommand{\subhchain}[1]{
  #1_{\tikz[baseline=-0.5ex, scale=0.3]{
    \coordinate (A) at (0,0);
    \coordinate (B) at (1,0);
    \coordinate (C) at (0.5,0.866);
    \draw[line width=0.5pt] (B) -- (C);
    \draw[dotted, line width=0.5pt] (C) -- (A);
  }}
}

\theoremstyle{definition}

\newtheorem{proposition}{Proposition}
\newtheorem{corollary}{Corollary}
\newtheorem{lemma}{Lemma}
\newtheorem{remark}{Remark}

\newtheorem{assumption}{Assumption}

\title{Inference for subgraph densities in noisy dynamic networks}
\ifblind
\author{\empty}
\else
\author[1]{Peter W. MacDonald}
\affil[1]{\footnotesize University of Waterloo, \ifarxiv \else Waterloo ON Canada N2L 3G1, \fi {\tt pwmacdonald@uwaterloo.ca}} 
\author[2]{Eric D. Kolaczyk}
\affil[2]{\footnotesize McGill University, \ifarxiv \else Montr\'eal QC Canada H3A 0B9, \fi {\tt eric.kolaczyk@mcgill.ca}}
\fi
\date{August 5, 2026}

\begin{document}

\maketitle

\begin{abstract}
  In this work we develop statistical methodology to estimate and perform inference on subgraph densities using time-indexed, or dynamic network sequences.
  These estimates explicitly adjust for observation errors for the network edges, and have good theoretical properties as the size of the network grows.
  By specifying a stochastically evolving hidden Markov network model, we address two important directions for further investigation identified by \citetbody{chang22estimation}: robustness to non-identical network replicates, and efficient aggregation of multiple available network snapshots.
  These new methods vastly expand the analysis of noisy networks to new data settings, as network replicates are commonly observed dynamically.
  The methodology is also extended to consider joint inference for subgraph densities at multiple time points, to facilitate formal statistical comparison of dynamic network snapshots.
\end{abstract}

\noindent
{\bf Keywords:} measurement error; network data; subgraph inference; temporal networks

\pagebreak

\section{Introduction} \label{sec:intro}

Network data has become prevalent in many statistical applications, including neuroscience, economics, and sociology.
However, network data is not always made up of ground truth edges, but instead proxies constructed from other data sources.
This can lead to non-reporting biases, linkage errors, missingness, and other forms of measurement error on network edges, resulting in what we refer to as {\em network noise} (and a {\em noisy network} as the resulting data object).
Unaccounted-for network noise can lead to systematic bias in downstream inference tasks.
{\em In this work our focus is on noise-induced bias in network summaries, specifically subgraph densities.} 
But network noise can affect other tasks like link prediction, or node ranking.
In past work, \citebody{priebe15statistical} studied the effect of measurement error on estimation of the parameters of a stochastic block model.
\citebody{balachandran17propagation} studied the exact distributions of subgraph count discrepancies between ground truth and noisy networks.
\citebody{li2022estimation} studied estimation of branching factors.
Finally, \citebody{chang22estimation} considered estimation and inference for subgraph densities under network noise.
Their fundamental model has also been used by \citebody{jiang23autoregressive} to model autoregressive dynamic networks, and as a mechanism to introduce noise in network differential privacy \citepbody{chang24edge}.

Non-identifiability is a fundamental challenge of this noisy network model. \citebody{chang22estimation}, Theorem 1 states that even with constant but unknown type I and type II error rates, one (noisy) observation of the network is insufficient to identify the subgraph densities of the underlying network. As a result, they develop an approach that requires
three independent and identically distributed (iid) replicates. In some real-world settings, it is reasonable to expect network replicates.
For instance gene co-expression networks are typically built by aggregating replicated samples.
These samples can instead be aggregated into several smaller groups, which can be reasonably assumed to be iid, at the cost of statistical power to detect edges.
In neuroimaging, networks may similarly be constructed by pooling several subjects from a larger sample, although this can cause issues if the underlying brain networks are heterogeneous.
However, in social network analysis or in neuroimaging settings where we often want to infer about one social group or one individual, we may intrinsically have a sample size of one.

Our work is based on the following simple but powerful premise: {\em in the subgraph density estimation problem, dynamic network data can provide sufficient replicability for identification.} This opens up a vastly larger realm of applications. For example, we frequently can observe replicates of a social group over several days, or of a brain network longitudinally across several sessions in a scanner.
Subgraph density estimators can be used to globally summarize the network sequence, but exploiting the above premise is nontrivial. While the dynamic setting may lead to cases with more ``replicates,'' perhaps many more than three, these are unlikely to be exact replicates; both network dependence and heterogeneity across 
such ``replicates'' requires
development of appropriate, novel methodology for inference, as we do here.

In addition to improved, robust methods for the usual subgraph density estimation problem, new and critical inferential questions are suggested when network replicates are not exchangeable, and when the data is collected with dynamics in mind. 
For instance, in Section~\ref{sec:real_data}, we consider neuroimaging data originally collected to understand changes in brain activation from the beginning to end of the study period.

Dynamic network analysis has been a popular area of study in statistical network analysis and related fields, especially over the last decade;
see for instance the reviews by \citebody{kim2018review} and \citebody{kazemi2020representation}.
Classical statistical models for static networks are generalized to dynamic networks (observed in discrete time) in \citebody{krivitsky2014}, \citebody{sewell2015latent}, \citebody{matias2017}, and \citebody{jiang23autoregressive}, among others.
All of the examples above are fundamentally based on an {\em evolutionary} mechanism for binary edges, to efficiently parameterize the distributions of the dynamically observed network edges, and induce dependence over time.
Our work represents the same critical foundation for sequences of noisy networks, and we focus on the fundamental phenomena introduced by the interplay between stochastic evolution and error models.
Our model still allows flexibility to fit real data though an arbitrary underlying network or networks (cf.~Sections~\ref{sec:model} and \ref{sec:comparison}).
In specific applications where additional heterogeneity or exogenous covariates are present, it is natural to define more complex stochastic evolution and error models, see for instance \citebody{chang2026autoregressive}, which generalizes the autoregressive network model of \citebody{jiang23autoregressive}. 

In this paper we make many contributions to the estimation and inference of subgraph densities based on dynamically observed network snapshots, represented by their binary adjacency matrices
\[
  \{\Ynetk{k}\}_{k=1}^K \subset \{0,1\}^{n \times n},
\]
where each $\Ynetk{k}$ is a noisy version of an underlying network $\Anetk{k}$.
We consider two main settings: for global summarization, inference for subgraph densities at a fixed time (without loss of generality $k=1$); and for dynamic comparison, joint inference for subgraph densities at two time points (without loss of generality $k=1$ and $k=K$). 
In Section~\ref{subsec:comparison_problems}, we discuss analogous methodology for more complicated settings.

We distinguish new methodology for two distinct types of unknown parameters: edge density, and higher-order subgraph densities.
We will see in Section~\ref{sec:ho} that inference in the higher-order setting can utilize a special form of bootstrap resampling.
We defer some details of higher-order subgraph density with unknown error and evolution rate parameters to the \ifarxiv
appendices.
\else
supplementary materials.
\fi
The locations in the manuscript of these methodological contributions are summarized in Table~\ref{tab:roadmap}.
We evaluate our new methodology on synthetic and noisy network sequences in Sections~\ref{sec:simulation}, apply it to neuroimaging data in Section~\ref{sec:real_data}, and discuss conclusions and directions for future work in Section~\ref{sec:conclusion}.
Technical proofs and additional detailed derivations are provided in the 
\ifarxiv
appendices.
\else
supplementary materials.
\fi

\begin{table}[ht]
  \begin{center}
\begin{tabular}{|l|c|c|}
\hline
\multicolumn{1}{|l|}{\begin{tabular}[c]{@{}l@{}}Target parameter\end{tabular}} &
\multicolumn{1}{l|}{\begin{tabular}[c]{@{}l@{}}Edge density\end{tabular}} &
\multicolumn{1}{l|}{\begin{tabular}[c]{@{}l@{}}Higher-order subgraph density \\ (additional details) \end{tabular}} \\ \hline
$\Anetk{1}$ &
Sec.~\ref{sec:estimation} &
Sec.~\ref{sec:ho} (\ifarxiv App.\else Supp. \fi~\ref{app:ho_theory}) \\ \hline
$\Anetk{1}, \Anetk{K}$ &
Sec.~\ref{subsec:gmm_comparison} &
Sec.~\ref{subsec:ho_comparison} (\ifarxiv App.\else Supp. \fi~\ref{app:comparison_theory}) \\ \hline
\end{tabular}
\end{center}
\caption{Road map of methodology developed in this manuscript. \label{tab:roadmap}}
\end{table}

\section{Noisy dynamic network model} \label{sec:model}

For a single binary, undirected network on $n$ nodes, the noisy network model of \citebody{chang22estimation} specifies the distribution of a noisy observation $\Ynet \in \{0,1\}^{n \times n}$ of an underlying (deterministic) network $\Anet \in \{0,1\}^{n \times n}$, represented by their adjacency matrices.
They assume a two-parameter error model (independent bit flipping) for each edge, where
\begin{align*}
  \mathbb{P}(\Ynet_{ij}=1 \vert \Anet_{ij}=0) = \alpha, \quad 
  \mathbb{P}(\Ynet_{ij}=0 \vert \Anet_{ij}=1) &= \beta, \quad \alpha, \beta \in (0,1),
\end{align*}
independently over $i < j$, describing type I and type II edge error rates.
Model identifiability requires at least three iid replicates denoted by $(\Ynet,\Ynet^*,\Ynet^{**})$, where $\Ynet^* \in \{0,1\}^{n \times n}$ and $\Ynet^{**} \in \{0,1\}^{n \times n}$ are independent and have the same distribution as $\Ynet$ conditional on $\Anet$.

The development in \citebody{chang22estimation} assumes the three replicates are noisy observations of the exact same underlying network $\Anet$, although their results will remain valid if instead $\Ynet^*$ is a noisy observation of $\Anet^*$, and $\Ynet^{**}$ is a noisy observation of $\Anet^{**}$, with
\begin{equation} \label{c22_cond}
  \max\left\{ \binom{n}{2}^{-1} \sum_{i < j} \lvert \Anet_{ij} - \Anet^*_{ij} \rvert, \binom{n}{2}^{-1} \sum_{i < j} \lvert \Anet_{ij} - \Anet^{**}_{ij} \rvert \right\} = o\left( \frac{1}{n} \right).
\end{equation}
In this work, we are interested in relaxing this assumption to tap the rich discrete time dynamic network setting.

Assume now that we observe an ordered sequence of noisy network snapshots $\{\Ynetk{k}\}_{k=1}^K$.
Each snapshot represents a network on a common set of $n$ nodes by a binary, symmetric adjacency matrix, with no self loops.
For $k=1,\ldots,K$, suppose that $\Ynetk{k}$ is a noisy observation of an underlying network $\Anetk{k}$ through an {\em error} model,
\begin{align*}
  \mathbb{P}(\Ynetk{k}_{ij}=1 \vert \Anetk{k}_{ij}=0) = \alpha, \quad
  \mathbb{P}(\Ynetk{k}_{ij}=0 \vert \Anetk{k}_{ij}=1) &= \beta, \quad \alpha, \beta \in (0,1),
\end{align*}
independently over $i < j$, and $k=1,\ldots,K$.
Furthermore, assume a stochastic {\em evolution} model on the sequence of underlying networks $\{\Anetk{k}\}_{k=1}^K$:
\begin{align*}
  \mathbb{P}(\Anetk{k+1}_{ij}=1 \vert \Anetk{k}_{ij}=0) = \lambda, \quad
  \mathbb{P}(\Anetk{k+1}_{ij}=0 \vert \Anetk{k}_{ij}=1) &= \mu, \quad \lambda, \mu \in (0,1),
\end{align*}
for $k=1,\ldots,K-1$. 
Under this evolution model, we will have 
$$
  \expect\left\{ \binom{n}{2}^{-1} \sum_{i < j} \lvert \Anetk{k}_{ij} - \Anetk{1}_{ij} \rvert ~\bigg\vert~ \Anetk{1} \right\} \geq \min\{\lambda,\mu\},
$$
for all $k > 1$, and we will not get the required rate in \eqref{c22_cond} for inference to remain valid under methods which assume iid replicates.

In summary, we define a simple but practical model for the joint distribution of the noisy network snapshots $\{\Ynetk{k}\}_{k=1}^K$, treating the initial underlying network $\Anetk{1}$ as a high-dimensional unknown parameter.
Notice that if we treated all underlying networks as unknown parameters, we would be back in the {\em impossibility} setting of \citebody{chang22estimation}, Theorem 1. By replacing an iid (or near-iid) assumption with a simple stochastic evolution model, inference of subgraph densities (as well as both error and evolution rates) again becomes possible. Natural modifications and extensions include treating additional underlying snapshots as model parameters (cf. Section~\ref{sec:comparison}); relaxing the edgewise or temporal independence; or allowing the unknown error and evolution parameter vector $\theta = (\alpha,\beta,\lambda,\mu) \in (0,1)^4$ to depend on $i$, $j$, and/or $k$.

\subsection{Model identifiability} \label{subsec:identifiability}

In Section~\ref{sec:estimation}, we will always be concerned with identifiability of the $5$-dimensional parameter of interest $(\delta_1,\alpha,\beta,\lambda,\mu)$,
where $\delta_1$ is the edge density of $\Anetk{1}$,
$$
  \delta_1 = \binom{n}{2}^{-1} \sum_{i < j} \Anetk{1}_{ij};
$$
the configuration of the individual network edges remain as nuisance parameters.
In Section~\ref{sec:ho}, we may additionally add a higher-order subgraph density as an unknown parameter.
For this reason, each entry of $\Anetk{1}$ may not be fully identified.
For edge density, the parameters will be invariant to arbitrary relabelling of the node pairs.
For higher-order subgraph densities, the parameters will be invariant to arbitrary relabelling of the nodes, and possible restricted relabellings of the edges, although these will depend on the subgraph density of interest, and the network structure.

We require further restrictions on the parameter space to identify the error and evolution rate parameters.
First, note that $\alpha + \beta = 1$ will give a degenerate model in which all observed edges are identically distributed, regardless of their underlying status.
Moreover, as noted by \citebody{zhu23distinguishing}, for any $(\delta_1,\alpha,\beta,\lambda,\mu)$, there is a ``mirrored'' model which produces equivalent marginal probabilities of observed edges with parameter $(1-\delta_1,1-\alpha,1-\beta,\mu,\lambda)$,
which swaps the roles of the underlying edge labels but then reverses the labelling in the error process.
Based on these observations, we restrict $\alpha + \beta < 1$.

A similar phenomenon occurs when $\lambda + \mu = 1$; in this case the evolution process on edges has no memory, and additional snapshots provide no information about the initial status of underlying edges.
When $\lambda + \mu > 1$, edges are more likely to change status than maintain status in consecutive snapshots, leading to alternating behavior.
To preserve convexity of the parameter space, we restrict $\lambda + \mu < 1$.
This will produce models with realistic non-alternating evolution, which can be modelled coherently at different discrete time scales.

In the remainder of the paper, we define our restricted parameter space by
$$
  \Psi = \{(\delta_1,\alpha,\beta,\lambda,\mu) \in (0,1)^5 : \alpha + \beta < 1,~\lambda + \mu < 1\}.
$$
For technical reasons, we may also consider the compact restriction of this parameter space
$$
  \bar{\Psi} = \bar{\Psi}(\xi) = \{(\delta_1,\alpha,\beta,\lambda,\mu) \in [\xi,1-\xi]^5 : \alpha + \beta \leq 1 - \xi,~\lambda+\mu \leq 1 - \xi\}.
$$
for a small constant $\xi > 0$.

In addition to these restrictions on the parameter space, we require a sufficient number of snapshots $K$ to perform inference on the unknown model parameters.
In Section~\ref{sec:estimation}, we find that in order to estimate and do inference on $(\delta_1,\alpha,\beta,\lambda,\mu)$, we require $K \geq 3$.
In Section~\ref{subsec:gmm_comparison}, we find that in order to estimate and do inference under a dynamic comparison model, we require $K \geq 5$.

\section{Generalized method of moments estimation of edge density} \label{sec:estimation}

In this section, we detail a generalized method of moments (GMM) approach to estimation of edge density, and the nuisance parameter $\theta = (\alpha,\beta,\lambda,\mu)$ which governs the error and evolution of the noisy network.
In general, our adoption of moment-based estimation is based on the following observation for so-called {\em edge-averaged} moments.
For a mapping $m: \{0,1\}^K \rightarrow \realp$, define the corresponding edge-averaged empirical moment
$$
  \hat{\bm{m}} = \binom{n}{2}^{-1} \sum_{i < j} m(\Yvec_{ij}),
$$
where in general we use the notation $\vec{\cdot}$ to denote the length $K$-sequence observed over time, e.g. $\Yvec_{ij} = (\Ynetk{1}_{ij},\ldots,\Ynetk{K}_{ij})^{\tp} \in \{0,1\}^K$.
Also denote the edge-averaged expectation
\begin{equation*} \label{edge_averaged}
  \bm{m}(\delta_1,\theta) = \expect_{\theta}( \hat{\bm{m}} ~\vert~ \Anetk{1})
  = \delta_1 \bm{m}_1(\theta) + (1 - \delta_1) \bm{m}_0(\theta),
\end{equation*}
where
$$
  \bm{m}_s(\theta) = \expect_{\theta} \left\{ m\left( \Yvec_{12} \right) ~\big\vert~ \Anetk{1}_{12}=s \right\} \in \realp,
$$
and the node pair $(1,2)$ is chosen without loss of generality.
Moment-based estimation will seek estimators $\hat{\delta}_1$ and $\hat{\theta}$ such that $\hat{\bm{m}} \approx \bm{m}(\hat{\delta}_1,\hat{\theta})$.
For efficient evaluation without estimating the individual entries of $\Anetk{1}$, note that although $\Anetk{1}$ is an $O(n^2)$-dimensional unknown parameter, $\bm{m}(\theta,\delta_1)$ depends on it only through $\delta_1$, for any choice of moments, and can be evaluated based on the joint probabilities from a finite state space hidden Markov model \citepbody{jaeger00observable}.
 
Following the classical GMM approach to parametric estimation \citepbody{newey94large,hall03generalized}, we define the $5$-dimensional estimator $(\hat{\delta}_1,\hat{\theta})$ as the minimizer of
\begin{equation} \label{gmm_objective}
  f(\delta_1,\theta) = \left\{ \hat{\bm{m}} - \bm{m}(\delta_1,\theta) \right\}^{\tp} \bm{W} \left\{ \hat{\bm{m}} - \bm{m}(\delta_1,\theta) \right\} 
\end{equation}
where $\bm{W}$ is a positive-definite weight matrix.
Note that
\begin{align*} \label{cov_edge_averaged}
  \cov_{\theta}( \hat{\bm{m}} ~\vert~ \Anetk{1}) 
  &= \binom{n}{2}^{-1} \left\{ \delta_1 \bm{\Sigma}_1(\theta) + (1 - \delta_1) \bm{\Sigma}_0(\theta) \right\},
\end{align*}
where
\begin{equation} \label{sigma_s}
  \bm{\Sigma}_s(\theta) = \cov_{\theta} \left\{ m\left( \Yvec_{12} \right) ~\big\vert~ \Anetk{1}_{12}=s \right\} \in \real^{p \times p},
\end{equation}
for $s \in \{0,1\}$, facilitating easy evaluation of covariance for arbitararily large $n$, again without estimating the individual entries of $\Anetk{1}$.
To implement GMM estimation, all that remains is to specify $m$.

\subsection{Estimation of edge density}
\label{subsec:theta_unknown}

In order to estimate $\delta_1$ and $\theta$ jointly, the univariate components of $m$ will come from one of two types.
First the {\em local densities}, defined as $D_k(y_1,\ldots,y_K) = y_k$ 
for $k=1,\ldots,K$, and second {\em time-averaged triples}
$$
  T_{e_0e_1e_2}(y_1,\ldots,y_K) = \sum_{k=3}^K \mathbb{I}\left( y_k = e_0, y_{k-1} = e_1, y_{k-2} = e_2 \right)
$$
for $(e_0,e_1,e_2) \in \{0,1\}^3$, which are well-defined for $K \geq 3$.
The three binary indices $e_0$, $e_1$, and $e_2$ encode the value of an edge at lag $0$, $1$, and $2$ respectively; we require at least two lags to identify all the model parameters.
The frequencies of these edge subsequences provide information to disentangle the error and evolution parameters in the noisy dynamic model.
Similar relative frequencies of pairs and triples have been used to estimate the transition and error distributions of general finite-state hidden Markov models \citepbody{hsu12spectral}.

Our final estimation routine is based on a subset of these local densities and time-averaged triples, with some dropped due to linear dependencies:
$$
  m^* = (D_1,\ldots,D_{K-3},T_{000},T_{001},T_{010},T_{100},T_{011},T_{101},T_{110}) : \{0,1\}^{K} \rightarrow \real^{K+4}.
$$
The empirical edge-averaged moments in $m^*$ and their population counterparts are well-defined for $K \geq 3$.
Thus, even though we introduce two additional evolution parameters relative to the exact replicate model in \citebody{chang22estimation}, we still require only $3$ (or more) snapshots to estimate the unknown parameters.

In order to estimate the unknown parameters efficiently, we implement a modification of a usual iterative approch to adaptively select an efficient $\bm{W}$ in \eqref{gmm_objective}.
To initialize, we first evaluate
$$
  (\hat{\delta}_1^{(\mathrm{init})},\hat{\theta}^{(\mathrm{init})}) = \operatorname{argmin}_{(\delta_1,\theta) \in \Psi} \left\{ \left\lVert \hat{\bm{m}}^{(\mathrm{init})} -\bm{m}^{(\mathrm{init})}(\delta_1,\theta) \right\rVert_2^2 \right\}
$$
where $\hat{\bm{m}}^{(\mathrm{init})}$ and $\bm{m}^{(\mathrm{init})}(\delta_1,\theta)$ are the empirical and theoretical values for the edge-averaged moments
$$
  m^{(\mathrm{init})} = (D_1,T_{000},T_{001},T_{010},T_{100},T_{011},T_{101},T_{110}) : \{0,1\}^{K} \rightarrow \real^8.
$$
for $K \geq 4$. If $K=3$ we drop $D_1$ to avoid linear dependence in the moments.
We then use the initial estimate to choose an efficient weighting matrix in \eqref{gmm_objective} using the edge-averaged moments $m^*$.
In particular, we set
\begin{equation} \label{W_adaptive}
  \widehat{\bm{W}}^{-1} = \hat{\delta}_1^{(\mathrm{init})} \bm{\Sigma}^*_1(\hat{\theta}^{(\mathrm{init})}) + \left( 1 - \hat{\delta}_1^{(\mathrm{init})} \right) \bm{\Sigma}^*_0(\hat{\theta}^{(\mathrm{init})}),
\end{equation}
where $\bm{\Sigma}^*_s(\theta)$ is the covariance matrix of $m^*(\Yvec_{ij})$ conditional on $\Anetk{1}_{ij}=s$.
More iterations can be used, but one is sufficient to prove asymptotic efficiency.

Implementation of this estimation procedure requires evaluation of
$$
  \bm{m}^{(\mathrm{init})}(\delta_1,\theta) \in \real^8, \quad \bm{m}^*(\delta_1,\theta) \in \real^{K+4}, \quad \bm{\Sigma}^*_s(\theta) \in \real^{(K+4) \times (K+4)}
$$
for arbitrary $\delta_1$, $\theta$ and $s=0,1$. Additional details are provided in  
\ifarxiv
Appendix~\ref{app:gmm_computation}.
\else
Section~\ref{app:gmm_computation} of the supplementary materials.
\fi

Towards proving consistency and asymptotic normality of our GMM estimators, we assume that both $m^*$ and $m^{(\mathrm{init})}$ contain enough information to identify all the model parameters.
\begin{assumption} \label{assump:global_identification}
The mapping $\bm{m}^{(\mathrm{init})}(\delta_1,\theta)$ is an injective function on $\Psi$.
\end{assumption}
Since the coordinate functions in $m^{(\mathrm{init})}$ are also contained in $m^*$, Assumption~\ref{assump:global_identification} implies that $\bm{m}^*(\delta_1,\theta)$ is also injective.
It also implies that the unknown parameters $(\delta_1,\theta)$ are generally identifiable.
In order to prove asymptotic normality and asymptotic efficiency of our GMM estimator, we require three more technical assumptions.
In this setting, we work in the compact parameter space $\bar{\Psi} = \bar{\Psi}(\xi)$ defined in Section~\ref{subsec:identifiability}, for a small constant $\xi > 0$.

\begin{assumption} \label{assump:delta_seq_compact}
  The true parameters satisfy $(\delta_1(n),\alpha,\beta,\lambda,\mu) \in \bar{\Psi}$ for all $n$, and the sequence converges, that is $\delta_1(n) \rightarrow \delta_1(\infty) \in (0,1)$.
\end{assumption}
Throughout, we will explicitly note when the limiting value $\delta_1(\infty)$ is referred to.
When $\delta_1$ is written, it should be interpreted as $\delta_1(n)$ for the appropriate choice of $n$.

\begin{assumption} \label{assump:local_identification_body}
  The derivative matrix
  $$
    \bm{Dm}^*(\delta_1,\theta) = \begin{pmatrix}
      \pd{\bm{m}^*}{\delta_1} & \pd{\bm{m}^*}{\theta}
  \end{pmatrix} \in \real^{(K+4) \times 5}
  $$
  has full column rank for all $(\delta_1,\theta) \in \bar{\Psi}$.
\end{assumption}

\begin{assumption} \label{assump:cov_spectrum}
  The covariance matrix $\bm{\Sigma}^*(\delta_1,\theta) = \delta_1 \bm{\Sigma}_1^*(\theta) + (1-\delta_1) \bm{\Sigma}_0^*(\theta)$
  satisfies
  $
    0 < c \leq \lambda_{\min}\left\{ \bm{\Sigma}^*(\delta_1,\theta) \right\} \leq \lambda_{\max}\left\{ \bm{\Sigma}^*(\delta_1,\theta) \right\} \leq C < \infty
  $
  uniformly over all $(\delta_1,\theta) \in \bar{\Psi}$ for constants $c$ and $C$ (which may depend on $\xi$).
\end{assumption}

Assumption~\ref{assump:local_identification_body} assumes that the parameters are locally identified, and thus that the estimating equations have non-degenerate covariance, and Assumption~\ref{assump:cov_spectrum} controls the conditioning of the adaptive weighting matrices $\widehat{\bm{W}}$ as $n \rightarrow \infty$.

\begin{remark} \label{rem:identifiability}
  Although we do not show analytically that Assumptions~\ref{assump:global_identification}, \ref{assump:local_identification_body}, and \ref{assump:cov_spectrum} hold for our model, we can verify these assumptions numerically for $K \geq 3$.
  For Assumptions~\ref{assump:global_identification} and \ref{assump:local_identification_body}, we verify that $\bm{Dm}^{(\mathrm{init})}(\delta_1,\theta)^{\tp} \bm{Dm}^{(\mathrm{init})}(\delta_1,\theta)$
  and $\bm{Dm}^*(\delta_1,\theta)^{\tp} \bm{Dm}^*(\delta_1,\theta)$ are invertible,
  except at or very near the boundary of the full parameter space $\Psi$.
  Note that under regularity conditions which hold for our noisy dynamic network model, local identification is a sufficient condition for global identification.
  Hence it is reasonable to assume the assumptions hold on the compact restricted parameter space $\bar{\Psi}(\xi)$.
  A similar numerical verification is used to check Assumption~\ref{assump:cov_spectrum}: $\bm{\Sigma}^*(\delta_1,\theta)$ is invertible away from the boundary of $\Psi$ for $K \geq 3$.
\end{remark}

We are now ready to state the main result of this section, the asymptotic normality and efficiency of our adaptive GMM estimator.

\begin{proposition} \label{prop:gmm_adaptive_clt}
  Suppose Assumptions~\ref{assump:global_identification}---\ref{assump:cov_spectrum} hold.
  Define estimators
  $$
    (\hat{\delta}_1,\hat{\theta}) = \operatorname{argmin}_{(\delta_1,\theta) \in \bar{\Psi}} \left\{ \hat{\bm{m}}^* - \bm{m}^*(\delta_1,\theta) \right\}^{\tp} \widehat{\bm{W}} \left\{ \hat{\bm{m}}^* - \bm{m}^*(\delta_1,\theta) \right\},
  $$
  where $\widehat{\bm{W}}$ is defined in \eqref{W_adaptive}.
  Then
  $$
    \binom{n}{2}^{1/2} \left\{ \begin{pmatrix}
      \hat{\delta}_1 \\ \hat{\theta}
    \end{pmatrix} - \begin{pmatrix}
      \delta_1 \\ \theta
    \end{pmatrix} \right\} \indist \mathcal{N}\left(\bm{0} , \bm{\Sigma}^{(\mathrm{GMM})}(\delta_1(\infty),\theta) \right),
  $$
  where
  $$
    \bm{\Sigma}^{(\mathrm{GMM})}(\delta_1(\infty),\theta) = \left\{ \bm{Dm}^*(\delta_1(\infty),\theta)^{\tp} \left\{ \bm{\Sigma}^*(\delta_1(\infty),\theta) \right\}^{-1} \bm{Dm}^*(\delta_1(\infty),\theta) \right\}^{-1}.
  $$
\end{proposition}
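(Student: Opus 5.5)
The argument follows the classical two-step efficient GMM template \citepbody{newey94large}, adapted to a triangular array: $\Anetk{1}=\Anetk{1}(n)$ is a deterministic sequence and $\delta_1(n)$ drifts. Conditional on $\Anetk{1}$, the vectors $m(\Yvec_{ij})$, $i<j$, are independent but not identically distributed. Each is drawn from one of two laws, indexed by $s=\Anetk{1}_{ij}\in\{0,1\}$, and is bounded, since $m^*$ and $m^{(\mathrm{init})}$ take values in a finite set.

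\textbf{Step 1: LLN and CLT for the moments.} Applying Chebyshev's inequality with the covariance formula for the edge-averaged moments gives
\[
\hat{\bm{m}}^*-\bm{m}^*(\delta_1(n),\theta)=O_P\bigl(\tbinom{n}{2}^{-1/2}\bigr),
\]
and the same holds for $m^{(\mathrm{init})}$. A Lindeberg--Feller (or Lyapunov) CLT for bounded summands then gives
\[
\tbinom{n}{2}^{1/2}\{\hat{\bm{m}}^*-\bm{m}^*(\delta_1(n),\theta)\}\indist\mathcal{N}\bigl(0,\bm{\Sigma}^*(\delta_1(\infty),\theta)\bigr).
\]
This works because the covariance equals $\delta_1(n)\bm{\Sigma}_1^*+(1-\delta_1(n))\bm{\Sigma}_0^*$, which converges by Assumption~\ref{assump:delta_seq_compact}, and its limit is nondegenerate by Assumption~\ref{assump:cov_spectrum}.

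\textbf{Step 2: consistency.} Both the initial estimator and the final estimator should be consistent. The maps $\bm{m}(\cdot)$ are polynomial in $(\delta_1,\theta)$, since they are hidden Markov model joint probabilities, so they are smooth on $\bar\Psi$. The sample objective therefore converges uniformly on compacts to $\lVert\bm{m}(\delta_1(n),\theta)-\bm{m}(\cdot)\rVert^2$ (weighted, for the final estimator). Assumption~\ref{assump:global_identification} gives injectivity, and compactness upgrades it to a well-separated minimum: for any $\epsilon>0$, $\inf\{\lVert\bm{m}(x)-\bm{m}(x')\rVert: x,x'\in\bar\Psi,\lVert x-x'\rVert\ge\epsilon\}>0$. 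Uniformity over the drifting truth is the one point that needs care; it is handled by this uniform separation, which holds because the truth stays in $\bar\Psi$.

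The initial estimator minimizes over $\Psi$, not $\bar\Psi$. I will show that with probability tending to one it lies in a neighborhood of the truth, using injectivity together with continuity up to a slightly larger compact set. Alternatively one can simply note that consistency of $\widehat{\bm W}$ is all that is needed downstream. In either case, continuity of $\bm{\Sigma}_s^*(\theta)$ and Assumption~\ref{assump:cov_spectrum} give
\[
\widehat{\bm W}\inprob\bm{\Sigma}^*(\delta_1(\infty),\theta)^{-1},
\]
with eigenvalues bounded away from $0$ and $\infty$. The final estimator is then consistent by the same argument.

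\textbf{Step 3: asymptotic normality.} The truth lies in $\bar\Psi$, and possibly on its boundary. Here I would either assume it is interior (in which case $\xi$ can be shrunk) or observe that $\bar\Psi(\xi')\supset\bar\Psi(\xi)$ in its interior for $\xi'<\xi$. Either way, the first-order condition holds with probability tending to one:
\[
\bm{Dm}^*(\hat x)^{\tp}\widehat{\bm W}\{\hat{\bm m}^*-\bm m^*(\hat x)\}=0 .
\]
Take a mean-value expansion $\bm m^*(\hat x)=\bm m^*(x_n)+\overline{\bm{Dm}}\,(\hat x-x_n)$, applied row by row. Consistency and continuity give $\bm{Dm}^*(\hat x),\overline{\bm{Dm}}\inprob\bm{Dm}^*(\delta_1(\infty),\theta)=:\bm G$. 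Assumption~\ref{assump:local_identification_body} makes $\bm G^{\tp}\bm W\bm G$ invertible, so
\[
\tbinom{n}{2}^{1/2}(\hat x-x_n)=(\bm G^{\tp}\bm W\bm G)^{-1}\bm G^{\tp}\bm W\,\tbinom{n}{2}^{1/2}\{\hat{\bm m}^*-\bm m^*(x_n)\}+o_P(1).
\]
Combining this with Step 1 via Slutsky's lemma, and using $\bm W=\bm\Sigma^{*-1}$, the sandwich covariance collapses to $(\bm G^{\tp}\bm\Sigma^{*-1}\bm G)^{-1}=\bm\Sigma^{(\mathrm{GMM})}$.

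The main obstacle is Step 2: establishing uniform consistency despite the drifting $\delta_1(n)$, and handling the initial estimator defined over the noncompact set $\Psi$. Everything else is a standard computation.
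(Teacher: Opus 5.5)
Your proposal is correct and follows the paper's proof in all essentials. The chain is the same: consistency of the initial estimator, consistency of $\widehat{\bm W}$ with eigenvalues controlled through Assumption~\ref{assump:cov_spectrum}, a mean-value expansion of the first-order condition, and collapse of the sandwich to $\bm\Sigma^{(\mathrm{GMM})}$ under the efficient weight. Your Lindeberg--Feller CLT, separation argument around the drifting truth, and expansion of $\bm m^*$ itself (the paper expands the objective's gradient, which requires second derivatives) are only minor technical variations.

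The caveats you raise about the open set $\Psi$ and a truth on $\partial\bar\Psi$ are genuine, and the paper glosses over them by running consistency on the compact $\bar\Psi$ and taking the first-order condition for granted. However, injectivity on the open set $\Psi$ plus continuity does not by itself keep the initial minimizer away from $\partial\Psi$. Your ``alternatively'' is also circular, since consistency of $\widehat{\bm W}$ comes from consistency of the initial estimator. The clean fix is to minimize the initial objective over $\bar\Psi$ as well.
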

This estimator achieves the smallest possible asymptotic covariance among GMM estimators using the moments in $m^*$.

\begin{remark} \label{rem:thetaknown}

After estimating $\delta_1$ and $\theta$ using the GMM approach of Section~\ref{subsec:theta_unknown}, the methodology described in
\ifarxiv
Appendix~\ref{subsec:theta_known}
\else
Section~\ref{subsec:theta_known} of the supplementary materials 
\fi
(assuming known $\theta$) can be used to find an alternate estimator of the form
$\bm{w}_*(\hat{\delta}_1,\hat{\theta})^{\tp} \hat{\bm{\delta}}_{1}$, where
$$
    \hat{\bm{\delta}}_{1} = \begin{pmatrix}
        \hat{\delta}_1^{(1)} & \cdots & \hat{\delta}_1^{(K)} 
    \end{pmatrix}^{\tp}, \quad \hat{\delta}_1^{(k)} = \binom{n}{2}^{-1} \sum_{i < j} \frac{\Ynetk{k}_{ij} - \expect_{\hat\theta}(\Ynetk{k}_{ij} ~\vert~ \Anetk{1}_{ij}=0)}{\expect_{\hat\theta}(\Ynetk{k}_{ij} ~\vert~ \Anetk{1}_{ij}=1) - \expect_{\hat\theta}(\Ynetk{k}_{ij} ~\vert~ \Anetk{1}_{ij}=0)}
$$
is a vector of snapshot-specific adjusted edge densities, and $\bm{w}_*(\hat{\delta}_1,\hat{\theta}) \in \real^K$ is a non-negative weight vector which sums to $1$, chosen to minimize the estimator's asymptotic variance.
We find that this alternate estimator is empirically nearly identical to the one found using two-stage GMM. Moreover $\bm{w}_*(\hat{\delta}_1,\hat{\theta})$ is practically useful to interpret the contribution of each network snapshot to the final estimate of $\delta_1$, as in Section~\ref{sec:simulation}.
\end{remark}

\begin{remark} \label{rem:mle}
The likelihood function for $\Anetk{1}$, assuming known $\theta$, is given by 
\begin{equation*}
  L(\Anetk{1} ; \Ynetk{1}, \ldots, \Ynetk{K}, \theta) = \prod_{i < j} \mathbb{P}_{\theta,\Anetk{1}_{ij}}\left( \Ynetk{1}_{ij}, \ldots , \Ynetk{K}_{ij} \right).
\end{equation*}
Thus, for each node pair $(i,j)$, the maximum likelihood estimator (MLE) of $\Anetk{1}_{ij}$ is given by
\begin{equation}
  \hat{\bm{A}}^{(1)}_{ij} = \operatorname{argmax}_{s \in \{0,1\}} \mathbb{P}_{\theta,s}\left( \Ynetk{1}_{ij}, \ldots , \Ynetk{K}_{ij} \right),
\end{equation}
and by MLE invariance,
\begin{equation}
  \hat{\delta}^{(\mathrm{MLE})}_1 = \binom{n}{2} \sum_{i < j} \hat{\bm{A}}^{(1)}_{ij}.
\end{equation}

The bias of the MLE can be calulated exactly based on the conditional distributions of $\{0,1\}$ sequences as a function of $K$, $\delta_1$ and $\theta$.
Define
$$
  \mathcal{A} = \{ (y_1,\ldots,y_K) \in \{0,1\}^K : \mathbb{P}_{\theta,1}( y_1,\ldots,y_K) > \mathbb{P}_{\theta,0}( y_1,\ldots,y_K) \},
$$
the set of binary length-$K$ edge sequences for which the most likely initial state is $1$.
Then
\begin{equation} \label{mle_bias}
    \expect\left( \hat{\delta}^{(\mathrm{MLE})}_1 \right) = \delta_1 \prob_{\theta,1}(\mathcal{A}) + (1 - \delta_1) \prob_{\theta,0}(\mathcal{A}).
\end{equation}
Note that \eqref{mle_bias} is constant in $n$, and implies that the MLE will only be unbiased (even asymptotically) if $\prob_{\theta,1}(\mathcal{A})=1$ and $\prob_{\theta,0}(\mathcal{A})=0$.
That is, only if it is possible to perfectly classify the initial state of each noisy edge sequence.
\end{remark}

\section{Higher-order subgraph density estimation} \label{sec:ho}

In this section, we introduce new notation to describe network summaries based on arbitrary subgraph densities.
For clarity of presentation we assume that the error and evolution parameters in $\theta$ are known.
In practical applications, we replace the true parameter $\theta$ by the GMM estimator $\hat{\theta}$ derived in Section~\ref{subsec:theta_unknown}; all technical proofs and additional details of inference are developed in 
\ifarxiv
Appendix~\ref{app:ho_theta_unknown}.
\else
Section~\ref{app:ho_theta_unknown} of the supplementary materials.
\fi

We wish to estimate the density of an arbitrary subgraph $H$ of fixed order $\lvert V_H \rvert \geq 2$.
Let $\mathcal{V}$ denote the set of node pairs that can map onto the prespecified edges of $V_H$, and $\tau_1,\ldots,\tau_L \in \{0,1\}$ denote the prescribed values of those edges.
Note that a given $\bm{v} \in \mathcal{V}$ will contain $L$ distinct node pairs, $\bm{v}=(v_1,\ldots,v_L)$, and although some individual node indices may overlap, the pairs will be distinct.

The subgraph $H$ density in the first underlying network can be written as
\begin{equation} \label{ho_density}
  C_H = C_H(\bm{A}^{(1)}) = \frac{1}{\lvert \mathcal{V} \rvert} \sum_{\bm{v} \in \mathcal{V}} \prod_{\ell=1}^L (\bm{A}^{(1)}_{v_{\ell}})^{\tau_{\ell}}(1 - \bm{A}_{v_{\ell}}^{(1)})^{1 - \tau_{\ell}}.
\end{equation}
Note that $v_{\ell}$ is a node pair, so we use the subscript $v_{\ell}$ to index the edge variable which connects said pair.
Define $x_k(\theta)$ and $y_k(\theta)$ by
\[
  \mathbb{E}_{\theta}\left( \bm{Y}_{ij}^{(k)} ~\vert~ \Anetk{1} \right) = \begin{cases}
    x_k(\theta), \quad &\bm{A}_{ij}^{(1)} = 0, \\
    y_k(\theta), \quad &\bm{A}_{ij}^{(1)} = 1.
\end{cases}
\]
Rearranging, we have
\[
  \bm{A}_{ij}^{(1)} = \frac{\mathbb{E}_{\theta}\left( \bm{Y}_{ij}^{(k)} \right) - x_k(\theta)}{y_k(\theta) - x_k(\theta)}.
\]
Thus \eqref{ho_density} can be rewritten as
\begin{equation} \label{ho_density_expectation}
  C_H =  \frac{1}{\lvert \mathcal{V} \rvert} \cdot \frac{1}{\{y_k(\theta) - x_k(\theta)\}^L} \cdot \sum_{\bm{v} \in \mathcal{V}} \prod_{\ell=1}^L \mathbb{E}_{\theta} \left\{ \phi^{(k)}_{\ell,\theta}(\bm{Y}^{(k)}_{v_{\ell}}) \right\},
\end{equation}
where
\[
  \phi^{(k)}_{\ell,\theta}(z) = \{z - x_k(\theta)\}^{\tau_{\ell}}\{y_k(\theta) - z\}^{1-\tau_{\ell}}.
\]
Note that \eqref{ho_density_expectation} provides $K$ different representations of $C_H$, each depending on adjusted edge expectations of snapshot $k$ for $k=1,\ldots,K$.

Define a family of unbiased estimators of $C_H$,
\begin{equation} \label{ho_density_estk}
  \widetilde{C}_H^{(k)} = \frac{1}{\lvert \mathcal{V} \rvert} \cdot \frac{1}{\{y_k(\theta) - x_k(\theta)\}^L} \cdot \sum_{\bm{v} \in \mathcal{V}} \prod_{\ell=1}^L \phi^{(k)}_{\ell,\theta}(\bm{Y}^{(k)}_{v_{\ell}})
\end{equation}
for $k=1,\ldots,K$.
Define
\[
  \widetilde{\bm{C}}_H = (\widetilde{C}_H^{(1)}, \ldots , \widetilde{C}_H^{(K)})^{\tp}.
\]
Our final estimator of $C_H$ will be $\bm{w}^{\tp} \widetilde{\bm{C}}_H$ for an adaptive weight vector $\bm{w} \in \real^K$ satisfying $\bm{w}^{\tp} \bm{1}_K = 1$.
Following \citebody{lavancier16general}, the efficient weights are given by
\begin{equation} \label{optwt}
  \bm{w}_*(\theta) \propto \operatorname{Cov}_{\theta}^{-1}(\widetilde{\bm{C}}_H)\bm{1}_K,
\end{equation}
normalized to sum to $1$.
Towards asymptotically valid and efficient inference, we generalize Propositions 2 and 3 in \citebody{chang22estimation} to our dynamic setting as Corollaries~\ref{cor:ho_order} and~\ref{cor:ho_S_approx}, respectively.
Their precise statements are given in 
\ifarxiv
Appendix~\ref{app:ho_theory},
\else
Section~\ref{app:ho_theory} of the supplementary materials, 
\fi
and their proofs are analogous to those in \citebody{chang22estimation}.

Corollary~\ref{cor:ho_S_approx} relates the asymptotic behavior of each $\widetilde{C}_H^{(k)}$ to that of a new centered and linearized quantity
\begin{equation} \label{SH_linearized}
  S_H^{(k)} = \binom{n}{2}^{1/2} \cdot \frac{1}{\lvert \mathcal{V} \rvert} \cdot \frac{1}{\{y_k(\theta) - x_k(\theta)\}^L} \cdot \sum_{j=1}^L (-1)^{1-\tau_j} \sum_{\bm{v} \in \mathcal{V}} \left\{\bm{Y}^{(k)}_{v_j} - \mathbb{E}_\theta \bm{Y}^{(k)}_{v_j} \right\} \prod_{\ell \neq j} \mathbb{E}_{\theta} \left\{ \phi_{\ell,\theta}(\bm{Y}^{(k)}_{v_{\ell}}) \right\}.
\end{equation}
Defining
\[
  \bm{S}_H = (S_H^{(1)}, \ldots , S_H^{(K)})^{\tp},
\]
we can write
\[
  \bm{S}_H = \binom{n}{2}^{-1/2} \sum_{i < j} \bm{M}_{ij} \left( \vec{\bm{Y}}_{ij} - \mathbb{E}_{\theta} \vec{\bm{Y}}_{ij} \right)
\]
for a collection of diagonal, non-random $K \times K$ matrices $\bm{M}_{ij}$.
Moreover, $\bm{S}_H$ satisfies 
\begin{equation} \label{ho_S_approx}
  \binom{n}{2}^{1/2} (\widetilde{\bm{C}}_H - C_H\bm{1}_K) = \bm{S}_H + o_{\mathbb{P}}(1).
\end{equation}
Thus, the asymptotic behavior of $\widetilde{\bm{C}}_H$ can be understood through the linearized and centered $\bm{S}_H$, which has asymptotic covariance
\[
  \lim_{n \rightarrow \infty} \binom{n}{2}^{-1} \sum_{i < j} \bm{M}_{ij}\operatorname{Cov}(  \vec{\bm{Y}}_{ij}) \bm{M}_{ij}.
\]

To estimate the covariance matrix of $\bm{S}_H$, we develop a bootstrap sampler of edge sequences which exactly captures the conditional covariance structure: That is, $\vec{\bm{Y}}_{ij}^{\dagger}$ is sampled conditional on $\vec{\bm{Y}}_{ij} \in \{0,1\}^K$ such that
\begin{equation} \label{ho_bootcov}
  \mathbb{E}_{\theta}\{\operatorname{Cov}(\vec{\bm{Y}}_{ij}^{\dagger} \vert \vec{\bm{Y}}_{ij})\} = \operatorname{Cov}_{\theta}(\vec{\bm{Y}}_{ij}) \in \mathbb{R}^{K \times K}.
\end{equation}
The left-hand side $\operatorname{Cov}_{\theta}(\vec{\bm{Y}}_{ij})$ is one of two matrices (depending on the status of $\Anetk{1}_{ij}$), which can be calculated given the unknown parameters in $\theta$.
The following result shows that it is sufficient to define $\mathcal{B}(\vec{\bm{Y}}_{ij})$ through a mixture of two normal distributions, depending on the status of $\Ynetk{1}_{ij}$.

\begin{proposition} \label{prop:boot_sampler}
    Without loss of generality, define $\bm{\Sigma}^{(s)} = \operatorname{Cov}(\Yvec_{12} ~\vert~ A_{12}=s)$ for $s=0,1$, and
\begin{equation*}
  \bm{\Sigma}^{\dagger}_{0} = \frac{(1-\beta)\bm{\Sigma}^{(0)} - \alpha \bm{\Sigma}^{(1)}}{1 - \alpha - \beta}, \quad
  \bm{\Sigma}^{\dagger}_{1} = \frac{(1-\alpha)\bm{\Sigma}^{(1)} - \beta \bm{\Sigma}^{(0)}}{1 - \alpha - \beta}.
\end{equation*}
\[
  \Yvec^{\dagger}_{ij} ~\vert~ \Yvec_{ij} \sim \mathcal{N} \left( \bm{0}_K,\bm{\Sigma}^{\dagger}_{\Ynetk{1}_{ij}} \right),
\]
independently for $i < j$.
Then $\{ \vec{\bm{Y}}_{ij}^{\dagger} \}_{i < j}$ satisfies the covariance constraint \eqref{ho_bootcov}.
\end{proposition}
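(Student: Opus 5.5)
We need $\mathbb{E}_\theta\{\operatorname{Cov}(\Yvec^\dagger_{ij}\mid\Yvec_{ij})\} = \operatorname{Cov}_\theta(\Yvec_{ij})$ for each node pair, where the right-hand side is $\bm{\Sigma}^{(s)}$ with $s=\Anetk{1}_{ij}$. The plan is a direct conditioning calculation, carried out separately for each value of $s$.

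\textbf{Step 1: reduce to a probability-weighted average.} Given $\Yvec_{ij}$, the bootstrap draw is a centered normal with covariance $\bm{\Sigma}^\dagger_{\Ynetk{1}_{ij}}$. Hence
\[
\operatorname{Cov}(\Yvec^\dagger_{ij}\mid\Yvec_{ij}) = \bm{\Sigma}^\dagger_{\Ynetk{1}_{ij}},
\]
which depends on the data only through the single bit $\Ynetk{1}_{ij}$. Taking the expectation over $\Yvec_{ij}$ with $\Anetk{1}_{ij}=s$ fixed gives
\[
\mathbb{E}_\theta\{\operatorname{Cov}(\Yvec^\dagger_{ij}\mid\Yvec_{ij})\}
= \prob(\Ynetk{1}_{ij}=0\mid \Anetk{1}_{ij}=s)\,\bm{\Sigma}^\dagger_0 + \prob(\Ynetk{1}_{ij}=1\mid \Anetk{1}_{ij}=s)\,\bm{\Sigma}^\dagger_1 .
\]

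\textbf{Step 2: insert the error probabilities.} The error model gives these probabilities directly. For $s=0$ the weights are $(1-\alpha,\alpha)$, and for $s=1$ they are $(\beta,1-\beta)$.

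\textbf{Step 3: solve the linear system.} Requiring the averages in Step 1 to equal $\bm{\Sigma}^{(0)}$ and $\bm{\Sigma}^{(1)}$ gives
\begin{align*}
(1-\alpha)\bm{\Sigma}^\dagger_0 + \alpha\bm{\Sigma}^\dagger_1 &= \bm{\Sigma}^{(0)},\\
\beta\bm{\Sigma}^\dagger_0 + (1-\beta)\bm{\Sigma}^\dagger_1 &= \bm{\Sigma}^{(1)}.
\end{align*}
The coefficient matrix has determinant $(1-\alpha)(1-\beta)-\alpha\beta = 1-\alpha-\beta$, which is strictly positive on $\Psi$, so the system has a unique solution. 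Cramer's rule yields $\bm{\Sigma}^\dagger_0 = \{(1-\beta)\bm{\Sigma}^{(0)}-\alpha\bm{\Sigma}^{(1)}\}/(1-\alpha-\beta)$ and the stated formula for $\bm{\Sigma}^\dagger_1$. For the write-up, I would instead substitute the stated formulas into both equations and simplify. For example, $(1-\alpha)(1-\beta)-\alpha\beta = 1-\alpha-\beta$ is exactly the factor needed to return $\bm{\Sigma}^{(0)}$.

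\textbf{Step 4: independence across pairs.} The constraint \eqref{ho_bootcov} is stated edgewise. Since the draws are independent over $i<j$, nothing beyond the per-pair identity is needed.

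\textbf{Main obstacle: well-definedness.} $\mathcal{N}(\bm{0}_K,\bm{\Sigma}^\dagger_s)$ requires $\bm{\Sigma}^\dagger_s$ to be positive semidefinite. It is a difference of covariance matrices, so this is not automatic. The paper only needs the moment identity, so I would handle this in one of two ways:
\begin{itemize}
\item interpret the normal as a formal device that matches second moments, where a signed ``covariance'' suffices for \eqref{ho_bootcov}; or
\item try to show positive semidefiniteness directly. Conditioning on $\Ynetk{1}$ reduces the first coordinate's variance to zero, and the remaining block should be a genuine conditional-covariance-type object. The natural candidate for $\bm{\Sigma}^\dagger_s$ is a law-of-total-variance decomposition of $\operatorname{Cov}(\Yvec\mid \Anetk{1}=s)$ through $\Ynetk{1}$.
\end{itemize}
I would first attempt the second route, but fall back on noting that the stated claim concerns only \eqref{ho_bootcov}, which the calculation in Steps 1--3 establishes.
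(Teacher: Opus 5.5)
Your Steps 1--3 are exactly the paper's proof. The bootstrap covariance depends on the data only through $\Ynetk{1}_{ij}$, so the constraint reduces to the $2\times 2$ system with coefficient matrix $\begin{pmatrix}1-\alpha & \alpha\\ \beta & 1-\beta\end{pmatrix}$, which is invertible when $\alpha+\beta<1$, and whose solution is the stated pair $\bm{\Sigma}^\dagger_0,\bm{\Sigma}^\dagger_1$. The paper never addresses positive semidefiniteness, and your second route cannot work in general: $\bm{\Sigma}^\dagger_s$ is not a conditional covariance given $\Ynetk{1}$ (its $(1,1)$ entries are $\alpha(1-\beta)$ and $\beta(1-\alpha)$, not zero), and it can be indefinite --- for $\alpha=\beta=0.1$, $\mu=0.1$, $K=11$ and $\lambda$ small, the vector $(0,1,\ldots,1)^{\tp}$ gives $\bm{\Sigma}^\dagger_0$ a quadratic form near $-0.24$ --- so only your first reading, a formal second-moment device in the spirit of Remark~\ref{rem:mvn}, is tenable.
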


\begin{remark} \label{rem:mvn}
    Although the original edge sequence satisfies $\Yvec_{ij} \in \{0,1\}^K$, for convenience of implementation we draw bootstrap replicates which are multivariate Gaussian with mean zero.
    For the purposes of Propositions~\ref{thm:ho_bootdist} and \ref{prop:adaptive_distn}, this difference in support does not matter as long as \eqref{ho_bootcov} is satisfied.
    In principle, one could instead compute bootstrap distributions supported on $\{0,1\}^K$, similar to \citebody{chang22estimation}.
\end{remark}

Using the bootstrap sampled edge sequences, we can define, for each $k=1,\ldots,K$,
\begin{equation} \label{ho_Sboot}
  S_H^{\dagger,(k)} = \binom{n}{2}^{1/2} \cdot \frac{1}{\lvert \mathcal{V} \rvert} \cdot \frac{1}{\{y_k(\theta) - x_k(\theta)\}^L} \cdot \sum_{j=1}^L (-1)^{1-\tau_j} \sum_{\bm{v} \in \mathcal{V}} \bm{Y}^{\dagger,(k)}_{v_j} \prod_{\ell \neq j} \phi_{\ell,\theta}(\bm{Y}^{(k)}_{v_{\ell}}),
\end{equation}
using $\bm{S}_H^{\dagger}$ to denote the vector in $\real^K$.
We theoretically justify the asymptotic bootstrap distribution with the following result, based on a Cramer-Wold type linear combination.
Assumption~\ref{assump:c22} and the quantity $\aleph_{\mathcal{V}}$ enforce regularity conditions on the configuration of $H$, and are defined precisely in
\ifarxiv
Appendix~\ref{app:ho_theory}.
\else
Section~\ref{app:ho_theory} of the supplementary materials.
\fi
\begin{proposition} \label{thm:ho_bootdist}
  Suppose Assumption~\ref{assump:c22} holds, $\aleph_{\mathcal{V}}/\lvert \mathcal{V} \rvert = O_{\mathbb{P}}\left( n^{-2} \right)$,
  and $y_k(\theta) \neq x_k(\theta)$ for all $k \in \{1,\ldots,K\}$.
  Let $\bm{a} \in \real^K$.
  Then
  \[
    \sup_{z \in \mathbb{R}} \left\lvert \mathbb{P}\left(\binom{n}{2}^{1/2} \bm{a}^{\tp} (\widetilde{\bm{C}}_H - C_H\bm{1}_K) > z \right) - \mathbb{P}\left( \bm{a}^{\tp} \bm{S}_H^{\dagger} > z ~\vert~ \bm{Y} \right) \right\rvert \rightarrow 0
  \]
  as $n \rightarrow \infty$.
\end{proposition}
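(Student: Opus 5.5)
The plan is to show that both distributions in the statement converge to the same Gaussian limit, $\mathcal{N}(0, \bm{a}^{\tp}\bm{\Omega}\bm{a})$, where
\[
  \bm{\Omega} = \lim_{n \rightarrow \infty} \binom{n}{2}^{-1} \sum_{i<j} \bm{M}_{ij} \operatorname{Cov}_\theta(\Yvec_{ij}) \bm{M}_{ij}.
\]
Pólya's theorem then upgrades convergence in distribution to uniform convergence of the distribution functions, because the limit is continuous. If $\bm{a}^{\tp}\bm{\Omega}\bm{a}=0$, both sides concentrate at $0$ and the sup statement is handled separately by a degenerate-limit argument. Two further conventions: I pass to subsequences along which the normalized covariance converges, and I interpret the statement along such subsequences.

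\textbf{Original side.} By \eqref{ho_S_approx} (Corollary~\ref{cor:ho_S_approx}), $\binom{n}{2}^{1/2}\bm{a}^{\tp}(\widetilde{\bm{C}}_H - C_H\bm{1}_K) = \bm{a}^{\tp}\bm{S}_H + o_{\mathbb{P}}(1)$. The leading term can be written as
\[
  \bm{a}^{\tp}\bm{S}_H = \binom{n}{2}^{-1/2}\sum_{i<j} \bm{a}^{\tp}\bm{M}_{ij}(\Yvec_{ij}-\expect_\theta\Yvec_{ij}),
\]
a sum of independent, bounded, mean-zero scalars over node pairs. Two facts are needed:
\begin{itemize}
  \item The entries of $\bm{M}_{ij}$ are uniformly bounded. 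This follows from Assumption~\ref{assump:c22} and $y_k\neq x_k$, since each entry is a normalized count of the subgraph configurations containing $(i,j)$.
  \item Under the same conditions the weights are spread: no single pair carries a non-negligible share of the variance.
\end{itemize}
With these, the Lindeberg condition holds and $\bm{a}^{\tp}\bm{S}_H \indist \mathcal{N}(0,\bm{a}^{\tp}\bm{\Omega}\bm{a})$.

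\textbf{Bootstrap side.} Define the oracle quantity $\bm{S}_H^{\dagger,o}$ as $\bm{S}_H^{\dagger}$ but with the random products $\prod_{\ell\neq j}\phi_{\ell,\theta}(\bm{Y}^{(k)}_{v_\ell})$ replaced by their expectations. Then $\bm{S}_H^{\dagger,o}=\binom{n}{2}^{-1/2}\sum_{i<j}\bm{M}_{ij}\Yvec^{\dagger}_{ij}$. Conditional on $\bm{Y}$, the term $\bm{a}^{\tp}\bm{S}_H^{\dagger,o}$ is exactly Gaussian with variance
\[
  \hat v = \binom{n}{2}^{-1}\sum_{i<j}\bm{a}^{\tp}\bm{M}_{ij}\bm{\Sigma}^\dagger_{\Ynetk{1}_{ij}}\bm{M}_{ij}\bm{a}.
\]
Proposition~\ref{prop:boot_sampler} gives $\expect_\theta\hat v=\binom{n}{2}^{-1}\sum_{i<j}\bm{a}^{\tp}\bm{M}_{ij}\operatorname{Cov}_\theta(\Yvec_{ij})\bm{M}_{ij}\bm{a}\to\bm{a}^{\tp}\bm{\Omega}\bm{a}$. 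Because $\hat v$ is an average of independent bounded terms, $\var(\hat v)=O(n^{-2})$, so $\hat v\inprob\bm{a}^{\tp}\bm{\Omega}\bm{a}$.

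\textbf{Main obstacle: the plug-in products.} It remains to show that $\bm{a}^{\tp}(\bm{S}_H^{\dagger}-\bm{S}_H^{\dagger,o})=o_{\mathbb{P}}(1)$ conditionally, in probability over $\bm{Y}$. Conditional on $\bm{Y}$ this difference is mean-zero Gaussian with variance
\[
  \binom{n}{2}^{-1}\sum_{i<j}\bm{a}^{\tp}\bm{\Delta}_{ij}\bm{\Sigma}^{\dagger}\bm{\Delta}_{ij}\bm{a},
\]
where $\bm{\Delta}_{ij}$ is the diagonal matrix of centered random weights:
\[
  \binom{n}{2}\frac{1}{\lvert\mathcal{V}\rvert}\sum_{\bm{v}\ni(i,j)}\Big\{\prod_{\ell\neq j}\phi_{\ell,\theta}(\bm{Y}_{v_\ell})-\expect\prod_{\ell\neq j}\phi_{\ell,\theta}(\bm{Y}_{v_\ell})\Big\}.
\]
I would bound the expectation of this variance by expanding each squared centered sum over pairs $\bm{v},\bm{v}'$ containing $(i,j)$. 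Only pairs sharing a further node pair beyond $(i,j)$ have nonzero covariance, since edges in distinct pairs are independent across $i<j$; the snapshot dependence is confined within a pair. The count of such overlapping configurations is controlled by $\aleph_{\mathcal{V}}$, and the assumption $\aleph_{\mathcal{V}}/\lvert\mathcal{V}\rvert=O_{\mathbb{P}}(n^{-2})$ is precisely what makes the total $o(1)$. This counting mirrors the proof of Proposition~3 in \citebody{chang22estimation}, and it is where the care is needed, for two reasons. First, the $\bm{\Sigma}^\dagger_{s}$ need not be positive semidefinite in general, so the bound should be written with $\lVert\bm{\Sigma}^\dagger_s\rVert$. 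Second, the bootstrap covariance itself uses $\Ynetk{1}$, which correlates with the $\phi$ factors; but these factors enter only through bounded quantities, so the dependence does not affect the order of the bound.

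Finally, Markov's inequality gives conditional Gaussian variance $o_{\mathbb{P}}(1)$, hence conditional convergence of $\bm{a}^{\tp}\bm{S}_H^\dagger$ to $\mathcal{N}(0,\bm{a}^{\tp}\bm{\Omega}\bm{a})$ in probability. Combining with the original side via Pólya's theorem gives the stated result.
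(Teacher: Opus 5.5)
Your argument is correct in the nondegenerate case. It uses the same two ingredients as the paper: the identity of Proposition~\ref{prop:boot_sampler}, and C22-style counting of overlapping configurations. You organize them differently, though.

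\textbf{How the routes differ.} The paper applies Berry--Esseen to both sides, giving $n$-dependent normal approximations with variances $\binom{n}{2}\eta$ and $\binom{n}{2}\eta^{\dagger}$. It then proves $\lvert\eta^{\dagger}-\eta\rvert=o_{\prob}(n^{-2})$ through a three-term decomposition of $\Gamma^{\dagger}-\Gamma$, importing the bounds from the C22 proof of their Theorem 4. In that decomposition the bootstrap conditional covariance is treated as one more random factor alongside the plug-in $\phi$'s.

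You instead use a Lindeberg CLT plus P\'olya. This needs no rates, and the subsequence device is legitimate via the sub-subsequence criterion because the $\bm{M}_{ij}$, and hence the normalized covariances, are uniformly bounded. You also split $\bm{S}_H^{\dagger}$ itself into an oracle part with deterministic weights and a perturbation with centered random weights. This separates the two sources of randomness cleanly:
\begin{itemize}
\item the fluctuation of $\bm{\Sigma}^{\dagger}_{\Ynetk{1}_{ij}}$ becomes a Chebyshev bound over independent pairs;
\item the overlap counting is confined to a single term;
\item the cross term is Cauchy--Schwarz on conditionally Gaussian pieces.
\end{itemize}
This bookkeeping is more transparent than the paper's.

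Two small corrections:
\begin{itemize}
\item Your worry that $\Ynetk{1}$ correlates with the $\phi$ factors is unnecessary. The random weight on $\Yvec^{\dagger}_{ij}$ involves only node pairs other than $(i,j)$, so it is independent of $\bm{\Sigma}^{\dagger}_{\Ynetk{1}_{ij}}$.
\item The perturbation's conditional variance is $o(1)$, rather than merely $O(1)$, only because of part 1 of Assumption~\ref{assump:c22}. The overlap count is of order $\aleph_{\mathcal{V}}(L-1)\,\aleph_{\mathcal{V}}(L-2)$. So you need $\aleph_{\mathcal{V}}(L-2)/\aleph_{\mathcal{V}}(L-1)\rightarrow 0$ in addition to $\aleph_{\mathcal{V}}(L-1)/\lvert\mathcal{V}\rvert=O(n^{-2})$.
\end{itemize}

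\textbf{The claim that fails.} You say the degenerate case $\bm{a}^{\tp}\bm{\Omega}\bm{a}=0$ with $\bm{a}\neq\bm{0}$ can be handled separately. It cannot. Two sequences that both concentrate at $0$ need not be close in Kolmogorov distance, and here the statement itself can fail.

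Take $\Anetk{1}$ empty and $H$ a triangle. Then $\bm{S}_H\equiv\bm{0}$ and $C_H=0$. The original statistic is a fully degenerate cubic sum of centered edges, of size $O_{\prob}(n^{-1/2})$. The bootstrap's conditional variance at that same scale is asymptotically $L=3$ times too large, because each triangle contributes one linear term per edge. So the sup-distance stays bounded away from zero.

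What you need is an explicit hypothesis such as $\liminf_n\var(\bm{a}^{\tp}\bm{S}_H)>0$ for the given $\bm{a}\neq\bm{0}$. This follows, for example, from Assumption~\ref{assump:ho_cov} together with the covariance comparison in the proof of Lemma~\ref{lem:C_to_S_weights}. The paper's proof tacitly needs the same condition: both its Berry--Esseen step and its final comparison of $\Phi$ at the two scales $\eta$ and $\eta^{\dagger}$ require $\binom{n}{2}\eta$ to be bounded away from zero. So state it as an assumption rather than claiming the degenerate case.
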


The bootstrap samples can also be used to recover asymptotically optimal adaptive weights, given by
\[
    \bm{w}^{\dagger}(\bm{Y};\theta) = \frac{\operatorname{Cov}^{-1}(\bm{S}^{\dagger}_H ~\vert~ \bm{Y}) \bm{1}_K}{\bm{1}_K^{\tp} \operatorname{Cov}^{-1}(\bm{S}^{\dagger}_H ~\vert~ \bm{Y}) \bm{1}_K},
  \]
and to estimate the variance of the resulting unbiased estimator
\begin{equation} \label{chhat}
  \widehat{C}_H = \bm{w}^{\dagger}(\bm{Y};\theta)^{\tp} \widetilde{\bm{C}}_H.
\end{equation}
To prove this, we require the following regularity assumption on the covariance matrices of $\widetilde{\bm{C}}_H$.
\begin{assumption} \label{assump:ho_cov}
  The normalized covariance matrices
  \begin{equation} \label{sigma_C}
    \Sigma_C(n) = \operatorname{Cov}_{\theta}\left\{ \binom{n}{2}^{1/2}(\widetilde{\bm{C}}_H - C_H \bm{1}_K)\right\}
  \end{equation}
  satisfy
  \[
    0 < c \leq \lambda_{\min}\left\{ \Sigma_C(n) \right\} \leq \lambda_{\max}\left\{ \Sigma_C(n) \right\} \leq C < \infty
  \]
  uniformly over $n$ for constants $c$ and $C$.
\end{assumption}

\begin{proposition} \label{prop:adaptive_distn}
  Under the conditions of Proposition~\ref{thm:ho_bootdist}, and Assumption~\ref{assump:ho_cov}, suppose that
  \[
    \binom{n}{2}^{1/2} \left\{ \bm{w}_*(\theta)^{\tp} \widetilde{\bm{C}}_H - C_H \right\}
  \]
  has a limiting distribution.
  Then
  \begin{equation*}
    \binom{n}{2}^{1/2} \left( \widehat{C}_H - C_H\right)
  \end{equation*}
  has the same limiting distribution, where $\widehat{C}_H$ is defined in \eqref{chhat}.
\end{proposition}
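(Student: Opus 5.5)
The plan is to show that the bootstrap weights $\bm{w}^{\dagger}(\bm{Y};\theta)$ are consistent for the efficient weights $\bm{w}_*(\theta)$, and then transfer the limiting distribution by a Slutsky-type argument. Write
\[
  \binom{n}{2}^{1/2}(\widehat{C}_H - C_H) = \binom{n}{2}^{1/2}\{\bm{w}_*(\theta)^{\tp}\widetilde{\bm{C}}_H - C_H\} + \{\bm{w}^{\dagger}(\bm{Y};\theta)-\bm{w}_*(\theta)\}^{\tp}\binom{n}{2}^{1/2}(\widetilde{\bm{C}}_H - C_H\bm{1}_K),
\]
using $\bm{w}^{\tp}\bm{1}_K = 1$ for both weight vectors. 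This identity is exact, so it suffices to show the second term is $o_{\mathbb{P}}(1)$. Assumption~\ref{assump:ho_cov} bounds the variance of each coordinate of $\binom{n}{2}^{1/2}(\widetilde{\bm{C}}_H - C_H\bm{1}_K)$, and these estimators are unbiased, so Chebyshev gives that this vector is $O_{\mathbb{P}}(1)$. It therefore remains to show $\bm{w}^{\dagger}(\bm{Y};\theta) - \bm{w}_*(\theta) \inprob \bm{0}$. Here $\bm{w}_*(\theta)$ should be read as $\Sigma_C(n)^{-1}\bm{1}_K/(\bm{1}_K^{\tp}\Sigma_C(n)^{-1}\bm{1}_K)$, which may vary with $n$.

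\textbf{Step 1: the bootstrap covariance tracks $\Sigma_C(n)$.} Conditionally on $\bm{Y}$, the $\Yvec^{\dagger}_{ij}$ are independent Gaussians, so $\bm{S}_H^{\dagger}$ is linear in them. Hence $\operatorname{Cov}(\bm{S}_H^{\dagger}\mid\bm{Y})$ is an explicit average over node pairs $(i,j)$ of terms $\widehat{\bm{M}}_{ij}\bm{\Sigma}^{\dagger}_{\Ynetk{1}_{ij}}\widehat{\bm{M}}_{ij}$. The diagonal matrices $\widehat{\bm{M}}_{ij}$ are the analogues of $\bm{M}_{ij}$ with $\mathbb{E}_\theta\phi$ replaced by the observed $\phi_{\ell,\theta}(\bm{Y}^{(k)}_{v_\ell})$.

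I will compare this with $\operatorname{Cov}(\bm{S}_H)=\binom{n}{2}^{-1}\sum_{i<j}\bm{M}_{ij}\operatorname{Cov}_\theta(\Yvec_{ij})\bm{M}_{ij}$ in two parts. The first part replaces $\widehat{\bm{M}}_{ij}$ by $\bm{M}_{ij}$. The entries of $\widehat{\bm{M}}_{ij}$ are U-statistic-like averages of products over $\bm{v}\ni(i,j)$. Under Assumption~\ref{assump:c22} and $\aleph_{\mathcal{V}}/\lvert\mathcal{V}\rvert = O_{\mathbb{P}}(n^{-2})$, I expect their deviations from $\bm{M}_{ij}$ to be uniformly small in an averaged $L^2$ sense, by the same variance computations used for Corollaries~\ref{cor:ho_order} and~\ref{cor:ho_S_approx}. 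The second part replaces $\bm{\Sigma}^{\dagger}_{\Ynetk{1}_{ij}}$ by its conditional expectation. By Proposition~\ref{prop:boot_sampler}, $\mathbb{E}_\theta\bm{\Sigma}^{\dagger}_{\Ynetk{1}_{ij}}=\operatorname{Cov}_\theta(\Yvec_{ij})$, and the centered terms are independent across pairs and bounded, so a weak law of large numbers gives $o_{\mathbb{P}}(1)$.

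Combining the two parts with \eqref{ho_S_approx} yields $\operatorname{Cov}(\bm{S}^{\dagger}_H\mid\bm{Y}) - \Sigma_C(n) \inprob 0$ entrywise. Proposition~\ref{thm:ho_bootdist} is consistent with this, but Step 1 needs the second-moment statement directly. To identify $\operatorname{Cov}(\bm{S}_H)$ with $\Sigma_C(n)$ up to $o(1)$, I will additionally need the remainder in \eqref{ho_S_approx} to be small in $L^2$, not just in probability. I would obtain this from the explicit Hoeffding-type decomposition underlying Corollary~\ref{cor:ho_S_approx}.

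\textbf{Step 2: continuity of the weight map.} The map $\Sigma\mapsto\Sigma^{-1}\bm{1}_K/(\bm{1}_K^{\tp}\Sigma^{-1}\bm{1}_K)$ is uniformly Lipschitz on the set of $K\times K$ matrices whose eigenvalues lie in $[c/2,2C]$. By Assumption~\ref{assump:ho_cov} and Step 1, both $\Sigma_C(n)$ and, with probability tending to one, the bootstrap covariance lie in this set. This gives $\bm{w}^{\dagger}-\bm{w}_*\inprob\bm{0}$. Slutsky's lemma then finishes the proof: the second term above is $o_{\mathbb{P}}(1)\cdot O_{\mathbb{P}}(1)$, so $\binom{n}{2}^{1/2}(\widehat{C}_H-C_H)$ has the same limit as $\binom{n}{2}^{1/2}\{\bm{w}_*(\theta)^{\tp}\widetilde{\bm{C}}_H-C_H\}$.

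The main obstacle is the first part of Step 1: controlling the plug-in error in $\widehat{\bm{M}}_{ij}$ uniformly enough that the weighted average of $\bm{\Sigma}^{\dagger}$ is unaffected. I would first bound $\binom{n}{2}^{-1}\sum_{i<j}\lVert\widehat{\bm{M}}_{ij}-\bm{M}_{ij}\rVert^2$ in expectation using the overlap count $\aleph_{\mathcal{V}}$. Cauchy–Schwarz, together with the bounded entries of $\bm{\Sigma}^{\dagger}_s$, then converts this into the required bound on the covariance difference.
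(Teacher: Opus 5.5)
Your proposal is correct and follows essentially the same route as the paper. The paper also uses the exact decomposition based on $\bm{w}^{\tp}\bm{1}_K=1$, an $O_{\mathbb{P}}(1)$ bound on $\binom{n}{2}^{1/2}(\widetilde{\bm{C}}_H - C_H\bm{1}_K)$, a comparison of the bootstrap covariance with $\operatorname{Cov}(\bm{S}_H)$, $L^2$ control of the remainder to identify $\operatorname{Cov}(\bm{S}_H)$ with $\Sigma_C(n)$ (Lemma~\ref{lem:C_to_S_weights}), and continuity of the weight map (Lemma~\ref{lem:cov_to_weights}). The differences are cosmetic: you get the $O_{\mathbb{P}}(1)$ bound from Chebyshev with unbiasedness and Assumption~\ref{assump:ho_cov} rather than Corollary~\ref{cor:ho_order}, and you split the bootstrap-covariance comparison into a plug-in error for $\widehat{\bm{M}}_{ij}$ plus a law of large numbers for $\bm{\Sigma}^{\dagger}_{\Ynetk{1}_{ij}}$, whereas the paper bounds $\Gamma^{\dagger}_{k_1,k_2,j_1,j_2}-\Gamma_{k_1,k_2,j_1,j_2}$ by the three-term decomposition in the proof of Proposition~\ref{thm:ho_bootdist}.
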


In the prior result, we compare an adaptive estimator to the estimator with optimal weights $\bm{w}_*(\theta)$. Define the optimal asymptotic variance
$$
  \sigma_H^2 = \lim_{n \rightarrow \infty} \bm{w}_*(\theta)^{\tp} \Sigma_C(n) \bm{w}_*(\theta) = \lim_{n \rightarrow \infty} \left\{ \bm{1}_K^{\tp}  \Sigma^{-1}_C(n) \bm{1}_K \right\}^{-1},
$$
assuming that this limit exists.
Under these conditions, 
$$
  \left\{ \bm{1}_K^{\tp}  \cov_{\theta}^{-1}\left( \bm{S}_H^{\dagger} ~\vert~ \bm{Y} \right) \bm{1}_K \right\}^{-1} \inprob \sigma_H^2.
$$
The inverse covariance matrix on the left-hand side can be estimated to arbitrary precision for any $n$ by increasing the number of bootstrap replicates.

\section{Dynamic comparison of subgraph densities} \label{sec:comparison}

Applied network analysis in the dynamic setting often poses questions about network comparison.
Using a variant on the model developed in Section~\ref{sec:model}, we develop methodology to perform network comparison through joint estimation and inference for edge densities, and more general higher-order subgraph densities of the first and last snapshot in an observed sequence.

As in Section 2, we specify edgewise error and evolution behavior for the observed sequence network snapshots $\{\Ynetk{k}\}_{k=1}^K$ through
\begin{align*}
  \mathbb{P}(\Ynetk{k}_{ij}=1 \vert \Anetk{k}_{ij}=0) &= \alpha, \quad
  &\mathbb{P}(\Ynetk{k}_{ij}=0 \vert \Anetk{k}_{ij}=1) &= \beta, \\
  \mathbb{P}(\Anetk{k+1}_{ij}=1 \vert \Anetk{k}_{ij}=0) &= \lambda,
  &\mathbb{P}(\Anetk{k+1}_{ij}=0 \vert \Anetk{k}_{ij}=1) &= \mu
\end{align*}
for $k=1,\ldots,K$, independently over $i < j$, and governed by parameters $\theta = (\alpha,\beta,\lambda,\mu)$.
However, we now treat both the first underlying snapshot $\Anetk{1}$, and the final snapshot $\Anetk{K}$ as high-dimensional unknown parameters.
Adding this flexibility in terms of the final snapshot requires a modified model.
In Section~\ref{sec:model}, the joint distribution of the edge sequence for each vertex pair followed a hidden Markov model with deterministic initial state and homogeneous transition kernel.
Now, edge sequences still follow hidden Markov models, but with inhomogeneous bridge distributions depending on the final state, but computable from the original homogeneous transition distributions.
For instance, for $k \in \{2,\ldots,K-1\}$ and $s \in \{0,1\}$, we can expand the transition probability from $0$ to $1$ in terms of the original homogeneous transitions as
\begin{align*}
  \prob(\Anetk{k+1}_{ij}=1 ~\vert~ \Anetk{k}_{ij}=0,\Anetk{K}_{ij}=s) 
  = &\frac{\prob(\Anetk{K}_{ij}=s ~\vert~ \Anetk{k+1}_{ij}=1)\prob(\Anetk{k+1}_{ij}=1 ~\vert~ \Anetk{k}_{ij}=0)}{\prob(\Anetk{K}_{ij}=s ~\vert~ \Anetk{k}_{ij}=0)}.
\end{align*}
Analogous derivations can be used to compute other transition probabilities.
This inhomogeneous transition probability now depends on $s \in \{0,1\}$, the final state of the edge at time $K$, as well as $k$, which determines the number of remaining time steps before it must reach its final state.

In Section~\ref{subsec:gmm_comparison}, we extend the GMM methodology to jointly estimate error, evolution and starting and ending edge density parameters
$$
    \delta_1 = \binom{n}{2}^{-1} \sum_{i < j} \Anetk{1}_{ij}, \quad \delta_K = \binom{n}{2}^{-1} \sum_{i < j} \Anetk{K}_{ij}
$$
from the observed sequence of noisy, evolving networks.
In Section~\ref{subsec:ho_comparison} we provide an outline of the analogous methodology and bootstrapping algorithm for higher-order subgraph densities.
Finally, in Section~\ref{subsec:comparison_problems} we discuss generalizations to other more complex comparison problems.
All detailed results for this section are given in 
\ifarxiv
Appendix~\ref{app:comparison_theory}.
\else 
Section~\ref{app:comparison_theory} of the supplementary materials.
\fi

\subsection{GMM estimation for edge density comparison} \label{subsec:gmm_comparison}

Recall the notation of Section~\ref{sec:estimation}, where
$
  m: \{0,1\}^K \rightarrow \real^p
$
denotes a mapping on edge sequences.
Now, in order to summarize the behavior of all edge sequences, we must track the underlying status of each edge jointly in snapshots $1$ and $K$.
Thus we will aim to estimate three edge densities
\begin{align*}
  \rho_1 = \binom{n}{2}^{-1} \sum_{i < j} \Anetk{1}_{ij} (1 - \Anetk{K}_{ij}), ~~&~~
  \rho_K = \binom{n}{2}^{-1} \sum_{i < j} (1 - \Anetk{1}_{ij}) \Anetk{K}_{ij}, \\
  \rho_{1K} = \binom{n}{2}^{-1} &\sum_{i < j} \Anetk{1}_{ij} \Anetk{K}_{ij},
\end{align*}
so that our snapshot density parameters of interest can be written as $\delta_1 = \rho_1 + \rho_{1K}$ and $\delta_K = \rho_K + \rho_{1K}$.
The edge-averaged expectation is given by
$$
  \bm{m}(\rho_1,\rho_K,\rho_{1K},\theta) = \rho_{1K} \bm{m}_{11}(\theta) + \rho_1 \bm{m}_{10}(\theta) + \rho_K \bm{m}_{01}(\theta) + (1 - \rho_1 - \rho_K - \rho_{1K})\bm{m}_{00}(\theta),
$$
where
$$
  \bm{m}_{st}(\theta) = \expect_{\theta} \left\{ m\left( \Yvec_{12} \right) ~\big\vert~ \Anetk{1}_{12}=s,~\Anetk{K}_{12}=t \right\} \in \realp.
$$

We specify two mappings for GMM estimation, one for initialization and another for the second-stage reweighted estimator.
We can show empirically that these mappings lead to local and global identification (similar to Assumptions \ref{assump:global_identification} and \ref{assump:local_identification_body}), as well as covariance non-degeneracy (similar to Assumption~\ref{assump:cov_spectrum}) when $K \geq 5$.

For initialization with $K > 5$, we use
\[
  m_{\mathcal{C}}^{(\mathrm{init})} = (D_1,D_K,T_{000},T_{001},T_{010},T_{100},T_{011},T_{101},T_{110},P_K)
\]
where $D_k$ and $T_{e_1e_2e_3}$ are defined as in Section~\ref{sec:estimation}, and
\[
  P_K(y_1,\ldots,y_K) = y_1y_K
\]
is added to identify the joint densities for the starting and ending snapshots. For $K=5$, we drop the mapping $T_{110}$ from $m_{\mathcal{C}}^{(\mathrm{init})}$ to ensure linear independence of the moments.

For the second-stage reweighted estimator, we keep additional local density information, and specify
\[
  m_{\mathcal{C}}^* = (D_1,\ldots,D_{K_{\mathrm{cent}}-1},D_{K_{\mathrm{cent}}+1},\ldots,D_K,T_{000},T_{001},T_{010},T_{100},T_{101},P_K),
\]
where $K_{\mathrm{cent}}$ is a central index $\lceil (K+1)/2 \rceil$.
In 
\ifarxiv
Appendix~\ref{app:comparison_theory}.
\else
Section~\ref{app:comparison_theory} 
\fi
of the supplementary materials, we sketch the analogous theoretical results and proofs to find an asymptotic normal distribution for the resulting GMM estimators.

\subsection{Higher-order subgraph density comparison} \label{subsec:ho_comparison}

In this section, for a subgraph $H$ of arbitrary order, we consider subgraph density estimators for
$$
  C_H^{(1)} = \frac{1}{\lvert \mathcal{V} \rvert} \sum_{\bm{v} \in \mathcal{V}} \prod_{\ell=1}^L (\Anetk{1}_{v_{\ell}})^{\tau_{\ell}}(1 - \Anetk{1}_{v_{\ell}})^{1 - \tau_{\ell}}, \quad C_H^{(K)} = \frac{1}{\lvert \mathcal{V} \rvert} \sum_{\bm{v} \in \mathcal{V}} \prod_{\ell=1}^L (\Anetk{K}_{v_{\ell}})^{\tau_{\ell}}(1 - \Anetk{K}_{v_{\ell}})^{1 - \tau_{\ell}}
$$ 
based on the initial and final network snapshots $\Ynetk{1}$ and $\Ynetk{K}$.
In principle, the methods in Section~\ref{sec:ho} can be extended to produce reweighted estimators based on all observed network snapshots $\{\Ynetk{k}\}_{k=1}^K$, but we do not develop those details here. 
Instead we estimate $C_H^{(k)}$ for $k=1, K$ using the corresponding $\Ynetk{k}$ only.

As in Section~\ref{sec:ho}, subgraph density estimation requires plug-in of adjusted surrogate edges.
Define a mapping
$$
    \phi_{\theta}(x) = \frac{x - \alpha}{1 - \beta - \alpha},
$$
and note that
\begin{align}
  \expect_{\theta}\left\{ \phi_{\theta}(\Ynetk{1}_{12}) ~\vert~ \Anetk{1}_{12}=s, \Anetk{K}_{12}=t \right\} &= s, \nonumber \\
  \expect_{\theta}\left\{ \phi_{\theta}(\Ynetk{K}_{12}) ~\vert~ \Anetk{1}_{12}=s, \Anetk{K}_{12}=t \right\} &= t, \label{comparison_phis}
\end{align}
for $s,t \in \{0,1\}$, since $\Ynetk{1}_{12}$ and $\Ynetk{K}_{12}$ are independent when $\Anetk{1}$ and $\Anetk{K}$ are viewed as unknown parameters.
Next, for $\ell=1,\ldots,L$, define
\[
  \phi_{\ell,\theta}(x) = \{\phi_{\theta}(x)\}^{\tau_{\ell}} \{1 - \phi_{\theta}(x)\}^{1-\tau_{\ell}}.
\]
It follows that
\[
  \widetilde{C}_{H,\mathcal{C}}^{(1)} =  \frac{1}{\lvert \mathcal{V} \rvert} \sum_{\bm{v} \in \mathcal{V}} \prod_{\ell=1}^L \phi_{\ell,\theta}(\Ynetk{1}_{v_{\ell}}), \quad \widetilde{C}_{H,\mathcal{C}}^{(K)} =  \frac{1}{\lvert \mathcal{V} \rvert} \sum_{\bm{v} \in \mathcal{V}} \prod_{\ell=1}^L \phi_{\ell,\theta}(\Ynetk{K}_{v_{\ell}})
\]
are unbiased estimators of $C_H^{(1)}$ and $C_H^{(K)}$ under the dynamic comparison model.
Note that by construction, $\widetilde{C}_{H,\mathcal{C}}^{(1)}$ and $\widetilde{C}_{H,\mathcal{C}}^{(K)}$ are independent since we condition on $\Anetk{1}$ and $\Anetk{K}$. 

When $\theta$ is known, estimators of the asymptotic variances of $\widetilde{C}_{H,\mathcal{C}}^{(1)}$ and $\widetilde{C}_{H,\mathcal{C}}^{(K)}$ can be found by an application of the bootstrap methodology developed in \citebody{chang22estimation}, or the related bootstrap developed in Section~\ref{sec:ho} (specified to $K=1$). 

When $\theta$ is unknown, some non-negligible covariance is induced between the two plug-in estimators
\[
  \widehat{C}_{H,\mathcal{C}}^{(1)} =  \frac{1}{\lvert \mathcal{V} \rvert} \sum_{\bm{v} \in \mathcal{V}} \prod_{\ell=1}^L \phi_{\ell,\hat{\theta}}(\Ynetk{1}_{v_{\ell}}), \quad \widehat{C}_{H,\mathcal{C}}^{(K)} =  \frac{1}{\lvert \mathcal{V} \rvert} \sum_{\bm{v} \in \mathcal{V}} \prod_{\ell=1}^L \phi_{\ell,\hat{\theta}}(\Ynetk{K}_{v_{\ell}}). 
\]
We provide the details of this approach in 
\ifarxiv
Appendix~\ref{app:ho_comparison_theta_unknown}.
\else 
Section~\ref{app:ho_comparison_theta_unknown} of the supplementary materials.
\fi

\subsection{Other comparison problems} \label{subsec:comparison_problems}

In principle, the methodology developed in this section can be extended to other subgraph density comparison problems.
Suppose we observe a sequence of network snapshots $\{\bm{Y}_k\}_{k=1}^K$ and wish to do (joint) inference on the subgraph densities at a collection of these snapshot times $\mathcal{I} \subset \{1,\ldots,K\}$.
To this end, assume the edgewise error and evolution model specified in Section~\ref{sec:model}, but fix the subgraph densities of interest by treating the underlying network snapshots at all time indices in $\mathcal{I}$ as high-dimensional nuisance parameters.

Put into this general framework, Sections~\ref{sec:estimation} and \ref{sec:ho} develop methodology for $\mathcal{I} = \{1\}$, and Sections~\ref{subsec:gmm_comparison} and \ref{subsec:ho_comparison}, develop methodology for $\mathcal{I} = \{1,K\}$.
Other choices of interest could be a regular grid of indices, or a sequence of consective snapshots at the beginning or end of the observation interval.

To apply the GMM estimation and inference described in Sections~\ref{sec:estimation} and~\ref{subsec:gmm_comparison}, we require sufficiently many snapshots {\em not} in $\mathcal{I}$ for parameter identification: a subsequence of at least 2 at the beginning or end of the sequence, or at least 3 strictly between two snapshot indices in $\mathcal{I}$.
The higher-order estimation and inference methods developed in Sections~\ref{sec:ho} and~\ref{subsec:ho_comparison} can be extended to accomodate arbitrary $\mathcal{I}$, with suitable adjustments to the bootstrap sampler.

\section{Simulations} \label{sec:simulation}

In this section, we perform simulation studies to verify the performance of our proposed GMM estimators, and the nominal confidence interval (CI) coverage of the asymptotic and bootstrap approximations. All CI's are constructed to have 90\% nominal coverage rate.

\subsection{Inference for edge density} \label{subsec:density_sim}

In this section we verify the properties of our GMM estimators of edge density under the noisy dynamic model, through $3$ simulation settings. 
In all settings, $\alpha=0.05$, $\beta=0.2$.
The initial snapshot $\Anetk{1}$ is assigned edges uniformly at random to have edge density $\delta_1 = 0.4$.
We compare our GMM estimator of $\delta_1$ to a naive estimator, the empirical edge density of $\Ynetk{1}$, as well as the estimator and asymptotic CI developed in \citebody{chang22estimation} (``C22'') which uses only the first $3$ observed snapshots.

In setting 1, we consider sequences of noisy dynamic networks with $K=11$, $\lambda=0.12$, $\mu=0.08$, and vary $n$.
Results are reported in Figure~\ref{fig:density_varyn}.
From the left panel of Figure~\ref{fig:density_varyn}, our GMM CI's achieve the nominal coverage rate, while the C22 CI's have decreasing coverage rate as $n$ increases.
As C22 assumes iid replicates, the addition of stochastic evolution biases the estimators of $\alpha$, $\beta$, and $\delta_1$, seen in the right panel of Figure~\ref{fig:density_varyn}.

\begin{figure}[ht]
\twoImages{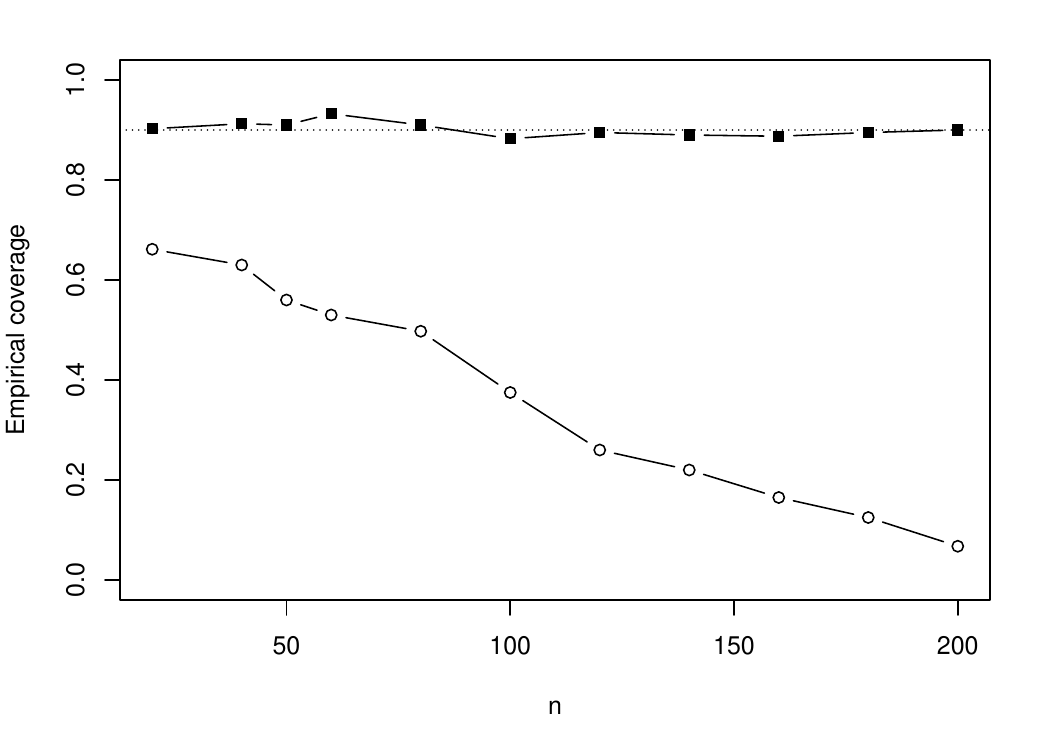}{0.5\textwidth}{}{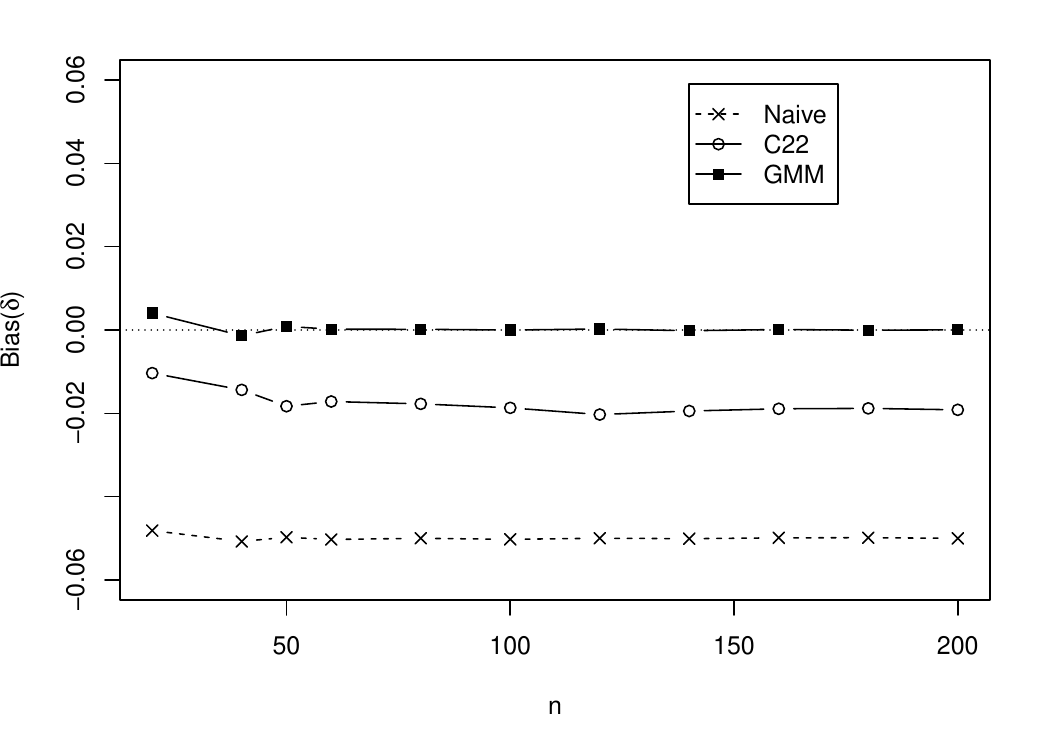}{0.5\textwidth}{}
\caption{Plots of 90\% CI coverage and bias for estimators of edge density $\delta_1$. \label{fig:density_varyn}}
\end{figure}

In setting 2, we consider sequences of noisy dynamic networks with $n=50$, $\lambda=0.12$, $\mu=0.08$, and vary $K$.
Results are reported in Figure~\ref{fig:density_varyK}.
From the left panel of Figure~\ref{fig:density_varyK}, our GMM CI's achieve the nominal coverage rate, while the C22 CI's do not.
In the right panel of Figure~\ref{fig:density_varyK}, our GMM estimator of $\delta_1$ achieves a smaller RMSE than the naive or C22 estimators. Moreover, our estimator has decreasing RMSE as we incorporate more network snapshots, until the edge sequences mix. 
In Figure~\ref{fig:density_pieK}, we show that this increased efficiency is due to the sharing of information through the reweighted snapshot estimators, by visualizing the optimal snapshot-wise weights (cf.~Remark~\ref{rem:thetaknown}).

\begin{figure}[ht]
\twoImages{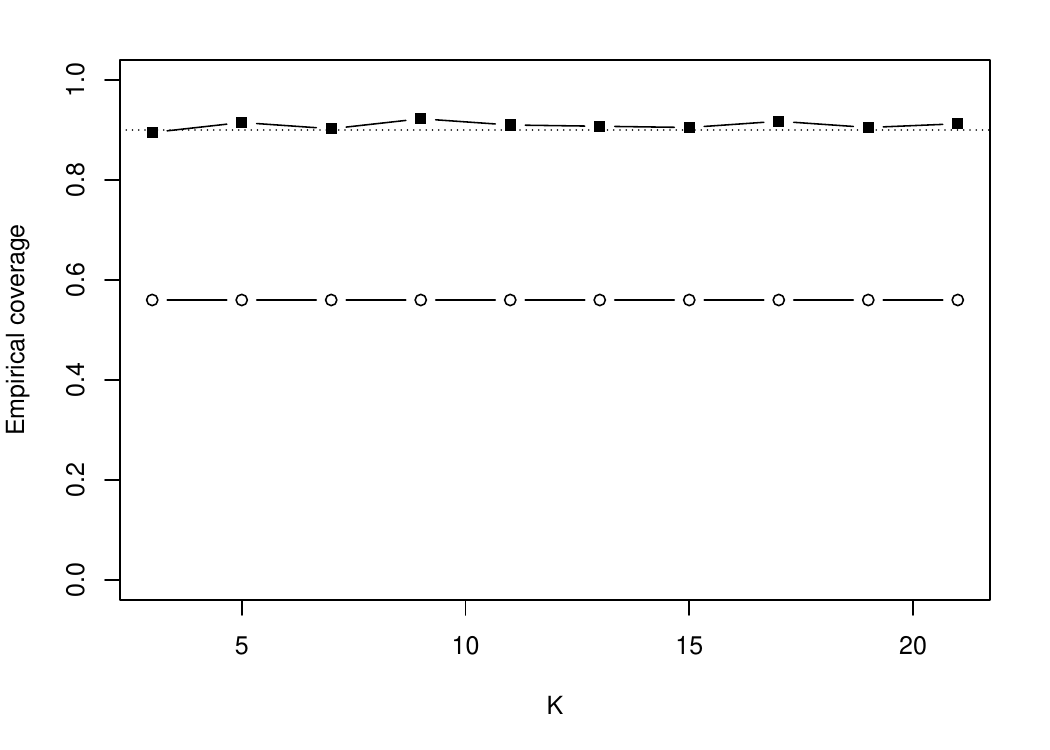}{0.5\textwidth}{}{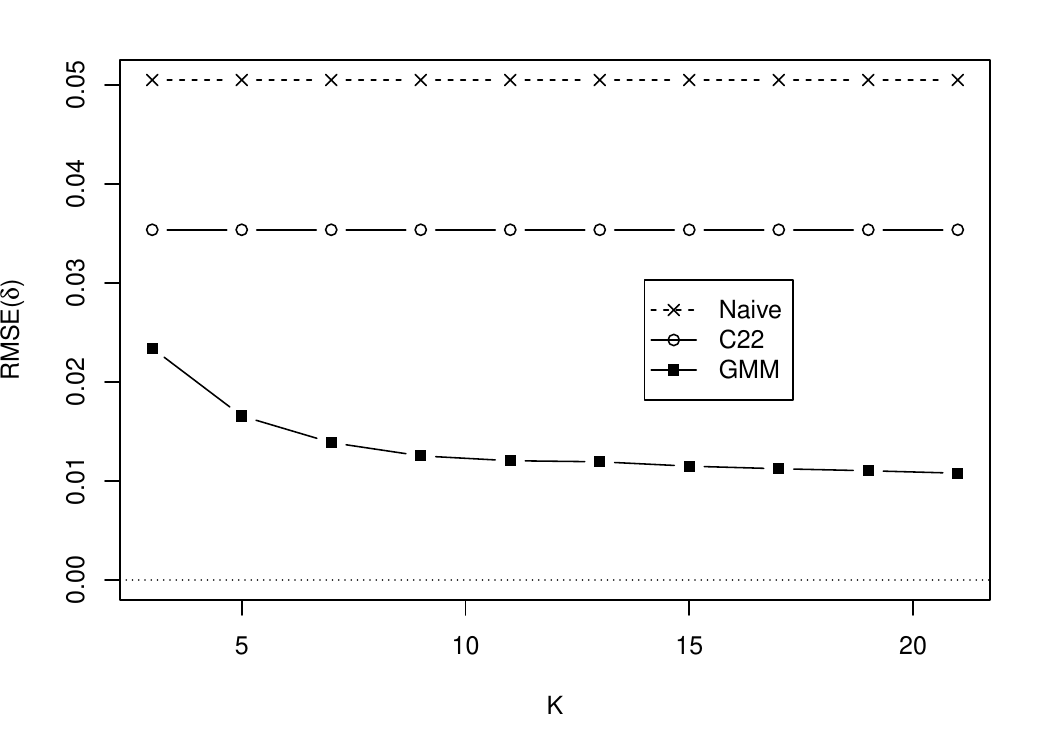}{0.5\textwidth}{}
\caption{Plots of 90\% CI coverage and RMSE for estimators of edge density $\delta_1$. \label{fig:density_varyK}}
\end{figure}

\begin{figure}[ht]
\begin{center}
  \includegraphics[width=0.9\textwidth]{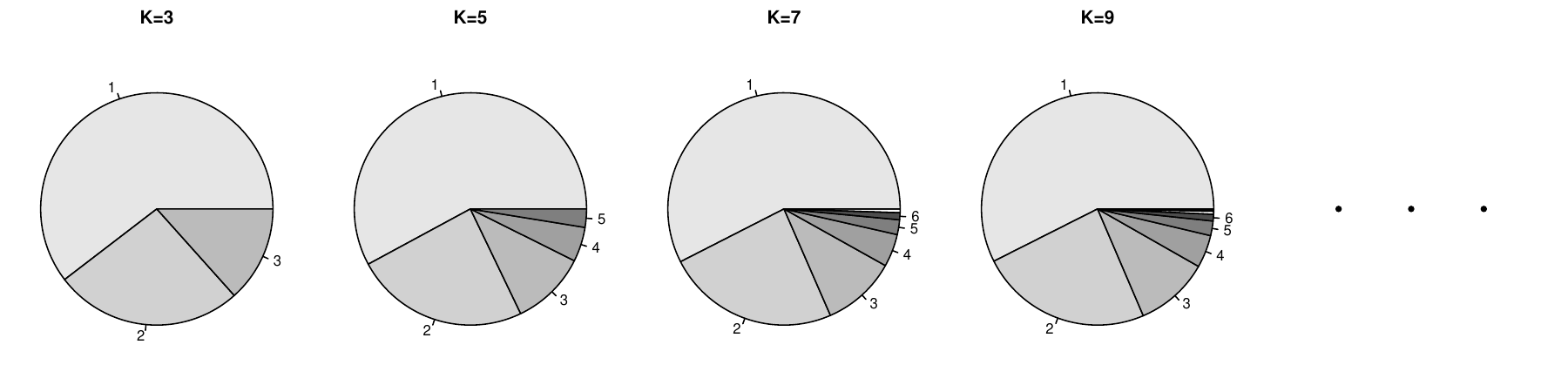}
\end{center}
\caption{Pie chart visualizing weighting of snapshot-wise edge density estimators, $\delta_1=0.4$, $\alpha=0.05$, $\beta=0.2$, $\lambda=0.12$, $\mu=0.08$, varying $K$. Weights for $k \geq 7$ are negligible. Weights for $K > 9$ are visually identical to the weights for $K=9$.\label{fig:density_pieK}}
\end{figure}

In setting 3, we consider sequences of noisy dynamic networks with $n=50$, $K=11$, $\lambda=0.6(1-\gamma)$, $\mu=0.4(1-\gamma)$, and vary $\gamma$.
As $\gamma$ approaches $1$, later underlying snapshots will more closely resemble $\Anetk{1}$.
Results are similar to those in Figure~\ref{fig:density_varyK}, and our estimator has decreasing RMSE as $\gamma$ approaches $1$.
The RMSE for the C22 estimator also decreases, as the smaller evolution rate leads to less bias in their estimates of $\alpha$ and $\beta$.
In Figure~\ref{fig:density_piegamma}, we again visualize the optimal snapshot-wise weights, and see that indeed more information can be used from later snapshots when $\gamma$ is close to $1$.

\begin{figure}[ht]
\twoImages{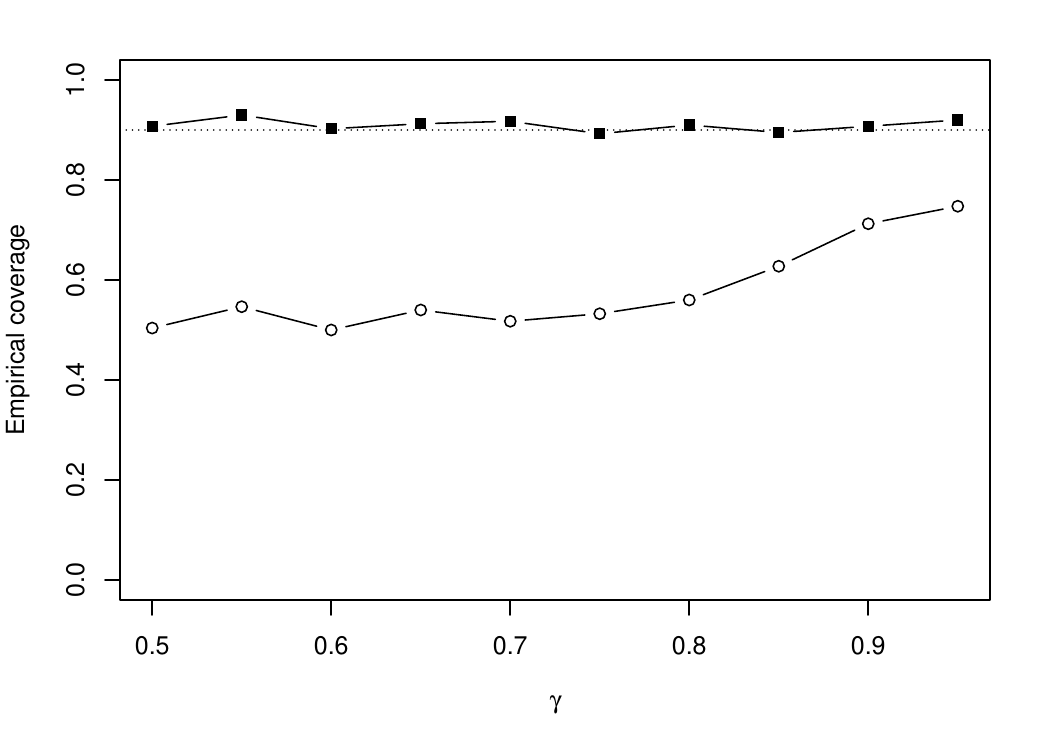}{0.5\textwidth}{}{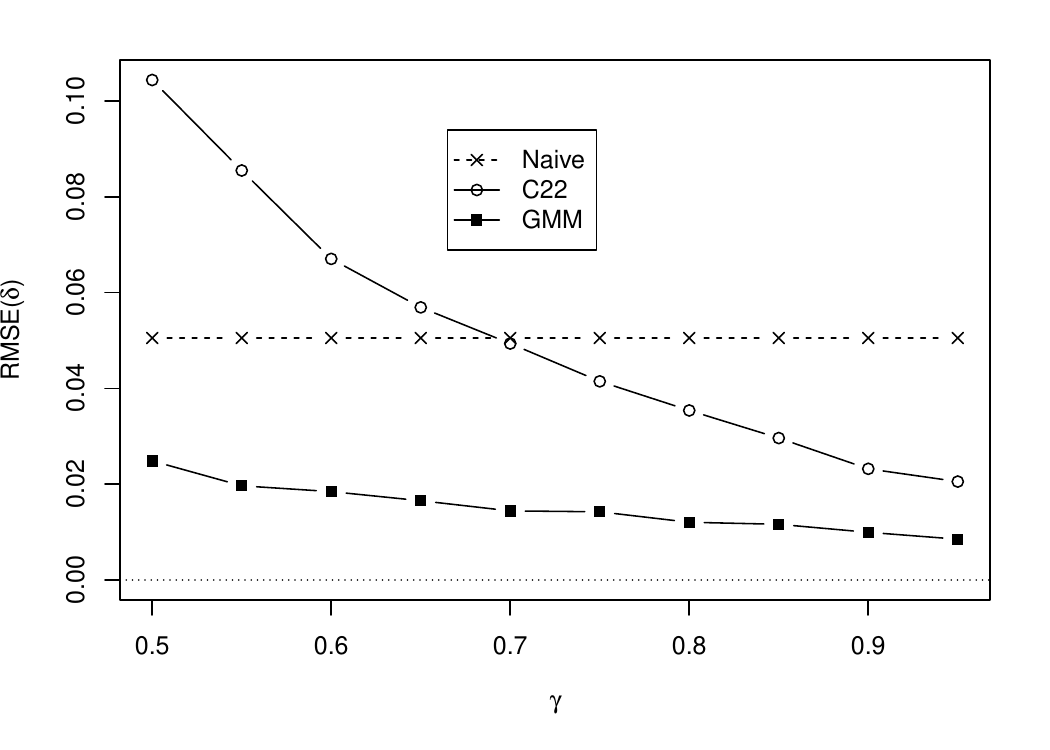}{0.5\textwidth}{}
\caption{Plots of 90\% CI coverage and RMSE for estimators of edge density $\delta_1$. \label{fig:density_varygamma}}
\end{figure}

Results are similar to those in Figure~\ref{fig:density_varyK}, and our estimator has decreasing RMSE as $\gamma$ approaches $1$.
The RMSE for the C22 estimator also decreases, as the smaller evolution rate leads to less bias in their estimates of $\alpha$ and $\beta$.
In Figure~\ref{fig:density_piegamma}, we again visualize the optimal snapshot-wise weights, and see that indeed more information can used from later snapshots when $\gamma$ is close to $1$.

\begin{figure}[ht]
\begin{center}
  \includegraphics[width=\textwidth]{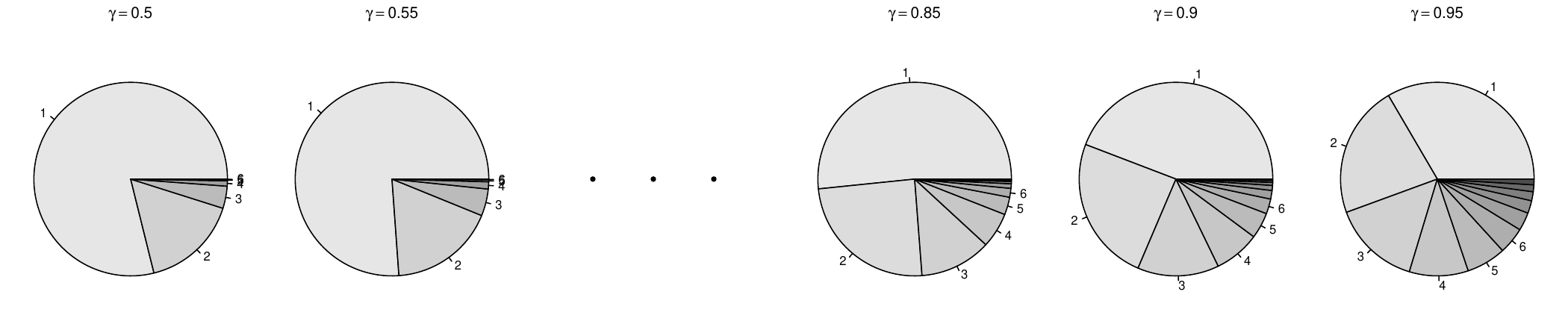}
\end{center}
\caption{Pie chart visualizing weighting of snapshot-wise edge density estimators, $K=11$, $\delta_1=0.4$, $\alpha=0.05$, $\beta=0.2$, $\lambda=0.6(1-\gamma)$, $\mu=0.4(1-\gamma)$, varying $\gamma$. \label{fig:density_piegamma}}
\end{figure}

\subsection{Inference for higher-order subgraph densities}

In this section we verify the inferential properties of our weighted estimators of higher-order subgraph densities under the noisy dynamic model, assuming the parameters in $\theta$ are known (as in Section~\ref{sec:ho}).
We compare our adaptively reweighted ``Boot-wt'' estimator with bootstrap CI based on $500$ samples to the ``C22'' estimator developed in \citebody{chang22estimation} with bootstrap CI based on $500$ samples, and the naive empirical subgraph density of $\Ynetk{1}$.
For comparison, we also implement the Boot-wt estimator which uses only the first two observed snapshots for estimation (``Boot-wt, $K=2$'').
We evaluate the performance for the triangle density of $\Anetk{1}$, denoted $C_{\triangle}$, which counts the rate at which all three potential edges are present among a set of three nodes.

In these simulations, we vary $n$, fix $K=5$, and consider four settings for the error and evolution parameters: ``Low error, low stability'' ($\alpha=0.05$, $\beta=0.1$, $\lambda=0.2$, $\mu=0.2$), ``High error, low stability'' ($\alpha=0.25$, $\beta=0.25$, $\lambda=0.2$, $\mu=0.2$), ``Low error, high stability'' ($\alpha=0.05$, $\beta=0.1$, $\lambda=0.05$, $\mu=0.05$), and ``Low error, low stability'' ($\alpha=0.05$, $\beta=0.1$, $\lambda=0.05$, $\mu=0.05$).
Note that throughout this section, inference is performed using the ground truth values of these error and evolution parameters.
The initial network $\Anetk{1}$ is constructed as an induced subgraph of a perturbed lattice graph generated 
on $400$ nodes, and has ground truth edge density $\delta_1 \approx 0.4$, and triangle density $C_{\triangle} \approx 0.083$. 

In Figure~\ref{fig:tri_rmse}, we plot the RMSE of three estimators of $C_{\triangle}$. In all cases we see the Boot-wt and C22 estimators have RMSE approaching zero as $n$ increases, while the naive estimator remains biased. In three settings, the Boot-wt estimator is a more efficient estimator than the C22 estimator, by incorporating information from later snapshots.
The exceptional case is the ``High error, low stability'' setting, where the later snapshots provide very little information about $\Anetk{1}$, making the adaptive weighting approach highly unstable.
In this case, the Boot-wt $K=2$ estimator is preferred, and its performance is also comparable to C22.
Note that the effect of instability is exacerbated for estimation of higher-order subgraph density (relative to edge density): roughly, information about edge density will be preserved in later snapshots if edges do not transition out of their initial state, while information about triangle 
will only be preserved in later snapshots if all three potential edges do not transition out of their initial states.

\begin{figure}[ht]
\fourImages{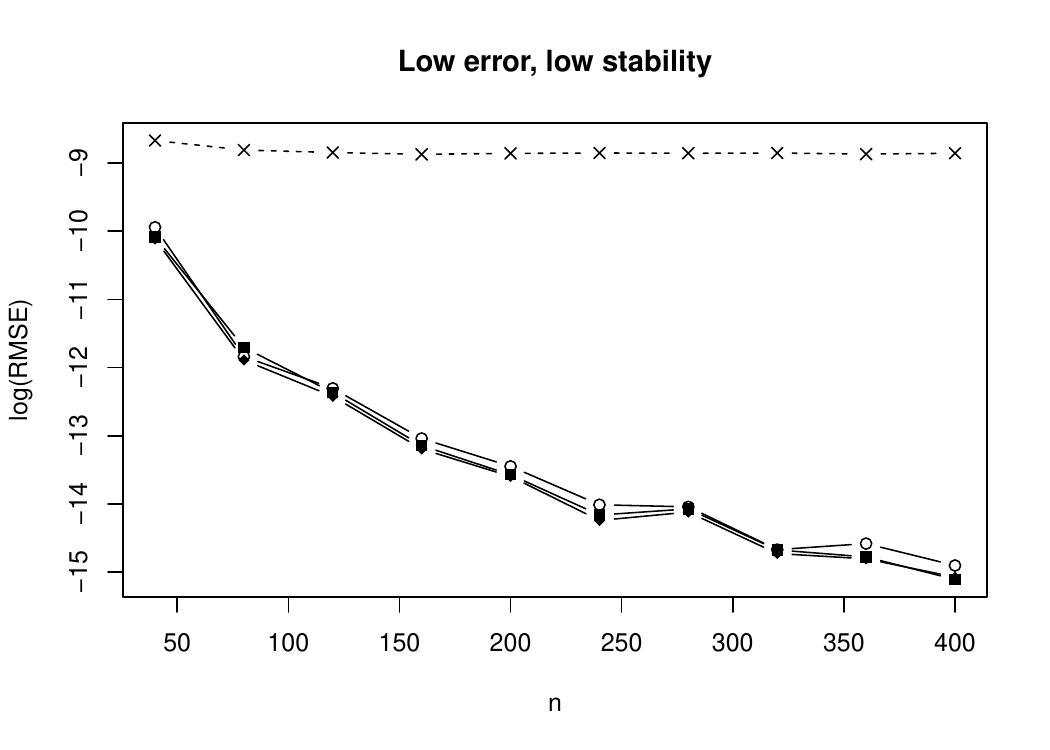}{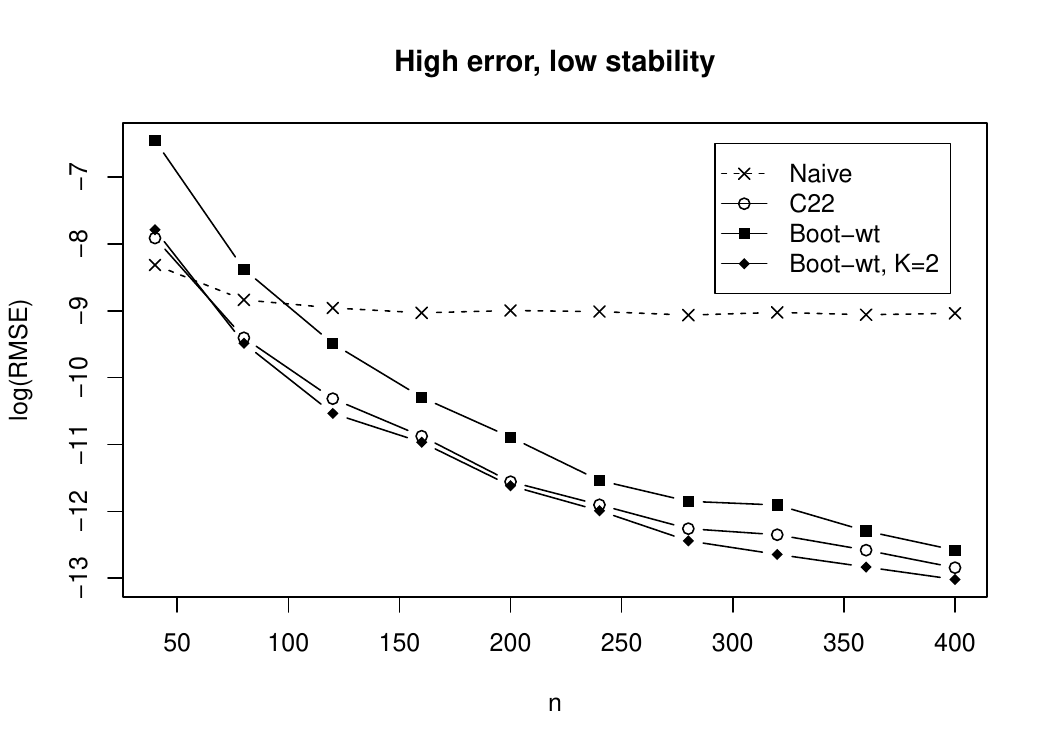}{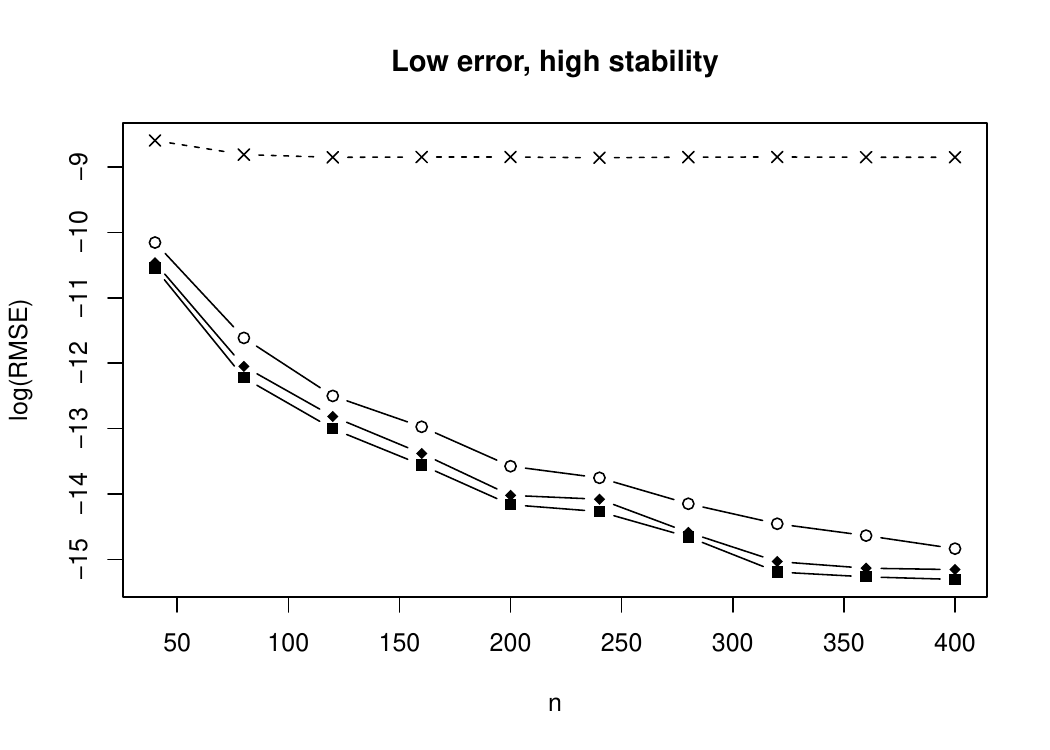}{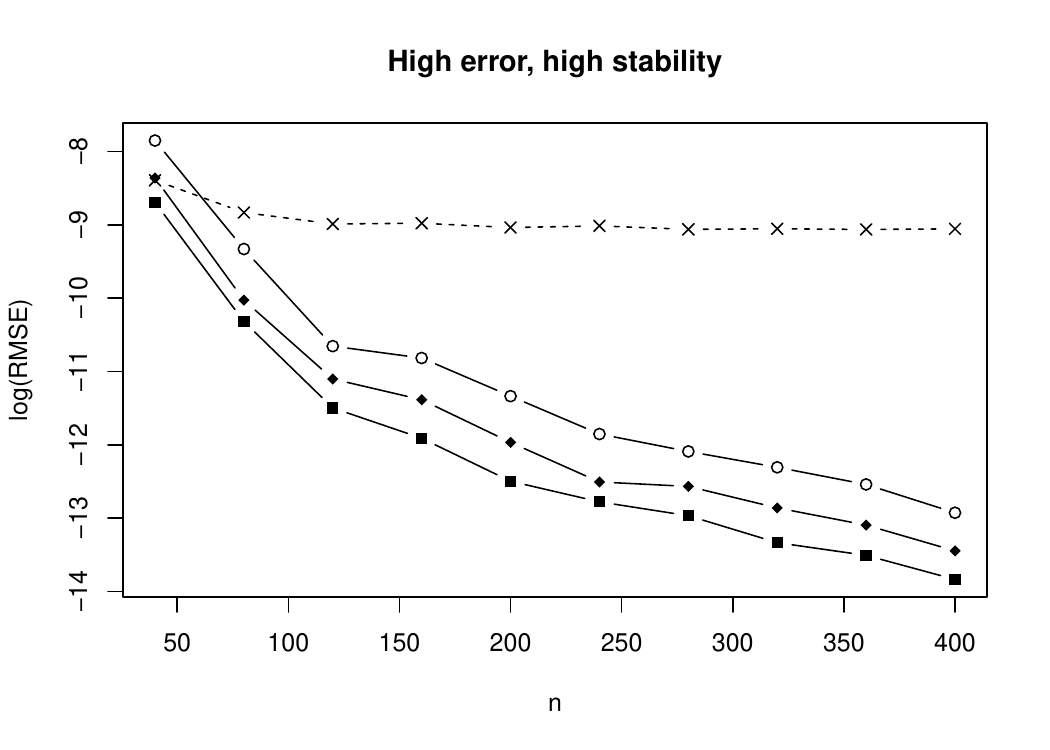}{0.55\textwidth}{0.55\textwidth}
\caption{Plots of RMSE for estimators of triangle density $C_{\triangle}$. \label{fig:tri_rmse}}
\end{figure}

In Figure~\ref{fig:tri_coverage}, we plot the 90\% CI coverage of two estimators of $C_{\triangle}$. Other than the ``High error, low stability'' setting, all methods give CI's with nominal coverage, even for very small values of $n$. The instability of adaptive weighting in the  ``High error, low stability'' setting causes issues with coverage of the Boot-wt estimator which uses all $5$ observed snapshots.
In practice, for inference on higher-order subgraph densities, we caution against incorporating later snapshots when they provide negligible information about $\Anetk{1}$.
For instance, in the ``High error, low stability'' setting, valid and efficient inference can be achieved using only the first 1-2 observed snapshots, while later snapshots make the adaptive weights and covariance estimation unstable.
Finally, note that this study assumes oracle knowledge of the error and evolution rate parameters in $\theta$.
Following Section~\ref{subsec:density_sim}, when $\theta$ is unknown, our GMM-based approach can provide an unbiased estimators of $\theta$ in cases where the C22 estimator will be biased.

\begin{figure}[ht]
\fourImages{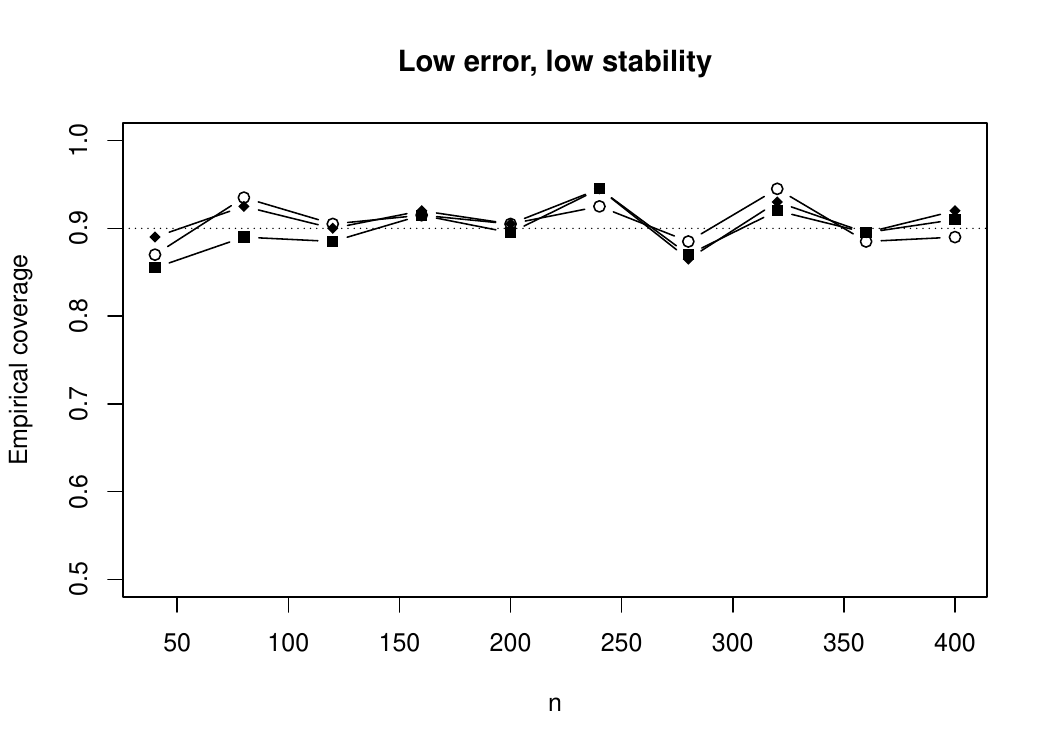}{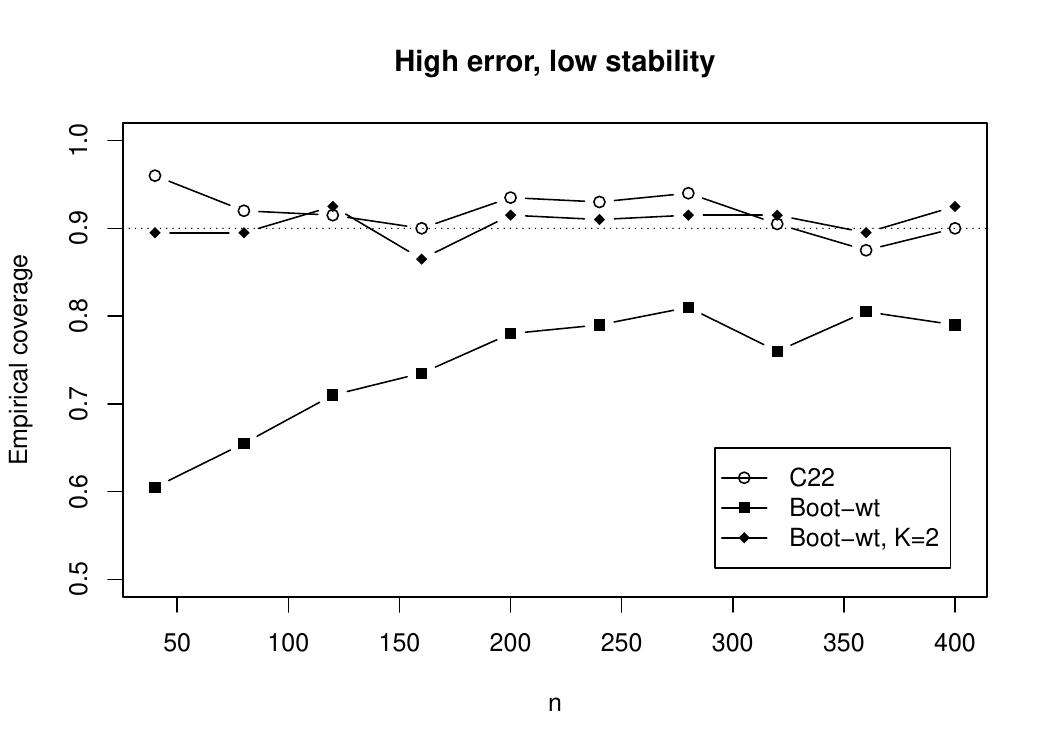}{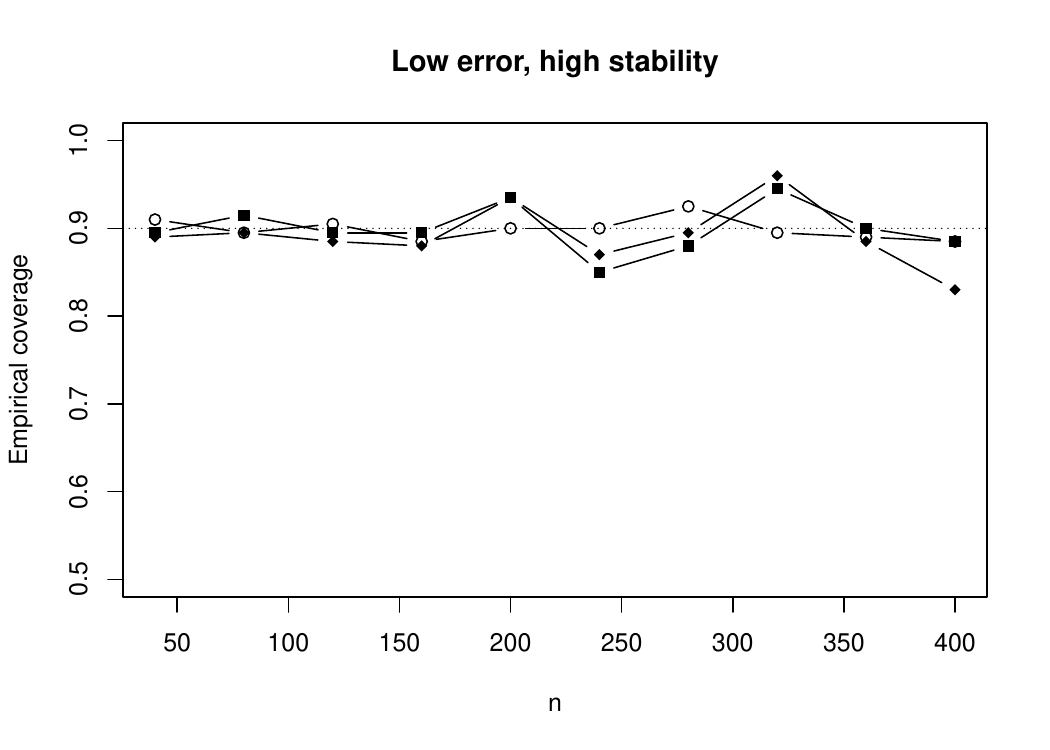}{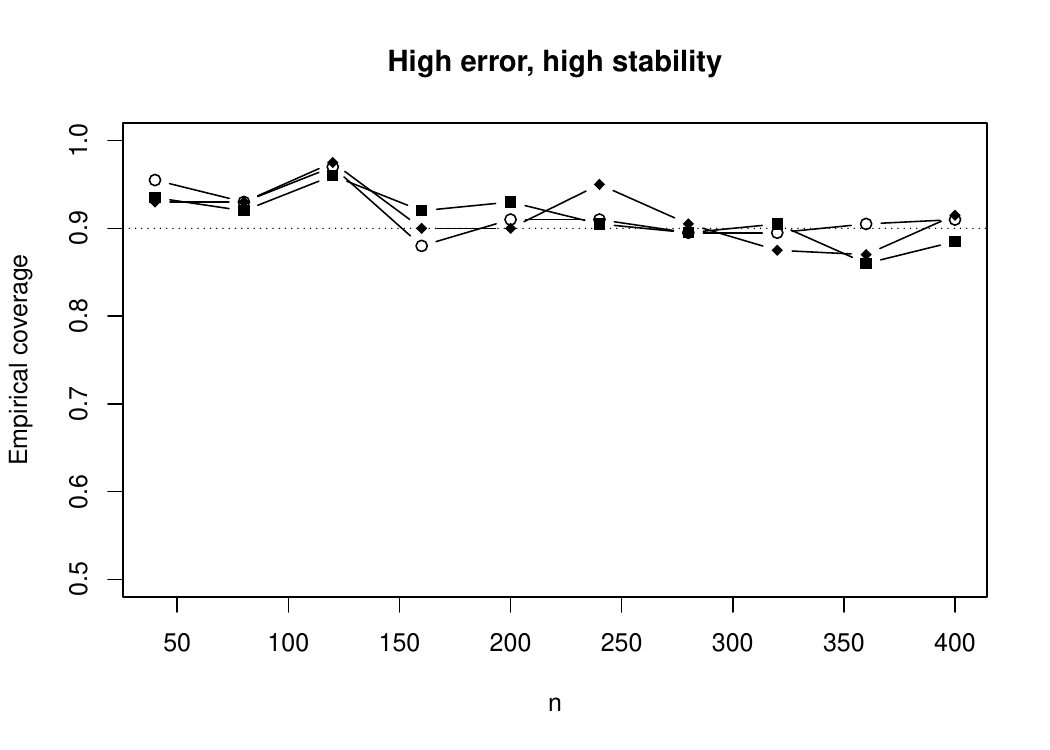}{0.55\textwidth}{0.55\textwidth}
\caption{Plots of 90\% CI coverage for estimators of triangle density $C_{\triangle}$. \label{fig:tri_coverage}}
\end{figure}

\subsection{Dynamic comparison of edge densities}

In this section we verify the properties of our GMM estimators of edge density under the dynamic comparison model, through two simulation settings.
In both settings, we consider sequences of noisy dynamic networks with $\alpha=0.05$, $\beta=0.2$, $\lambda=0.12$, and $\mu=0.08$.
The initial snapshot $\Anetk{1}$ is assigned edges uniformly at random to match the desired edge density $\delta_1 = 0.4$.
The final snapshot $\Anetk{K}$ is assigned edges uniformly at random with varying edge density $\delta_K$, where $\delta_K$ is chosen from $\{0.2,0.4,0.6,0.8\}$.
We compare our GMM estimator of $\delta_1$ to a naive estimator, the empirical edge density of $\Ynetk{1}$, as well as C22.
The C22 estimator is implemented by applying their methodology to the first $3$ and last $3$ snapshots respectively, and taking the difference of the two edge density estimators.
A CI for the C22 estimate of the difference in densities is found from the individual CI's, assuming the two estimators are independent.

In setting 1, we consider sequences of noisy dynamic networks with $K=11$, and vary $n$ and $\delta_K$.
Results are reported in Figure~\ref{fig:comparison_varyn}.
From the left panel of Figure~\ref{fig:comparison_varyn}, our GMM CI's achieve the nominal coverage rate, while the C22 CI's do not.
The coverage of the C22 CI's becomes worse as $n$ increases, or as the difference between $\delta_1$ and $\delta_K$ increases.
In the right panel of Figure~\ref{fig:comparison_varyn}, our GMM estimator of $\delta_K - \delta_1$ achieves a smaller RMSE than the naive or C22 estimators.
When $\delta_K = 0.4$, the effect of the bias on the naive estimators will cancel out.

\begin{figure}[ht]
\twoImages{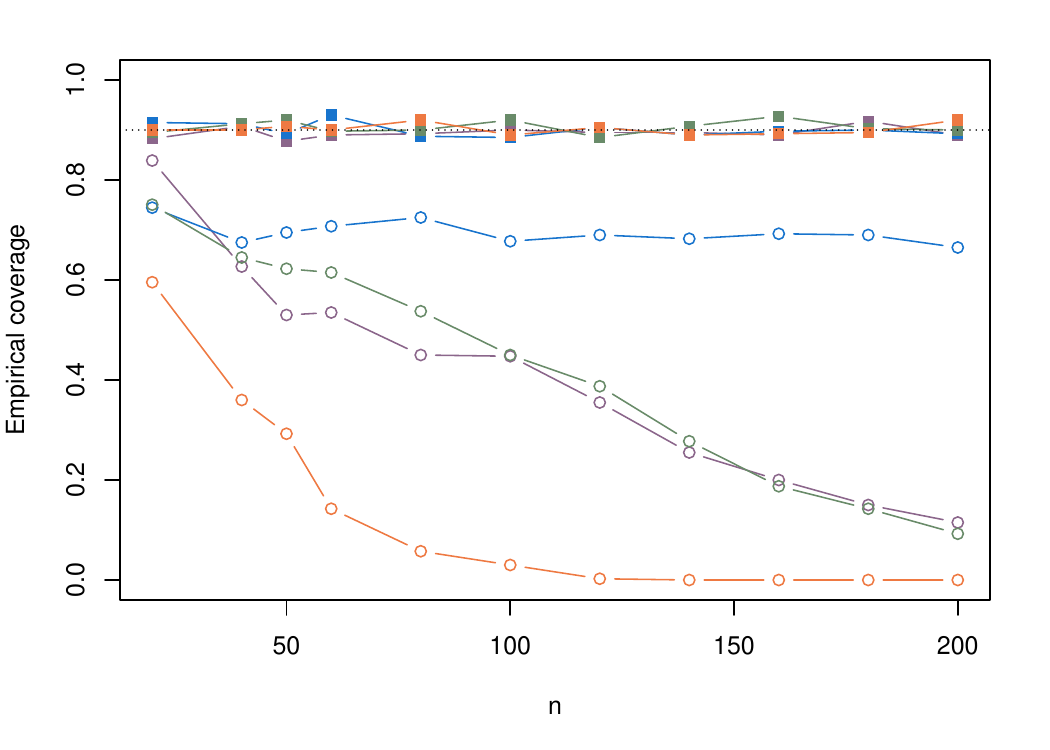}{0.55\textwidth}{}{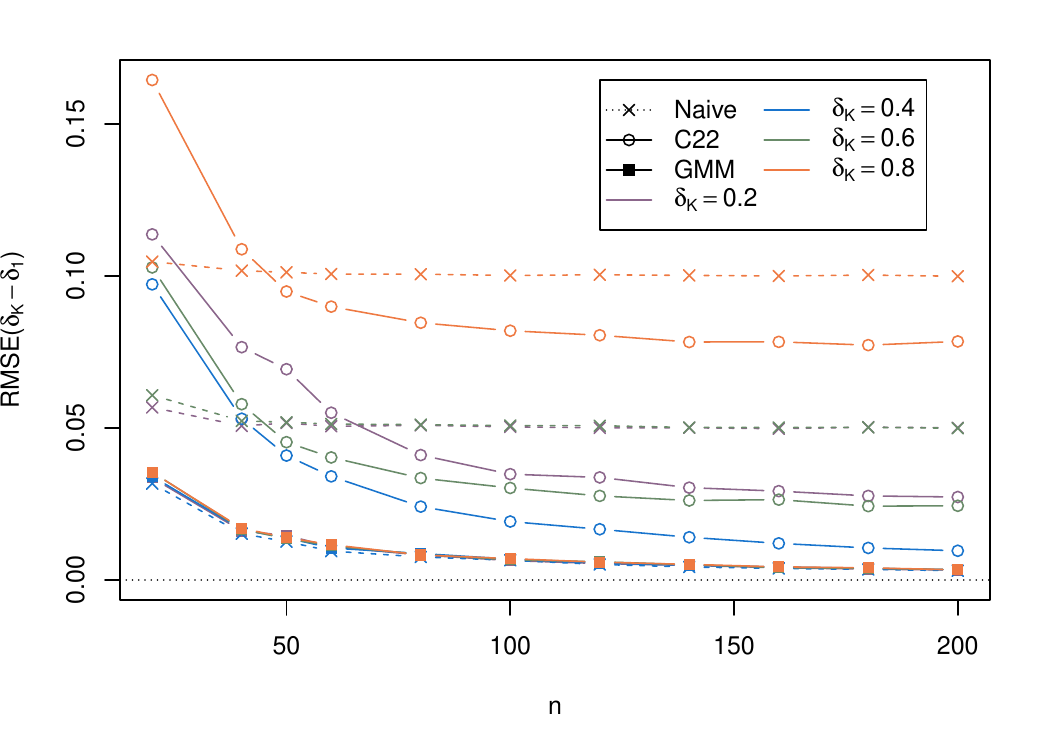}{0.55\textwidth}{}
\caption{Plots of 90\% CI coverage and RMSE for estimators of difference in edge density $\delta_K - \delta_1$. Points are colored according to the value of $\delta_K$. \label{fig:comparison_varyn}}
\end{figure}

In setting 2, we consider sequences of noisy dynamic networks with $n=50$, and vary $K$ and $\delta_K$.
Results are reported in Figure~\ref{fig:comparison_varyK}.
From the left panel of Figure~\ref{fig:comparison_varyK}, we again see that our GMM CI's achieve the nominal coverage rate, while the C22 CI's do not.
In the right panel of Figure~\ref{fig:comparison_varyK}, our GMM estimator of $\delta_K - \delta_1$ achieves a smaller RMSE than the naive or C22 estimators.
The improvement in RMSE for the GMM estimator is modest as $K$ increases.

\begin{figure}[ht]
\twoImages{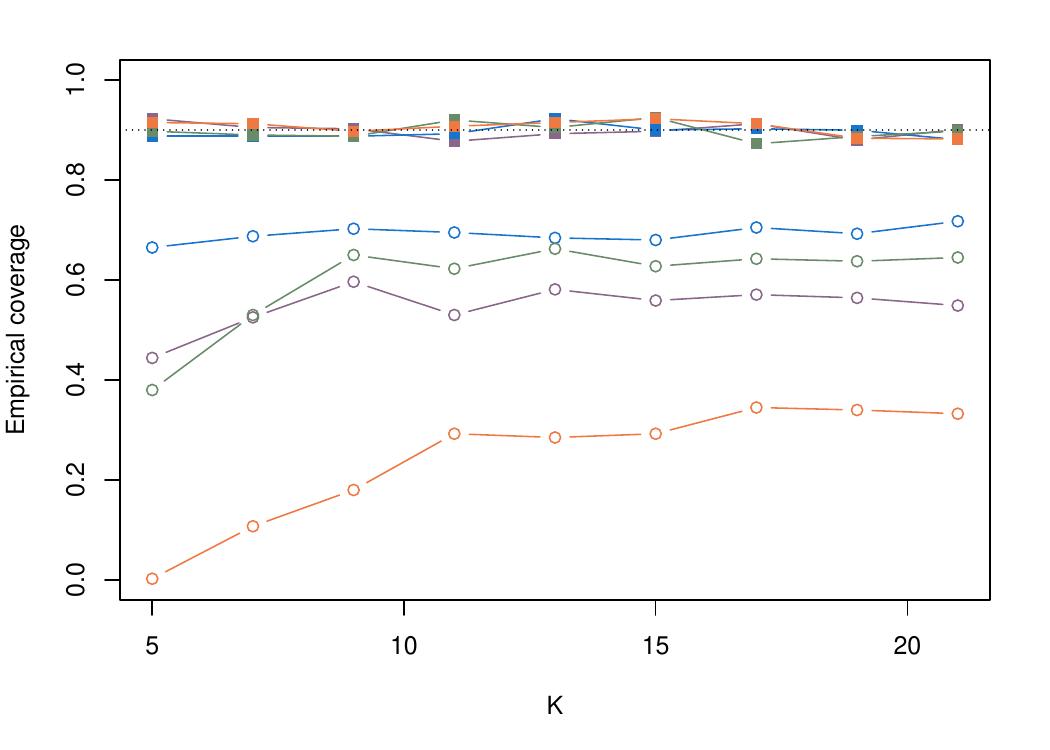}{0.55\textwidth}{}{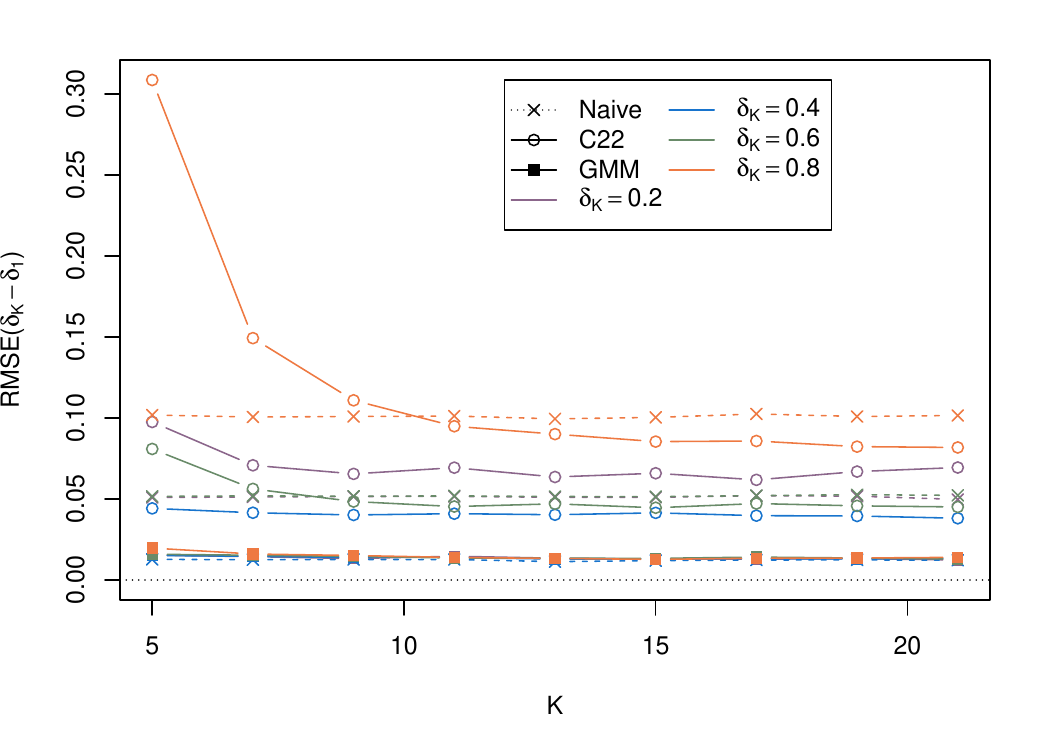}{0.55\textwidth}{}
\caption{Plots of 90\% CI coverage and RMSE for estimators of difference in edge density $\delta_K - \delta_1$. Points are colored according to the value of $\delta_K$. \label{fig:comparison_varyK}}
\end{figure}

\section{Analysis of Learning Brain data} \label{sec:real_data}

In this section, we analyze networks constructed from the Learning Brain dataset, available from OpenNeuro \citepbody{learningbrain}.
This dataset includes $K=5$ longitudinal observations of $23$ female subjects aged between 18 and 26, collected over 6 months while subjects received training in piano.
For each session, functional magnetic resonance imaging (fMRI) scans were taken in a resting state, as well as during a passive listening task.
The sessions were not equally spaced in time, with the schedule (0, 1, 6, 13 and 26 weeks) designed to mimic the non-linear nature of brain plasticity, as the rate of change is hypothesized to slow as training progresses \citepbody{olszewska2021musical}. Thus this design supports the use of time homogeneous evolution parameters, as in Section~\ref{sec:comparison}.
Previous published analysis of this data investigated blood oxygenation level-dependent (BOLD) signals, rather than network connectivity or correlations.
Analysis of the listening task scans did not find any clusters of brain voxels whose BOLD signals changed significantly over the course of training \citepbody{olszewska2025piano}.

For this analysis, BOLD signals were processed using a standard pipeline, using the Python library {\tt nilearn}, to remove confounding effects due to head movement and non-neural activity and to map voxel-wise signals onto a standard atlas with $n=200$ ROIs \citepbody{schaefer2018local}.
Binary adjacency matrices were constructed by thresholding the absolute cross-correlations between signals at a fixed level $0.4$.
We use the methods developed in Section~\ref{subsec:gmm_comparison} and 
\ifarxiv
Appendix~\ref{app:ho_comparison_theta_unknown}
\else 
Section~\ref{app:ho_comparison_theta_unknown} of the supplementary materials 
\fi
to compare both edge density and summaries based on higher-order subgraph densities at the beginning and end of each subject's training.
We analyze each subject independently, and freely estimate individual evolution, error and subgraph density parameters.
In addition to the model parameters $(\delta_1,\delta_K,\alpha,\beta,\lambda,\mu)$, we also perform inference for the difference in edge density between underlying snapshots $1$ and $K$, denoted $\Delta \delta = \delta_K - \delta_1$, as well as changes in two higher-order summaries: change in clustering coefficient, given by
$$
    \Delta \mathrm{cc} = \frac{C_{\triangle}^{(K)}}{C_{\triangle}^{(K)} + C_{\vee}^{(K)}} - \frac{C_{\triangle}^{(1)}}{C_{\triangle}^{(1)} + C_{\vee}^{(1)}},
$$
which can be interpreted as the proportion of closed two-stars; and change in normalized triangle density, given by
$$
   \Delta \mathrm{nt} = \frac{C^{(K)}_{\triangle}}{\delta_K^3} - \frac{C^{(1)}_{\triangle}}{\delta_1^3}.
$$
Asymptotic standard errors for higher-order summaries are based on a bootstrap sampler with $400$ samples.

\begin{figure}[ht]
\centering
\includegraphics[width=\linewidth]{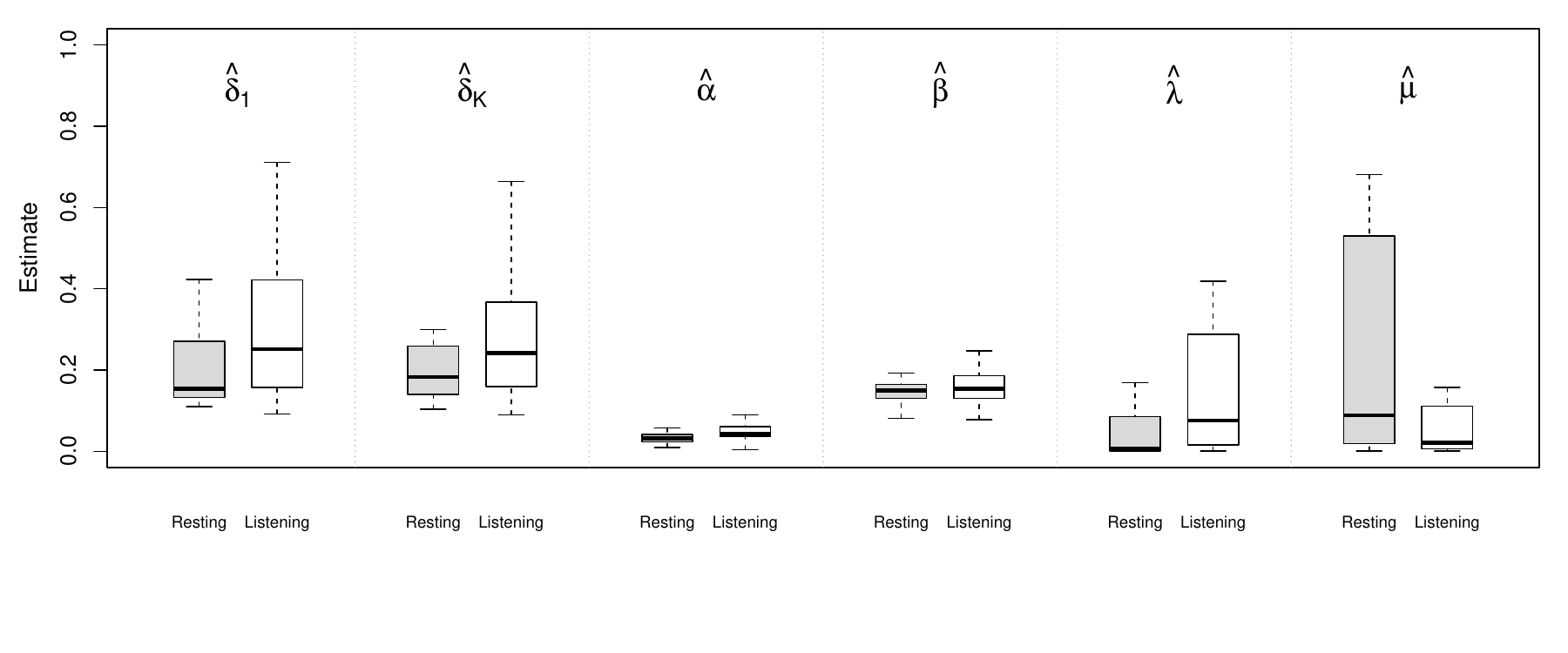}
\caption{Boxplots of parameter estimates for 23 subjects, resting and listening networks. \label{fig:neuro_box}}
\end{figure}

From the summaries in Figure~\ref{fig:neuro_box}, we see that these adjacency matrix sequences tend to show asymmetric type I and type II error rates (which will bias a naive edge density estimator), and non-zero edge transition rates (which bias the C22 method that assumes iid replicates).  Figure~\ref{fig:neuro_box}, along with full testing results in 
\ifarxiv
Appendix~\ref{app:realdata_2},
\else
Section~\ref{app:realdata_2} of the supplementary materials, 
\fi
show that these subjects are highly heterogeneous.
This is to be expected, as the high-dimensional parameters $\Anetk{1}$ and $\Anetk{K}$ are allowed to vary freely for each subject.
Overall, we find that the listening networks tend to have higher density than the corresponding resting networks at both the beginning and end of the study, and find many significant changes in edge density in both sets of networks at the Bonferroni-corrected 5\% level (20/23 resting state, 19/23 listening task).
The box plots show that listening scan sequences are more often characterized by birth of new edges and low death rate, where the opposite is true for the resting scan sequences.
Some subjects show significant differences in clustering coefficient and/or triangle density, implying changes in higher level organization, with 17 total significant differences in clustering coefficient and triangle density found in the listening networks, compared to 11 in the resting state networks. Detailed testing results are provided in Table~\ref{tab:neuro_tests} in 
\ifarxiv
Appendix~\ref{app:realdata_2}.
\else
Section~\ref{app:realdata_2} of the supplementary materials.
\fi

\section{Conclusions and future work} \label{sec:conclusion}

In this work, we develop a model for dynamically observed noisy networks, and methodology to estimate and perform inference on edge densities and higher-order subgraph densities, either at a fixed snapshot or at multiple snapshots, relaxing the assumption of iid replication and allowing for dynamic comparison of subgraph densities.
We extend a bootstrapping approach of \citebody{chang22estimation} to tractably estimate the asymptotic variance and covariances of these estimators.

There are several potential extensions of the model presented in this paper, including heterogeneity of error and evolution parameters across both nodes and times, or dependent bit-flipping models as in \citebody{chang2026autoregressive}. 
Analysis in high-dimensional settings where the number of error and evolution parameters grows with $n$ may also require more intricate asymptotics for growing $K$, to understand the balance between model flexibility and sample size.
Variance estimation without bootstrapping is also possible in principle (cf. Remark~\ref{rem:ustats} in the \ifarxiv
appendix),
\else 
supplementary materials),
\fi
but is computationally challenging; the bootstrap developed in this paper is one form of stochastic approximation, but it may be valuable to consider whether other approximations or fast variance computations are possible.
Finally, we can consider the application of this noisy dynamic evolution model to perform inference for other statistics, for instance other global graph summaries such as branching factor \citebody{li2022estimation}, or the density of a {\em temporal motif} as in \citebody{zhu2022quantifying}. 

\noindent
\ifblind
\else
\ifarxiv
\else
{\bf Funding.} This work was supported by the National Sciences and Engineering Research Council of Canada (NSERC) grant RGPIN-2025-02892 to PWM, an NSERC Postdoctoral Fellowship to PWM, and NSERC grants RGPIN-2023-03566 and DGDND-2023-03566 to EDK.

{\bf Disclosure statement.} The authors report there are no competing interests to declare.

{\bf Declaration of generative AI use.} The authors report that generative AI (Microsoft Copilot) was used as coding assistance to create visualizations and mathematical symbols when typesetting this manuscript. 
\fi
\fi

\ifarxiv
{\bf Data.}
\else
{\bf Data availability statement.} 
\fi
The data that support the findings in this paper are openly available in OpenNeuro at \url{http://doi.org/10.18112/openneuro.ds007022.v1.0.1}.

\bibliographystylebody{abbrvnat}
\bibliographybody{mybib0}

\pagebreak

\ifarxiv
\else
\begin{center}
\section*{Supplementary materials for ``Inference for subgraph densities in noisy dynamic networks''}

\bigskip

\ifblind
\else

Peter W. MacDonald, University of Waterloo

\smallskip

Eric D. Kolaczyk, McGill University

\fi

\smallskip

August 5, 2026

\bigskip
\end{center}
\fi

\appendix

\section{Implementation of GMM estimation}

\subsection{Evolution and error model} \label{app:gmm_computation}

Implementation of GMM estimation requires the evaluation of joint expectations and covariances under the evolution and error model. We prove the following result, a special case of \citesupp{hsu12spectralB}, Lemma 1, which gives a simple formulation for the joint probabilities from finite state space hidden Markov models (HMMs) in terms of so-called {\em observation operators} \citepsupp{jaeger00observableB}.

\begin{lemma} \label{lem:gmm_computation}
  Suppose $\{Z_t\}_{t \geq 0}$ is a finite state, homogeneous HMM on state space $\{1,\ldots,S\}$. Denote the underlying chain by $\{X_t\}_{t \geq 0}$, also on state space $\{1,\ldots,S\}$. Denote the $s$-step transition matrix ($s \geq 0$) by
  $$
    P_s[i,j] = \prob(X_{t+s}=j ~\vert~ X_t=i) \in \real^{S \times S},
  $$
  and the observation distribution matrix by
  $$
    E[i,j] = \prob(Z_t=i ~\vert~ X_t=j) \in \real^{S \times S},
  $$
  for $t=1,2,\ldots$.
  For $s \geq 0$ and $i \in \{1,\ldots,S\}$, denote the observation operator matrix
  $$
    O^{(s)}_{i} = \begin{pmatrix}
      E[i,1] & & 0 \\
      & \ddots & \\
      0 & & E[i,S]
  \end{pmatrix} P_s \in \real^{S \times S}.
  $$
  Then the joint probabilities of $\{Z_t\}_{t \geq 1}$ conditional on the initial underlying state $X_0$ satisfy
  \begin{equation} \label{forward_prob}
    \prob\left( Z_{t_1}=i_1,\ldots,Z_{t_K}=i_K ~\vert~ X_0 \right) = \bm{1}_S^{\tp} O^{(t_K-t_{K-1})}_{i_1} \cdots O^{(t_2-t_1)}_{i_1} O^{(t_1)}_{i_1} \bm{e}_{X_0}
  \end{equation}
  for state sequence $\{i_1,\ldots,i_K\}$ and increasing seqence of integers $0 \leq t_1 < t_2 < \ldots < t_K$.
  Here, $e_x$ for $x \in \{1,\ldots,S\}$ denotes the $x$th standard basis vector in $\real^S$.
\end{lemma}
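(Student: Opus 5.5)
The plan is to prove \eqref{forward_prob} by induction on $K$, the number of observation times, using two facts: the Markov property of $\{X_t\}$ and the conditional independence of the emissions $Z_t$ given the underlying chain. First I will fix the conventions needed for the matrix product to make sense. For the product $O^{(t_1)}_{i_1}\bm{e}_{X_0}$ to be a vector of probabilities indexed by the current state, the transition matrix must act on column vectors of state distributions. I will therefore read $P_s$ in the column-stochastic orientation, $P_s[j,x]=\prob(X_{t+s}=j\mid X_t=x)$ (equivalently, replace $P_s$ by $P_s^{\tp}$ in the stated definition). I will also read the subscripts of the operators as $i_K,\ldots,i_2,i_1$ from left to right; the repeated $i_1$ in the display is a typo. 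With these conventions, $P_s\bm{u}$ is the distribution at time $t+s$ when $\bm{u}$ is the distribution at time $t$. Multiplying by $\mathrm{diag}(E[i,1],\ldots,E[i,S])$ weights each state by the probability that it emits $i$.

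The key object is the forward vector
$$\bm{f}_k[j]=\prob(Z_{t_1}=i_1,\ldots,Z_{t_k}=i_k,\ X_{t_k}=j\mid X_0).$$
The claim to prove by induction is $\bm{f}_k=O^{(t_k-t_{k-1})}_{i_k}\cdots O^{(t_1)}_{i_1}\bm{e}_{X_0}$, with the convention $t_0=0$.

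\emph{Base case.} We have $\bm{f}_1[j]=\prob(Z_{t_1}=i_1\mid X_{t_1}=j)\,\prob(X_{t_1}=j\mid X_0)=E[i_1,j]\,(P_{t_1}\bm{e}_{X_0})[j]$, which is exactly $O^{(t_1)}_{i_1}\bm{e}_{X_0}$.

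\emph{Inductive step.} Condition on $X_{t_k}=x$ and sum over $x$:
$$\bm{f}_{k+1}[j]=\sum_x \bm{f}_k[x]\,\prob(X_{t_{k+1}}=j\mid X_{t_k}=x)\,E[i_{k+1},j].$$
To justify this factorization, I will use that, given $X_{t_k}$, the future chain is independent of $(X_0,Z_{t_1},\ldots,Z_{t_k})$, by the Markov property together with the fact that each emission depends only on its own current state. I will also use that $Z_{t_{k+1}}$ depends only on $X_{t_{k+1}}$. Time homogeneity gives $\prob(X_{t_{k+1}}=j\mid X_{t_k}=x)=P_{t_{k+1}-t_k}[j,x]$, and this yields $\bm{f}_{k+1}=O^{(t_{k+1}-t_k)}_{i_{k+1}}\bm{f}_k$. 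Finally, summing out the last state, i.e.\ premultiplying by $\bm{1}_S^{\tp}$, gives \eqref{forward_prob}.

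I expect the main obstacle to be stating the conditional independence precisely, rather than any computation. The HMM assumption has to be formalized as: the $Z_t$ are conditionally independent given the entire path $\{X_t\}$, and $Z_t$ depends only on $X_t$. One must then verify that conditioning on $X_0$ together with the past observations does not spoil the Markov step; this follows by writing the full joint law of $(X_0,X_{t_1},\ldots,X_{t_K},Z_{t_1},\ldots,Z_{t_K})$ as a product, using Chapman--Kolmogorov to marginalize the intermediate $X_t$'s. An equivalent alternative is to write the probability directly as the sum over $x_1,\ldots,x_K$ of $\prod_k E[i_k,x_k]P_{t_k-t_{k-1}}[x_k,x_{k-1}]$, with $x_0=X_0$, and recognize this as the matrix product.

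For the application in this paper I would take $S=2$, $P_1$ built from $\lambda,\mu$ and $E$ built from $\alpha,\beta$. Note that $t_1$ may equal $0$, with $P_0=I$, so that snapshot $1$ can be observed at the initial time.
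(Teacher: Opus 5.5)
Your forward-vector induction is correct and is, in substance, the standard observable-operator (forward-recursion) argument behind Hsu et al.\ (2012, Lemma~1), which the paper cites in place of a written proof. Arguing directly with the paper's transition-then-emission operators and arbitrary gaps $t_{k+1}-t_k$ also spares you the marginalization over unobserved times and the regrouping of factors needed to specialize that lemma literally. Your repairs of the statement are genuinely needed: under the paper's row-stochastic definition of $P_s$ (its explicit $P_s(\tau,\gamma)$ has unit row sums and, for $s\geq 1$, is symmetric only when $\lambda=\mu$) the operator must be built from $P_s^{\tp}$, the subscripts must run $i_K,\ldots,i_1$, and allowing $t_1=0$ with $P_0=I$ is what covers the first snapshot $\Ynetk{1}$, which is emitted from $X_0=\Anetk{1}_{ij}$.
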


In our setting with binary edges, we will have $S=2$, and the expectations and covariances (second moments) of individual edge sequences can be written through indicator functions.
For example, without loss of generality using node pair $(1,2)$, and conditioning on $\Anetk{1}_{12}=0$,
\begin{equation} \label{dens_example}
  \expect_{\theta}( D_{5,12} ~\vert~ \Anetk{1}_{12}=0 ) = \prob_{\theta}(\Ynetk{5}_{12}=1 ~\vert~ \Anetk{1}_{12}=0 )
\end{equation}
\begin{equation} \label{triple_example}
  \expect_{\theta}( T_{011,12} ~\vert~ \Anetk{1}_{12}=0 ) = \sum_{k=3}^K \prob_{\theta}(\Ynetk{k}_{12}=0,\Ynetk{k-1}_{12}=1,\Ynetk{k-2}_{12}=1 ~\vert~ \Anetk{1}_{12}=0),
\end{equation}
and
\begin{equation} \label{cov_example}
   \expect_{\theta}( D_{5,12} T_{011,12} ~\vert~ \Anetk{1}_{12}=0 )
   = \sum_{k=3}^K \prob_{\theta}(\Ynetk{5}_{12}=1,\Ynetk{k}_{12}=0,\Ynetk{k-1}_{12}=1,\Ynetk{k-2}_{12}=1 ~\vert~ \Anetk{1}_{12}=0).
\end{equation}
Note that in \eqref{cov_example}, a probability on the right-hand side may not be of a sequence of four states, as it is written.
If $k \in \{6,7\}$, then the first event is contained in the third or fourth, leading to the probability of a sequence of three states.
If $k=5$, then the first and second events are mutually exclusive, and the probability is zero.

Thus, with some care, the probabilities on the right-hand sides of \eqref{dens_example}---\eqref{cov_example} can be evaluated using \eqref{forward_prob}, and the resulting expectations can be used to find the expectations and covariances needed for GMM estimation, see Section~\ref{sec:estimation}.
We evaluate derivatives of expectations (also needed for covariance estimation) numerically rather than analytically.

In order to efficiently calculate the $s$-step transition probabilities resulting from the one-step evolution model
\begin{equation*}
  \mathbb{P}(\Anetk{k+1}_{ij}=1 \vert \Anetk{k}_{ij}=0) = \lambda, \quad
  \mathbb{P}(\Anetk{k+1}_{ij}=0 \vert \Anetk{k}_{ij}=1) = \mu,
\end{equation*}
we use a reparameterization in terms of parameters $\tau$ and $\gamma$,
$$
  \tau = \frac{\lambda}{\lambda + \mu} \in (0,1), \quad \gamma = 1 - (\lambda + \mu) \in (-1,1),
$$
so that
$$
  P_s = P_s(\tau,\gamma) = \begin{pmatrix}
  1-\tau & \tau \\ 1-\tau & \tau
\end{pmatrix} + \gamma^s \begin{pmatrix}
  \tau & -\tau \\
  -(1-\tau) & 1-\tau
\end{pmatrix} \in (0,1)^{2 \times 2}
$$
is easy to calculate for all values of $s$.
The inverse parameterization is given by
$$
  \lambda = \tau (1 - \gamma), \quad \mu = (1-\tau)(1-\gamma).
$$

\subsection{Dynamic comparison model} \label{app:gmm_computation_compare}

In order to evaluate joint probabilities under the dynamic comparison model, we use a generalization of Lemma~\ref{lem:gmm_computation} to HMMs with inhomogeneous transition kernels

\begin{lemma} \label{lem:gmm_computation_comparison}
  Suppose $\{Z_t\}_{t \geq 0}$ is a finite state, inhomogeneous HMM on state space $\{1,\ldots,S\}$. Denote the underlying chain by $\{X_t\}_{t \geq 0}$, also on state space $\{1,\ldots,S\}$. Denote the
  transition matrix from step $t$ to step $t+s$ ($t \geq 0$, $s \geq 0$) by
  $$
    P_{t,s}[i,j] = \prob(X_{t+s}=j ~\vert~ X_t=i) \in \real^{S \times S},
  $$
  and the observation distribution matrix at step $t$ ($t \geq 0)$ by
  $$
    E_t[i,j] = \prob(Z_t=i ~\vert~ X_t=j) \in \real^{S \times S},
  $$
  For $t \geq 0$, $s \geq 0$ and $i \in \{1,\ldots,S\}$, denote the observation operator matrix
  $$
    O^{(t,s)}_{i} = \begin{pmatrix}
      E_{t+s}[i,1] & & 0 \\
      & \ddots & \\
      0 & & E_{t+s}[i,S]
  \end{pmatrix} P_{t,s} \in \real^{S \times S}.
  $$
  Then the joint probabilities of $\{Z_t\}_{t \geq 1}$ conditional on the initial underlying state $X_0$ satisfy
  \begin{equation} \label{forward_prob_compare}
    \prob\left( Z_{t_1}=i_1,\ldots,Z_{t_K}=i_K ~\vert~ X_0 \right) = \bm{1}_S^{\tp} O^{(t_{K-1},t_K-t_{K-1})}_{i_1} \cdots O^{(t_1,t_2-t_1)}_{i_1} O^{(0,t_1)}_{i_1} \bm{e}_{X_0}
  \end{equation}
  for state sequence $\{i_1,\ldots,i_K\}$ and increasing seqence of integers $0 \leq t_1 < t_2 < \ldots < t_K$.
\end{lemma}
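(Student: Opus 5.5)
The plan is to prove \eqref{forward_prob_compare} by induction on the number $K$ of observation times, tracking a forward vector that carries the joint probability of the observations so far and the current hidden state. Before starting I will fix two notational points in the statement. First, the operators should carry the observed symbols in order, i.e.\ the product reads $\bm{1}_S^{\tp} O^{(t_{K-1},t_K-t_{K-1})}_{i_K} \cdots O^{(t_1,t_2-t_1)}_{i_2} O^{(0,t_1)}_{i_1} \bm{e}_{X_0}$; the repeated subscript $i_1$ is a typo. Second, for the product to act on column vectors as written, the transition factor inside $O^{(t,s)}_i$ must map a distribution over $X_t$ to a distribution over $X_{t+s}$. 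With $P_{t,s}[i,j]=\prob(X_{t+s}=j \mid X_t=i)$ this is $P_{t,s}^{\tp}$, so I will carry out the argument with $P_{t,s}^{\tp}$ in that slot, which is the convention under which the formula is correct. The same remarks apply to Lemma~\ref{lem:gmm_computation}, which is the special case $P_{t,s}=P_s$ and $E_t=E$.

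For $k=1,\dots,K$ define the forward vector $\bm{v}_k\in\real^S$ by
\[
\bm{v}_k[x]=\prob\left(Z_{t_1}=i_1,\ldots,Z_{t_k}=i_k,\ X_{t_k}=x ~\vert~ X_0\right).
\]
The claim to prove by induction is $\bm{v}_k = O^{(t_{k-1},t_k-t_{k-1})}_{i_k}\cdots O^{(0,t_1)}_{i_1}\bm{e}_{X_0}$, with the convention $t_0=0$. Once this holds for $k=K$, summing over the final hidden state, i.e.\ left-multiplying by $\bm{1}_S^{\tp}$, gives \eqref{forward_prob_compare}.

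\emph{Base case.} Condition on $X_{t_1}$ and use the HMM property that $Z_{t_1}$ depends only on $X_{t_1}$. This gives $\bm{v}_1[x]=E_{t_1}[i_1,x]\,\prob(X_{t_1}=x\mid X_0)$. The vector $\{\prob(X_{t_1}=x\mid X_0)\}_x$ equals $P_{0,t_1}^{\tp}\bm{e}_{X_0}$, so $\bm{v}_1=\operatorname{diag}(E_{t_1}[i_1,\cdot])\,P_{0,t_1}^{\tp}\bm{e}_{X_0}=O^{(0,t_1)}_{i_1}\bm{e}_{X_0}$.

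\emph{Inductive step.} Write
\[
\bm{v}_{k+1}[y]=\sum_x \prob\left(Z_{t_{k+1}}=i_{k+1},X_{t_{k+1}}=y ~\vert~ X_{t_k}=x, Z_{t_1},\ldots,Z_{t_k},X_0\right)\bm{v}_k[x].
\]
By the (inhomogeneous) Markov property of $\{X_t\}$, together with conditional independence of the emissions given the hidden path, the conditional factor reduces to $E_{t_{k+1}}[i_{k+1},y]\,P_{t_k,t_{k+1}-t_k}[x,y]$. This yields $\bm{v}_{k+1}=O^{(t_k,t_{k+1}-t_k)}_{i_{k+1}}\bm{v}_k$.

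The only delicate step is justifying that the conditioning on the past observations $Z_{t_1},\dots,Z_{t_k}$ and on $X_0$ drops out given $X_{t_k}$. I would do this by writing the full joint law of $(X_0,X_{t_1},\ldots,X_{t_K},Z_{t_1},\ldots,Z_{t_K})$ as the product of the transition kernels and emission probabilities, which is the definition of an HMM with time-dependent kernels. Marginalising over intermediate hidden states via Chapman--Kolmogorov gives the multi-step matrices $P_{t,s}$, and the factorisation then follows directly. Finally, in the dynamic comparison application the bridge kernels conditioned on $\Anetk{K}$ are Markov, as computed in Section~\ref{sec:comparison}, so the lemma applies there.
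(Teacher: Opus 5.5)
Your forward-vector induction is correct. It is exactly the standard argument behind the observation-operator identity of Jaeger and Hsu et al.\ that the paper relies on here; the paper writes out no proof of its own and simply presents the lemma as the inhomogeneous generalization of Lemma~\ref{lem:gmm_computation}.

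Both of your corrections are genuinely needed. The subscripts must run $i_1,\ldots,i_K$. With the stated row-stochastic $P_{t,s}$, the printed formula already fails for $K=1$: it gives $\sum_j E_{t_1}[i_1,j]\,P_{0,t_1}[j,X_0]$ instead of $\sum_j P_{0,t_1}[X_0,j]\,E_{t_1}[i_1,j]$. In the homogeneous two-state model of Section~\ref{sec:model} with $t_1=1$, the former does not even sum to one over $i_1$ unless $\lambda=\mu$, so $P_{t,s}^{\tp}$ is the right factor in $O^{(t,s)}_{i}$.
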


In our setting with binary edges, we will have $S=2$, and time homogeneous observation distribution matrices $E_t = E$ for all $t \geq 0$.
The time inhomogeneous transition matrices for a given node pair $(i,j)$ will depend on the final state of edge $(i,j)$ in the underlying adjacency matrix $\bm{A}^{(K)}$.

\section{Theory for GMM estimation} \label{app:proofs_gmm}

\subsection{Inference with known \texorpdfstring{$\theta$}{theta}}
\label{subsec:theta_known}

Define
\begin{equation*}
x_k(\theta) = \expect_{\theta} \left\{ D_k(\Yvec_{12}) ~\big\vert~ \Anetk{1}_{12}=0 \right\}, \quad
  y_k(\theta) = \expect_{\theta} \left\{  D_k(\Yvec_{12}) ~\big\vert~ \Anetk{1}_{12}=1 \right\}.
\end{equation*}
Following the discussion in Section~\ref{sec:estimation}, a method of moments estimator for $\delta_1$ based on $\Ynetk{k}$ would solve
$$
  \binom{n}{2}^{-1} \sum_{i < j} D_k(\Yvec_{ij}) = \binom{n}{2}^{-1} \sum_{i < j} \Ynetk{k}_{ij}
  = \hat{\delta}_1^{(k)} y_k(\theta) + \left(1-\hat{\delta}_1^{(k)}\right)x_k(\theta).
$$
This estimator has a closed form
\begin{equation} \label{adjusted_density}
  \hat{\delta}_1^{(k)} = \binom{n}{2}^{-1} \sum_{i < j} \frac{\Ynetk{k}_{ij} - x_k(\theta)}{y_k(\theta) - x_k(\theta)}
\end{equation}
for $k=1,\ldots,K$.
If there are no errors and no evolution, we will have $x_k(\theta)=0$, $y_k(\theta)=1$, and \eqref{adjusted_density} will reduce to the usual edge density.

To produce a final estimator, we follow \citesupp{lavancier16generalB} to adaptively reweight these $K$ (correlated) unbiased estimators.
Denote the vector of unbiased estimators
\begin{equation}
  \hat{\bm{\delta}}_{1} = (\hat{\delta}_{1}^{(0)}, \ldots, \hat{\delta}_1^{(K)})
\end{equation}
An optimal estimator will minimize the MSE of the linear combination $\bm{w}^{\tp} \hat{\bm{\delta}}_{1}$
over coefficient vectors $\bm{w} \in \mathbb{R}^{K}$, subject to $\bm{w}^{\tp} \bm{1}_K = 1$.
\citesupp{lavancier16generalB} prove that the oracle linear combination is given by
\begin{equation*}
  \bm{w}_* = \bm{w}_*(\delta_1,\theta) = \frac{\left\{ \bm{\Sigma}^{(\delta)}(\delta_1,\theta) \right\}^{-1} \bm{1}_K}{\bm{1}_K^{\tp}\left\{ \bm{\Sigma}^{(\delta)}(\delta_1,\theta) \right\}^{-1} \bm{1}_K} \in \real^K,
\end{equation*}
where
$$
  \bm{\Sigma}^{(\delta)}(\delta_1,\theta) = \delta_1 \bm{\Sigma}_1^{(\delta)}(\theta) + (1 - \delta_1) \bm{\Sigma}_0^{(\delta)}(\theta),
$$
and
$$
  \bm{\Sigma}_s^{(\delta)}(\theta) = \cov_{\theta} \left\{ \left( \frac{\Ynetk{1}_{12} - x_1(\theta)}{y_1(\theta) - x_1(\theta)}, \ldots, \frac{\Ynetk{K}_{12} - x_K(\theta)}{y_K(\theta) - x_K(\theta)} \right) ~\bigg\vert~ \Anetk{1}_{12}=s \right\} \in \real^{K \times K}.
$$
The oracle linear combined estimator remains unbiased, with variance
\begin{equation*}
  \operatorname{Var}(\bm{w}_*^{\tp}  \hat{\bm{\delta}}_{1,K})
  = \bm{w}_*^{\tp} \left\{ \bm{\Sigma}^{(\delta)}(\delta_1,\theta) \right\}^{-1} \bm{w}_*
  = \frac{1}{\bm{1}_K^{\tp} \left\{ \bm{\Sigma}^{(\delta)}(\delta_1,\theta) \right\}^{-1} \bm{1}_K}
\end{equation*}

To find an estimator which is asymptotically efficient, we recommend to choose the weight vector iteratively, and evaluate
$$
  \hat{\delta}_{1,h} =  \bm{w}_*\left( \hat{\delta}_{1,h-1},\theta \right)^{\tp} \hat{\bm{\delta}}_{1}
$$
for $h=1,2,\ldots$ to convergence, initializing with $\delta_{1,0} = \hat{\delta}_1^{(1)}$, the estimator using only the first snapshot.
Asymptotic efficiency of the adaptively weighted estimator can be proven for $\hat{\delta}_{1,1}$, after just one iteration.
In the following result, we allow $\delta_1 = \delta_1(n)$ to depend on $n$, subject to the following mild condition.
\begin{assumption} \label{assump:delta_seq}
  $\delta_1(n) \in (0,1)$ uniformly over $n$, and the sequence converges, that is $\delta_1(n) \rightarrow \delta_1(\infty) \in (0,1)$.
\end{assumption}
Throughout, we will explicitly note when the limiting value $\delta_1(\infty)$ is referred to.
When $\delta_1$ is written, it should be interpreted as $\delta_1(n)$ for the appropriate choice of $n$.
\begin{proposition} \label{prop:delta_weighted}
  Suppose Assumption~\ref{assump:delta_seq} holds, and $(\delta_1,\alpha,\beta,\lambda,\mu) \in \Psi$.
  Then
  $$
      \binom{n}{2}^{1/2} \left( \bm{w}_*(\hat{\delta}_1^{(1)},\theta)^{\tp} \hat{\bm{\delta}}_{1} - \delta_1 \right)
      \indist \mathcal{N}\left( 0, \frac{1}{\bm{1}_K^{\tp} \left\{ \bm{\Sigma}^{(\delta)}(\delta_1(\infty),\theta) \right\}^{-1} \bm{1}_K} \right).
  $$
\end{proposition}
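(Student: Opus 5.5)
The plan is to split the adaptively weighted estimator into an oracle-weighted term, which satisfies a central limit theorem, and a remainder coming from the estimated weights, which is negligible. Write $N = \binom{n}{2}$ and $\bm{u}_{ij} = \{(\Ynetk{k}_{ij} - x_k(\theta))/(y_k(\theta)-x_k(\theta))\}_{k=1}^K$, so that $\hat{\bm{\delta}}_1 = N^{-1}\sum_{i<j}\bm{u}_{ij}$. Since $\theta$ is known and $(\delta_1,\theta)\in\Psi$ gives $y_k - x_k = (1-\alpha-\beta)(1-\lambda-\mu)^{k-1} \neq 0$, each $\bm{u}_{ij}$ is bounded. The vectors $\bm{u}_{ij}$ are independent over $i<j$ conditional on $\Anetk{1}$, with mean $\Anetk{1}_{ij}\bm{1}_K$ and covariance $\bm{\Sigma}^{(\delta)}_{\Anetk{1}_{ij}}(\theta)$. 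Hence $\expect \hat{\bm{\delta}}_1 = \delta_1 \bm{1}_K$ and $\cov(N^{1/2}\hat{\bm{\delta}}_1) = \bm{\Sigma}^{(\delta)}(\delta_1(n),\theta)$. This matrix converges to $\bm{\Sigma}^{(\delta)}(\delta_1(\infty),\theta)$ by Assumption~\ref{assump:delta_seq}, because it is affine in $\delta_1$.

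First, I would prove a multivariate CLT $N^{1/2}(\hat{\bm{\delta}}_1 - \delta_1\bm{1}_K) \indist \mathcal{N}(\bm{0},\bm{\Sigma}^{(\delta)}(\delta_1(\infty),\theta))$. This uses Cramér--Wold and the Lindeberg--Feller CLT for triangular arrays of bounded independent summands; Lindeberg's condition is immediate because the summands are uniformly bounded and the total number $N\to\infty$. I would then fix $\bm{w}_\infty = \bm{w}_*(\delta_1(\infty),\theta)$. Because $\bm{w}_\infty^{\tp}\bm{1}_K = 1$, the oracle term $N^{1/2}(\bm{w}_\infty^{\tp}\hat{\bm{\delta}}_1 - \delta_1) = \bm{w}_\infty^{\tp}N^{1/2}(\hat{\bm{\delta}}_1-\delta_1\bm{1}_K)$ converges to a normal with variance $\bm{w}_\infty^{\tp}\bm{\Sigma}^{(\delta)}\bm{w}_\infty = \{\bm{1}_K^{\tp}(\bm{\Sigma}^{(\delta)})^{-1}\bm{1}_K\}^{-1}$.

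Second, I would handle the estimated weights. Again using $\bm{w}^{\tp}\bm{1}_K=1$,
\[
N^{1/2}\{\bm{w}_*(\hat{\delta}_1^{(1)},\theta)^{\tp}\hat{\bm{\delta}}_1 - \delta_1\} = \bm{w}_*(\hat{\delta}_1^{(1)},\theta)^{\tp} N^{1/2}(\hat{\bm{\delta}}_1 - \delta_1\bm{1}_K).
\]
The second factor is $O_{\mathbb{P}}(1)$ by the first step, so by Slutsky it suffices to show $\bm{w}_*(\hat{\delta}_1^{(1)},\theta) \inprob \bm{w}_\infty$. Since $\hat{\delta}_1^{(1)} - \delta_1(n) = O_{\mathbb{P}}(N^{-1/2})$ and $\delta_1(n)\to\delta_1(\infty)$, we have $\hat{\delta}_1^{(1)}\inprob \delta_1(\infty)$. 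The continuous mapping theorem then gives the claim, provided $\delta_1 \mapsto \bm{w}_*(\delta_1,\theta)$ is continuous at $\delta_1(\infty)$. The estimate $\hat{\delta}_1^{(1)}$ may fall outside $(0,1)$, but $\bm{w}_*$ is a rational function of $\delta_1$ and so is still defined near $\delta_1(\infty)$.

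The main obstacle is this continuity requirement, which amounts to showing that $\bm{\Sigma}^{(\delta)}(\delta_1(\infty),\theta)$ is invertible. Here is the argument I would try first. For $s\in\{0,1\}$, $\bm{\Sigma}^{(\delta)}_s$ is a rescaled covariance of the vector $\Yvec_{12}$ given $\Anetk{1}_{12}=s$. Condition additionally on the hidden chain $\Anetk{1:K}_{12}$: the observation errors are independent across $k$ with variances $\alpha(1-\alpha)$ or $\beta(1-\beta)$, both positive. By the law of total covariance, $\bm{\Sigma}^{(\delta)}_s$ is therefore at least a positive diagonal matrix, so $\bm{\Sigma}^{(\delta)}_s \succ 0$. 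A convex combination of positive definite matrices is positive definite, and the inverse is continuous on that set. Once invertibility is in hand, combining the two steps gives the stated limit.
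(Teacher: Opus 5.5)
Your proposal is correct and follows essentially the same route as the paper: a CLT for $\binom{n}{2}^{1/2}(\hat{\bm{\delta}}_1 - \delta_1\bm{1}_K)$, consistency of $\hat{\delta}_1^{(1)}$ with continuity of the weights, and positive definiteness of $\bm{\Sigma}^{(\delta)}(\delta_1(\infty),\theta)$ via the law of total covariance given the hidden chain, which is exactly the paper's Lemma~\ref{lem:density_pd}. The differences are minor: you close with a direct Slutsky argument on the weights where the paper cites Proposition 3.3 of Lavancier and Rochet (reducing to $\bm{\Sigma}^{(\delta)}(\hat{\delta}_1^{(1)},\theta)\{\bm{\Sigma}^{(\delta)}(\delta_1(\infty),\theta)\}^{-1} \inprob I_K$), and you obtain the CLT from Lindeberg--Feller rather than by splitting node pairs into the two iid classes $\{\Anetk{1}_{ij}=s\}$ as in Lemma~\ref{lem:mstar_clt}, which makes your argument slightly more self-contained.
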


Other than some degenerate cases, typically $\bm{w}_*$ will have non-negative entries which sum to $1$ and decay as $k$ increases, since the edges become progressively less correlated with the initial snapshot.

\subsubsection{Proofs for inference with known \texorpdfstring{$\theta$}{theta}}

We begin with a supporting lemma, a central limit theorem for the empirical moments, averaged over the node pairs.

\begin{lemma} \label{lem:mstar_clt}
  Define $m^*$ as in Section~\ref{subsec:theta_unknown}, and suppose Assumption~\ref{assump:delta_seq} holds.
  Then
  \begin{align} \label{mstar_clt}
    &\binom{n}{2}^{-1/2}\left[ \sum_{i < j} m^*(\Yvec_{ij}) - \binom{n}{2} \left\{ \delta_1 \bm{m}^*_{\theta,1} + (1-\delta_1)\bm{m}^*_{\theta,0} \right\} \right] \nonumber \\
    \indist &\mathcal{N}\left( \bm{0}_{K+4}, \delta_1(\infty) \bm{\Sigma}^*_{\theta,1} + (1 - \delta_1(\infty)) \bm{\Sigma}^*_{\theta,0} \right)
  \end{align}
  as $n \rightarrow \infty$, where
  $$
    \bm{m}^*_{\theta,s} = \expect_{\theta} \left\{ m^*\left( \Yvec_{12}\right) ~\big\vert~ \Anetk{1}_{12}=s \right\} \in \real^{K+4},
  $$
  and
  $$
    \bm{\Sigma}^*_{\theta,s} = \cov_{\theta} \left\{ m^*\left( \Yvec_{12} \right) ~\big\vert~ \Anetk{1}_{12}=s \right\} \in \real^{(K+4) \times (K+4)}.
  $$
\end{lemma}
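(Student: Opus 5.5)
This is a triangular-array CLT for independent, non-identically distributed, bounded random vectors. The plan is to apply the Cramér--Wold device together with the Lindeberg--Feller theorem (Lyapunov's condition suffices).

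Conditional on $\Anetk{1}$, the edge sequences $\{\Yvec_{ij}\}_{i<j}$ are independent across node pairs. This follows because both the error process and the evolution process act independently on each pair. Each $\Yvec_{ij}$ is distributed as $\Yvec_{12}$ given $\Anetk{1}_{12}=\Anetk{1}_{ij}$, so only two distributions occur. Write $N=\binom{n}{2}$ and
$$
\bm{Z}_{ij}=m^*(\Yvec_{ij})-\bm{m}^*_{\theta,\Anetk{1}_{ij}} .
$$
Then $\expect \bm{Z}_{ij}=\bm{0}$, $\cov(\bm{Z}_{ij})=\bm{\Sigma}^*_{\theta,\Anetk{1}_{ij}}$, and the left-hand side of \eqref{mstar_clt} equals exactly $N^{-1/2}\sum_{i<j}\bm{Z}_{ij}$. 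To see this, note that summing the means over pairs gives $N\{\delta_1\bm{m}^*_{\theta,1}+(1-\delta_1)\bm{m}^*_{\theta,0}\}$, because exactly $N\delta_1$ pairs have $\Anetk{1}_{ij}=1$.

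Fix $\bm{a}\in\real^{K+4}$ and put $\xi_{ij}=N^{-1/2}\bm{a}^{\tp}\bm{Z}_{ij}$. The variance of the sum is
$$
\bm{a}^{\tp}\{\delta_1(n)\bm{\Sigma}^*_{\theta,1}+(1-\delta_1(n))\bm{\Sigma}^*_{\theta,0}\}\bm{a},
$$
which converges to $\sigma_{\bm a}^2=\bm{a}^{\tp}\{\delta_1(\infty)\bm{\Sigma}^*_{\theta,1}+(1-\delta_1(\infty))\bm{\Sigma}^*_{\theta,0}\}\bm{a}$ by Assumption~\ref{assump:delta_seq}.

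For the Lindeberg/Lyapunov condition, boundedness does the work. Every coordinate of $m^*$ is either an indicator or a sum of at most $K-2$ indicators, so $\lVert m^*(\Yvec_{ij})\rVert\le B$ for a constant $B$ depending only on $K$. Hence $\lvert\xi_{ij}\rvert\le 2B\lVert\bm a\rVert N^{-1/2}$, and
$$
\sum_{i<j}\expect\lvert\xi_{ij}\rvert^3\le (2B\lVert\bm a\rVert)^3N^{-1/2}\to0 .
$$
Two cases arise. If $\sigma_{\bm a}^2>0$, Lyapunov's CLT gives $\sum\xi_{ij}\indist\mathcal N(0,\sigma_{\bm a}^2)$. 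If $\sigma_{\bm a}^2=0$, the variance tends to zero and Chebyshev gives convergence to $0=\mathcal N(0,0)$. Cramér--Wold then yields \eqref{mstar_clt}.

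The only step needing care is the conditional independence across node pairs together with the exact centering identity. Here one has to recognize that $\Anetk{1}$ is treated as fixed, so no randomness in the initial network enters. One must also check that the evolution and error mechanisms are specified independently over $i<j$, which the model in Section~\ref{sec:model} provides. The rest is routine.
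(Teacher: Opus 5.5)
Your proof is correct, but it is organized differently from the paper's.

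The paper splits the node pairs into $\mathcal{E}_s=\{(i,j):\Anetk{1}_{ij}=s\}$, $s=0,1$. Within each class the vectors $m^*(\Yvec_{ij})$ are iid with mean $\bm{m}^*_{\theta,s}$ and covariance $\bm{\Sigma}^*_{\theta,s}$, so the classical iid multivariate CLT applies to each class sum normalized by $N_s^{1/2}$. This is where $\delta_1(\infty)\in(0,1)$ is used, to guarantee $\lvert\mathcal{E}_s\rvert\to\infty$. Each class sum is then rescaled to the $\binom{n}{2}^{-1/2}$ normalization by the factor $\{\delta_1^s(1-\delta_1)^{1-s}\}^{1/2}\to\{\delta_1(\infty)^s(1-\delta_1(\infty))^{1-s}\}^{1/2}$, and the two independent limits are added.

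You instead treat all $\binom{n}{2}$ pairs as one triangular array of independent, two-type, uniformly bounded vectors. You then apply Cram\'er--Wold plus Lyapunov, with the boundedness of $m^*$ making the Lyapunov ratio $O(\binom{n}{2}^{-1/2})$.

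The paper's route has some advantages: it needs only the textbook iid CLT, it handles a singular limiting covariance automatically, and it makes the mixture form of the limit transparent. Its costs are that both classes must grow, and the final combination step rests on independence of the two class sums (which gives joint convergence), not on Slutsky alone as the paper states.

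Your route avoids the rescaling and combination steps entirely and reads the limiting covariance directly off $\delta_1(n)\to\delta_1(\infty)$. It would also go through unchanged if $\delta_1(\infty)\in\{0,1\}$, or if the error and evolution parameters varied across pairs (one of the extensions the authors mention), as long as moments stay bounded and the averaged covariance converges. The price is invoking Lindeberg--Feller and treating degenerate directions separately, which you do correctly via Chebyshev.
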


\begin{proof}
  For $s=0,1$, define
  $$
    N_s = N_s(n) = \binom{n}{2} \delta_1^s(1-\delta_1)^{1-s},
  $$
  and
  $$
    \mathcal{E}_s = \left\{ (i,j) : \Anetk{1}_{ij}=s \right\}.
  $$
  By assumption, $\lvert \mathcal{E}_s \rvert > 0$, and $\lvert \mathcal{E}_s \rvert \geq \binom{n}{2} \min\{ \delta_1(n), 1 - \delta_1(n) \} \rightarrow \infty$ as $n \rightarrow \infty$, since
  $$
    \min\{ \delta_1(n), 1 - \delta_1(n) \} \rightarrow \min\{ \delta_1(\infty), 1 - \delta_1(\infty) \} > 0.
  $$

  Then, for $s=0,1$, and conditional on $\Anetk{1}$,
  $$
    \{ m^*(\Yvec_{ij}) \}_{(i,j) \in \mathcal{E}_s}
  $$
  is a set of iid random vectors with mean $\bm{m}^*_{\theta,s}$ and covariance $\bm{\Sigma}^*_{\theta,s}$. By multivariate CLT,
  \begin{equation*} 
    N_s^{-1/2} \left\{ \sum_{(i,j) \in \mathcal{E}_s} m^*(\Yvec_{ij}) - N_s \bm{m}^*_{\theta,s} \right\}
    \indist \mathcal{N}\left( \bm{0}_{K+4}, \bm{\Sigma}^*_{\theta,s} \right).
  \end{equation*}
  Expanding, we get
  $$
    \left\{ \binom{n}{2} \delta_1^s(1-\delta_1)^{1-s} \right\}^{-1/2} \left\{ \sum_{(i,j) \in \mathcal{E}_s} m^*(\Yvec_{ij}) - \binom{n}{2} \delta_1^s(1-\delta_1)^{1-s} \bm{m}^*_{\theta,s} \right\}
    \indist \mathcal{N}\left( \bm{0}_{K+4}, \bm{\Sigma}^*_{\theta,s} \right).
  $$
  Multiplying by
  $$
    \left\{ \delta_1^s(1-\delta_1)^{1-s} \right\} \rightarrow \left\{ (\delta_1(\infty))^s(1-\delta_1(\infty))^{1-s} \right\}^{1/2}
  $$
  gives
  $$
    \hspace{-0.5cm}
    \binom{n}{2}^{-1/2} \left\{ \sum_{(i,j) \in \mathcal{E}_s} m^*(\Yvec_{ij}) - \binom{n}{2} \delta_1^s(1-\delta_1)^{1-s} \bm{m}^*_{\theta,s} \right\}
    \indist \mathcal{N}\left( \bm{0}_{K+4}, (\delta_1(\infty))^s(1-\delta_1(\infty))^{1-s} \bm{\Sigma}^*_{\theta,s} \right).
  $$
  by Slutsky's Theorem.
  Finally, adding the two independent convergent sequences gives the desired result \eqref{mstar_clt}, again by Slutsky's Theorem.
\end{proof}

\begin{remark} \label{rem:delta_asymptotic}
  For a given $K$, applying the lemma with $\tilde{K} = K+3$, the asymptotic distribution of the first $K$ coordinates (the local densities $(D_1,\ldots,D_K)$ used in Section~\ref{subsec:theta_known}) is multivariate normal.
  \begin{equation*}
    \binom{n}{2}^{1/2} \begin{pmatrix}
      \binom{n}{2}^{-1} \sum_{i < j} \Ynetk{1}_{ij} - \delta_1 y_1(\theta) - (1-\delta_1) x_1(\theta) \\
      \vdots \\
      \binom{n}{2}^{-1} \sum_{i < j} \Ynetk{K}_{ij} - \delta_1 y_K(\theta) - (1-\delta_1) x_K(\theta)
  \end{pmatrix} \indist \mathcal{N}\left( \bm{0}_{K}, \Sigma^{(D)}(\delta_1(\infty),\theta) \right).
  \end{equation*}
  where $\Sigma^{(D)}(\delta_1(\infty),\theta)$ is the top-left $K \times K$ block of $\delta_1(\infty) \bm{\Sigma}^*_{\theta,1} + (1 - \delta_1(\infty)) \bm{\Sigma}^*_{\theta,0}$.
\end{remark}

We can also estabish conditions on the error parameters such that $\Sigma^{(D)}(\delta_1(\infty),\theta)$ is positive definite.

\begin{lemma} \label{lem:density_pd}
  Suppose $\delta_1(\infty), \alpha, \beta \in (0,1)$. Then $\Sigma^{(D)}(\delta_1(\infty),\theta)$ is positive definite.
\end{lemma}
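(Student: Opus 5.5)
The plan is to reduce positive definiteness of the mixture $\Sigma^{(D)}(\delta_1(\infty),\theta)$ to positive definiteness of each conditional covariance, and then get that from the edgewise independence of the bit-flipping error given the underlying chain. By Lemma~\ref{lem:mstar_clt} and Remark~\ref{rem:delta_asymptotic}, $\Sigma^{(D)}(\delta_1(\infty),\theta) = \delta_1(\infty)\bm{\Sigma}^{(D)}_1 + (1-\delta_1(\infty))\bm{\Sigma}^{(D)}_0$. Here
\[
\bm{\Sigma}^{(D)}_s = \cov_\theta(\Yvec_{12} ~\vert~ \Anetk{1}_{12}=s)
\]
is the top-left $K\times K$ block of $\bm{\Sigma}^*_{\theta,s}$, because the local density coordinates $D_k(\Yvec_{12}) = \Ynetk{k}_{12}$ stack into $\Yvec_{12}$. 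Each $\bm{\Sigma}^{(D)}_s$ is a covariance matrix and hence positive semidefinite. The weights $\delta_1(\infty)$ and $1-\delta_1(\infty)$ are strictly positive. So it suffices to show that $\bm{a}^{\tp}\bm{\Sigma}^{(D)}_s\bm{a} > 0$ for every nonzero $\bm{a}\in\real^K$ and each $s\in\{0,1\}$; in fact one $s$ would already be enough.

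Fix $s$ and $\bm{a} \neq \bm{0}$, and condition throughout on $\Anetk{1}_{12}=s$. I would apply the law of total variance, conditioning on the entire underlying edge path $\vec{\bm{A}}_{12} = (\Anetk{1}_{12},\ldots,\Anetk{K}_{12})$:
\[
\var(\bm{a}^{\tp}\Yvec_{12}) \geq \expect\left\{ \var(\bm{a}^{\tp}\Yvec_{12} ~\vert~ \vec{\bm{A}}_{12}) \right\}.
\]
Under the error model, given the path, the $\Ynetk{k}_{12}$ are independent Bernoulli variables. Their success probability is $\alpha$ if $\Anetk{k}_{12}=0$ and $1-\beta$ if $\Anetk{k}_{12}=1$. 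Therefore the conditional variance equals $\sum_k a_k^2 \var(\Ynetk{k}_{12} ~\vert~ \Anetk{k}_{12})$, and each term is at least $a_k^2 \min\{\alpha(1-\alpha),\beta(1-\beta)\}$. This gives
\[
\bm{a}^{\tp}\bm{\Sigma}^{(D)}_s\bm{a} \geq \min\{\alpha(1-\alpha),\beta(1-\beta)\}\,\lVert \bm{a}\rVert_2^2 > 0,
\]
using $\alpha,\beta\in(0,1)$.

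Combining the two steps yields $\lambda_{\min}\{\Sigma^{(D)}(\delta_1(\infty),\theta)\} \geq \min\{\alpha(1-\alpha),\beta(1-\beta)\} > 0$, with no restriction needed on $\lambda$ or $\mu$. I do not expect a real obstacle. The only point needing care is the first step: confirming that $\Sigma^{(D)}$ is the mixture of conditional covariances, rather than the unconditional covariance of a randomly drawn edge. That holds because Lemma~\ref{lem:mstar_clt} treats $\Anetk{1}$ as fixed, so no between-group mean term appears; and even if such a term appeared, it would be positive semidefinite and would only strengthen the bound.
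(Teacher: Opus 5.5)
Your proposal is correct and follows essentially the same route as the paper. Both arguments apply the law of total variance, conditioning on the underlying edge path, to bound each conditional block below by $\min\{\alpha(1-\alpha),\beta(1-\beta)\}\lVert\bm{a}\rVert_2^2$, and then conclude because a convex combination of positive definite matrices with weights $\delta_1(\infty),1-\delta_1(\infty)>0$ is positive definite.
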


\begin{proof}
  Fix $s \in \{0,1\}$ and node pair $(1,2)$ without loss of generality.
  Fix an arbitrary vector $\bm{v} \in \real^K$ and denote
  $$
    \vec{\bm{Y}}_{12} = (\Ynetk{1}_{12},\ldots,\Ynetk{K}_{12})^{\tp}, \quad   \vec{\bm{A}}_{12} = (\Anetk{1}_{12},\ldots,\Anetk{K}_{12})^{\tp}.
  $$
  Note that we can write each entry as
  $$
    \Ynetk{k}_{12} = \Anetk{k}_{12} + E_{k}(\Anetk{k}_{12}),
  $$
  where
  $$
    E_{k}(\Anetk{k}_{12}) \sim \begin{cases} \operatorname{Bernoulli}(\alpha) \quad &\Anetk{k}_{12}=0, \\
    -\operatorname{Bernoulli}(\beta) \quad &\Anetk{k}_{12}=1. \end{cases}
  $$
  and the $E_k$ variables are mutually independent over $k$.
  \begin{align*}
    \var(\bm{v}^{\tp} \vec{\bm{Y}}_{12} ~\vert~ \Anetk{1}_{12}=s) &\geq \expect\{ \var(\bm{v}^{\tp} \vec{\bm{Y}}_{12} ~\vert~ \vec{\bm{A}}_{12},\Anetk{1}_{12}=s) \} \\
    &= \expect\{ \var(\bm{v}^{\tp} \{ \Anetk{k}_{12} + E_{k}(\Anetk{k}_{12})\} ~\vert~ \vec{\bm{A}}_{12},\Anetk{1}_{12}=s) \} \\
    &= \expect\{ \var(\bm{v}^{\tp} E_{k}(\Anetk{k}_{12}) ~\vert~ \vec{\bm{A}}_{12},\Anetk{1}_{12}=s) \} \\
    &\geq \min\{ \alpha(1-\alpha), \beta(1-\beta) \} \bm{v}^{\tp}\bm{v}.
  \end{align*}
  This variance is zero if and only if $\bm{v} = \bm{0}_K$, thus for each $s$ the top-left $K \times K$ block of $\bm{\Sigma}^*_{\theta,s}$ is positive definite.
  The result is complete since the convex combination of positive definite matrices is positive definite.
\end{proof}

In the following lemma, we establish the joint asymptotic normality of the local density estimators used in Section~\ref{subsec:theta_known}.

\begin{lemma} \label{lem:delta_asymptotic}
  Under Assumption~\ref{assump:delta_seq},
  $$
    \alpha + \beta \neq 1, \quad \lambda + \mu \neq 1,
  $$
  we have
  \begin{equation*}
    \binom{n}{2}^{1/2} \left( \hat{\bm{\delta}}_{1,K} - \delta_1 \bm{1}_K \right) \indist \mathcal{N}\left( \bm{0}_{K}, \Sigma^{(\delta)}(\delta_1(\infty),\theta) \right).
  \end{equation*}
  Moreover,
  $$
    \Sigma^{(\delta)}(\delta_1(\infty),\theta) = S_K(\theta)^{-1} \Sigma^{(D)}(\delta_1(\infty),\theta) S_K(\theta)^{-1},
  $$
  where $S_K(\theta)$ is the diagonal matrix
  $$
    \begin{pmatrix}
      y_1(\theta) - x_1(\theta) & & 0 \\
      & \ddots & \\
      0 & & y_K(\theta) - x_K(\theta)
    \end{pmatrix}.
  $$
\end{lemma}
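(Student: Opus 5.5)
The approach is to write $\hat{\bm{\delta}}_{1,K}$ as an affine transformation of the vector of empirical local densities and then apply Remark~\ref{rem:delta_asymptotic} together with the continuous mapping theorem. By \eqref{adjusted_density},
\[
  \hat{\delta}_1^{(k)} - \delta_1 = \frac{\binom{n}{2}^{-1} \sum_{i<j} \Ynetk{k}_{ij} - \delta_1 y_k(\theta) - (1-\delta_1) x_k(\theta)}{y_k(\theta) - x_k(\theta)},
\]
since $\delta_1\{y_k(\theta)-x_k(\theta)\} + x_k(\theta) = \delta_1 y_k(\theta) + (1-\delta_1)x_k(\theta)$. Stacking over $k$ therefore gives the exact identity
\[
  \binom{n}{2}^{1/2}\left(\hat{\bm{\delta}}_{1,K} - \delta_1 \bm{1}_K\right) = S_K(\theta)^{-1} \bm{Z}_n ,
\]
where $\bm{Z}_n$ is the centered and scaled vector appearing in Remark~\ref{rem:delta_asymptotic}. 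That remark, which is Lemma~\ref{lem:mstar_clt} restricted to the $(D_1,\ldots,D_K)$ coordinates and holds under Assumption~\ref{assump:delta_seq}, gives $\bm{Z}_n \indist \mathcal{N}(\bm{0}_K, \Sigma^{(D)}(\delta_1(\infty),\theta))$. The matrix $S_K(\theta)^{-1}$ is fixed and does not depend on $n$, so the linear map is continuous. The continuous mapping theorem then yields the limit $\mathcal{N}(\bm{0}_K, S_K^{-1}\Sigma^{(D)}S_K^{-1})$, using that $S_K$ is symmetric.

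What remains is to check that $S_K(\theta)$ is invertible, i.e.\ that $y_k(\theta) \neq x_k(\theta)$ for every $k$. This is where $\alpha+\beta\neq 1$ and $\lambda+\mu\neq1$ are used, and it is the only step with real content. Using the observation-operator representation of Lemma~\ref{lem:gmm_computation} with the $(\tau,\gamma)$ parametrization of $P_s$, I would compute
\[
  \prob(\Anetk{k}_{12}=1 \mid \Anetk{1}_{12}=s) = \tau + \gamma^{k-1}(s-\tau),
\]
so the difference between the two starting states is $\gamma^{k-1}$, with $\gamma = 1-\lambda-\mu$. Passing through the error channel, which maps $a\mapsto \alpha + (1-\alpha-\beta)a$, gives
\[
  y_k(\theta) - x_k(\theta) = (1-\alpha-\beta)(1-\lambda-\mu)^{k-1}.
\]
This is nonzero under the stated conditions, and it equals $1-\alpha-\beta$ when $k=1$. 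Hence $S_K(\theta)$ is a diagonal matrix with nonzero diagonal entries, and it is invertible.

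I would also note that the expression for $\Sigma^{(\delta)}(\delta_1(\infty),\theta)$ agrees with the definition via $\bm{\Sigma}^{(\delta)}_s$. Scaling the coordinates of the edge sequence by $(y_k-x_k)^{-1}$ and shifting them transforms the conditional covariances $\bm{\Sigma}^*_{\theta,s}$ (restricted to the density block) into $S_K^{-1}(\cdot)S_K^{-1}$. Mixing over $s$ with weights $\delta_1(\infty)$ and $1-\delta_1(\infty)$ commutes with this linear conjugation. The main obstacle is only the bookkeeping in the closed form for $y_k-x_k$; the remainder of the argument is Slutsky's theorem and continuous mapping.
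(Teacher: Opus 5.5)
Your proposal is correct and follows essentially the same route as the paper. Both arguments write $\binom{n}{2}^{1/2}(\hat{\bm{\delta}}_{1,K}-\delta_1\bm{1}_K)$ as the fixed linear map $S_K(\theta)^{-1}$ applied to the centered local densities of Remark~\ref{rem:delta_asymptotic}, and use the closed form $y_k(\theta)-x_k(\theta)=(1-\alpha-\beta)(1-\lambda-\mu)^{k-1}$ to show that this map is well defined. Your exact affine identity followed by continuous mapping is a slightly cleaner justification than the paper's appeal to the gradient $\nabla g = S_K(\theta)^{-1}$.
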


\begin{proof}
  Starting from Remark~\ref{rem:delta_asymptotic}, we apply the linear function
  $$
    g(z_1,\ldots,z_K) = \begin{pmatrix}
      \frac{z_1 - x_1(\theta)}{y_1(\theta) - x_1(\theta)} \\
      \vdots \\
      \frac{z_K - x_K(\theta)}{y_K(\theta) - x_K(\theta)}
  \end{pmatrix}.
  $$
  Some algebra shows that under the reparameterized error and evolution models, we can write
  \begin{align*}
    x_k(\theta) &= \alpha(1-\tau) + (1-\beta)\tau - (1 - \beta - \alpha)\tau \gamma^{k-1}, \\
    y_k(\theta) &= \alpha(1-\tau) + (1-\beta)\tau + (1 - \beta - \alpha)(1 -\tau) \gamma^{k-1}
  \end{align*}
  for $k=1,\ldots,K$.
  $$
    y_k(\theta) - x_k(\theta) = (1 - \beta - \alpha)\gamma^{k-1} =  (1 - \beta - \alpha)(1 - \lambda - \mu)^{k-1}.
  $$
  Thus   $\alpha + \beta \neq 1$ and $\lambda + \mu \neq 1$ is a sufficient condition for $y_k(\theta) - x_k(\theta) \neq 0$ for all $k=1,\ldots,K$, so this linear function is well-defined.
  The result follows by definition of $\hat{\bm{\delta}}_{1,K}$, and since
  $$
    S_K(\theta)^{-1} = \nabla_{\bm{z}} g(z_1,\ldots,z_K).
  $$
\end{proof}

We are now ready to prove the main result, Proposition~\ref{prop:delta_weighted}.

\begin{proof}
  By \citesupp{lavancier16generalB}, Proposition 3.3, it is sufficient to show that
  \begin{equation} \label{lavancier_sufficient}
    \bm{\Sigma}^{(\delta)}(\hat{\delta}_1^{(1)},\theta) \left\{ \bm{\Sigma}^{(\delta)}(\delta_1(\infty),\theta) \right\}^{-1} \inprob I_K.
  \end{equation}
  Expand these definitions as
  $$
    \left\{ \hat{\delta}_1^{(1)} \bm{\Sigma}^{(\delta)}_{\theta,1} + (1 - \hat{\delta}_1^{(1)}) \bm{\Sigma}^{(\delta)}_{\theta,0} \right\} \left\{ \delta_1(\infty) \bm{\Sigma}^{(\delta)}_{\theta,1} + (1 - \delta_1(\infty)) \bm{\Sigma}^{(\delta)}_{\theta,0} \right\}^{-1}
  $$
  By Slutsky's Theorem and consistency of $\hat{\delta}_1^{(1)}$ it follows that
  $$
    \hat{\delta}_1^{(1)} \bm{\Sigma}^{(\delta)}_{\theta,1} + (1 - \hat{\delta}_1^{(1)}) \bm{\Sigma}^{(\delta)}_{\theta,0}
    \inprob \delta_1(\infty) \bm{\Sigma}^{(\delta)}_{\theta,1} + (1 - \delta_1(\infty)) \bm{\Sigma}^{(\delta)}_{\theta,0}.
  $$
  Then \eqref{lavancier_sufficient} follows since
  $$
    \delta_1(\infty) \bm{\Sigma}^{(\delta)}_{\theta,1} + (1 - \delta_1(\infty)) \bm{\Sigma}^{(\delta)}_{\theta,0}
  $$
  is positive definite, so that right multiplication by its inverse is a well-defined continuous mapping.
\end{proof}

\subsection{Inference with unknown \texorpdfstring{$\theta$}{theta}}

In this section denote the true parameter by
$$
  \psi_0 = (\delta_1,\theta) = (\delta_1,\alpha,\beta,\lambda,\mu) \in \bar{\Psi},
$$
where recall we have modified the parameter space to be compact,
$$
  \bar{\Psi} = \bar{\Psi}(\xi) = \{ \psi \in [\xi,1-\xi]^5 : \psi_2 + \psi_3 \leq 1-\xi,~\psi_4+\psi_5 \leq 1-\xi \}
$$
for a small constant $\xi > 0$.
Also recall that we implicitly allow $\delta_1 = \delta_1(n)$ to depend on $n$, and write the limiting parameter as
$$
  \psi_0(\infty) = (\delta_1(\infty),\alpha,\beta,\lambda,\mu) \in \bar{\Psi}.
$$
Throughout, we will make Assumption~\ref{assump:delta_seq_compact} from Section~\ref{subsec:theta_unknown} on the sequence of true parameters.

We will analyze the GMM estimator using a generic mapping $m$, later these general results can be applied to either the full set of moments $m^*$ or the initializing set $m^{(\mathrm{init})}$.
Define the GMM objective
$$
  f(\psi) = \{\hat{\bm{m}} - \bm{m}(\psi)\}^{\tp} \bm{W} \{\hat{\bm{m}} - \bm{m}(\psi)\}
$$
where $\hat{\bm{m}}$ are the empirical edge-averaged moments, and $\bm{m}(\psi) = \bm{m}(\delta_1,\theta)$ are the population counterparts for a generic mapping $m$ from $\{0,1\}^K \rightarrow \real^p$.
Note that $f$ implicitly depends on the data through $\hat{\bm{m}}$ as well as $n$ through $\delta_1(n)$.

We make an assumption analogous to Assumption~\ref{assump:global_identification} for this generic mapping.
\begin{assumption} \label{assump:global_identification_generic}
The mapping $\bm{m}(\psi)$ is an injective function on $\Psi$.
\end{assumption}

To make the result applicable to adaptive weighting, we also allow a stochastic sequence of weighting matrices $\bm{W}_n$ which depend on $n$, subject to the following assumption.

\begin{assumption} \label{assump:w_conditions}
  The sequence of weighting matrices $\{\bm{W}\}_{n=1}^{\infty}$ satisfies
  \begin{align}
    &\bm{W}_n \inprob \bm{W}_{\infty} \in \real^{p \times p} \label{w_condition1} \\
    &0 < c' \leq \lambda_{\mathrm{min}}(\bm{W}_n) \leq \lambda_{\mathrm{max}}(\bm{W}_n) \leq C' < \infty \label{w_condition2}
  \end{align}
  uniformly over $n$, where $c'$ and $C'$ are constants which are free of $n$, but may depend on $\xi$.
\end{assumption}

Note that throughout the section, we omit the subscript $n$ on $\bm{W}$, instead adding the subscript $\infty$ to the probability limit of the weighting matrices.

\begin{remark} \label{rem:m_polynomial}
  By Lemma~\ref{lem:gmm_computation}, all the entries of $\bm{m}(\psi)$ are polynomials of the entries of $\psi$.
  Viewing one entry as the argument, this is a polynomial with bounded coefficients (in the unit interval) over a bounded domain (the unit interval).
  Thus $\bm{m}$ is smooth ($\infty$-times continuously differentiable), and all the entries of all of its derivatives are uniformly bounded (over $\psi$) by a constant which may depend on $K$.
\end{remark}

Denote the derivative matrix
$$
  \bm{Dm}(\psi) = \pd{\bm{m}(\psi)}{\psi} \in \real^{p \times 5},
$$
and second derivative tensor $\mathcal{D}^2\bm{m}(\psi) \in \real^{5 \times 5 \times p}$ with slices
$$
  \pd{^2 \bm{m}_{\ell}(\psi)}{\psi \partial \psi^{\tp}} \in \real^{5 \times 5}, \quad \ell=1,\ldots,p.
$$

$$
  \pd{f(\psi)}{\psi} = -2 \bm{Dm}(\psi)^{\tp} \bm{W} \left\{ \hat{\bm{m}} - \bm{m}(\psi) \right\} \in \real^5
$$
$$
  \pd{^2f(\psi)}{\psi \partial \psi^{\tp}} = 2 \bm{Dm}(\psi)^{\tp} \bm{W} \bm{Dm}(\psi) - 2 \mathcal{D}^2\bm{m}(\psi) \bar{\times}_3 \left\{ \bm{W} (\hat{\bm{m}} - \bm{m}(\psi)) \right\} \in \real^{5 \times 5}.
$$

We define the estimator $\hat{\psi}$ as the minimizer of $f(\psi)$.
By differentiability of $f$, this minimizer will solve the system of non-linear equations
\begin{equation} \label{first_order_condition}
 \bm{0}_5 = \bm{Dm}(\hat{\psi})^{\tp} \bm{W} \left\{ \hat{\bm{m}} - \bm{m}(\hat{\psi}) \right\}.
\end{equation}

We first establish consistency of $\hat{\psi}$ for $\psi_0(\infty)$.

\begin{proposition} \label{prop:theta_consistent}
  Under Assumptions~\ref{assump:delta_seq_compact}, \ref{assump:global_identification_generic}, and \ref{assump:w_conditions},
  $$
    \hat{\psi} \inprob \psi_0(\infty)
  $$
\end{proposition}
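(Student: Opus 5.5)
The proof will follow the standard extremum-estimator consistency argument (Newey and McFadden, Theorem 2.1) on the compact set $\bar{\Psi}$. Introduce the limiting objective
$$
  f_\infty(\psi) = \{\bm{m}(\psi_0(\infty)) - \bm{m}(\psi)\}^{\tp} \bm{W}_{\infty} \{\bm{m}(\psi_0(\infty)) - \bm{m}(\psi)\}.
$$
It then suffices to verify three things: (i) $f_\infty$ is continuous on the compact set $\bar{\Psi}$; (ii) $f_\infty$ is uniquely minimized at $\psi_0(\infty)$; (iii) $\sup_{\psi \in \bar{\Psi}} \lvert f(\psi) - f_\infty(\psi) \rvert \inprob 0$.

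Continuity in (i) is immediate from Remark~\ref{rem:m_polynomial}, since $\bm{m}$ is polynomial in $\psi$.

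For (ii), note that $\bm{W}_\infty$ inherits the eigenvalue bounds \eqref{w_condition2}: these constraints define a closed set, and $\bm{W}_n \inprob \bm{W}_\infty$. In particular $\lambda_{\min}(\bm{W}_\infty) \geq c' > 0$. Hence
$$
  f_\infty(\psi) \geq c' \lVert \bm{m}(\psi_0(\infty)) - \bm{m}(\psi) \rVert_2^2,
$$
which is zero if and only if $\bm{m}(\psi) = \bm{m}(\psi_0(\infty))$. By Assumption~\ref{assump:global_identification_generic}, this happens only when $\psi = \psi_0(\infty)$. Here we use that $\psi_0(\infty) \in \bar{\Psi} \subset \Psi$, which follows from Assumption~\ref{assump:delta_seq_compact} and closedness of $\bar{\Psi}$.

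For (iii), first show $\hat{\bm{m}} \inprob \bm{m}(\psi_0(\infty))$. Lemma~\ref{lem:mstar_clt}, whose proof applies verbatim to a generic bounded $m$, gives
$$
  \hat{\bm{m}} - \bm{m}(\psi_0) = O_{\prob}\left(\binom{n}{2}^{-1/2}\right).
$$
A Chebyshev bound would also suffice, since each coordinate of $m$ is bounded by $K$, so conditional variances are $O(n^{-2})$. Moreover, $\bm{m}(\psi_0) = \delta_1(n) \bm{m}_1(\theta) + (1-\delta_1(n))\bm{m}_0(\theta)$ converges to $\bm{m}(\psi_0(\infty))$ because $\delta_1(n) \to \delta_1(\infty)$.

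Next, write $\bm{r}(\psi) = \hat{\bm{m}} - \bm{m}(\psi)$ and $\bm{r}_\infty(\psi) = \bm{m}(\psi_0(\infty)) - \bm{m}(\psi)$. Then
$$
  f - f_\infty = (\bm{r} - \bm{r}_\infty)^{\tp} \bm{W} (\bm{r} + \bm{r}_\infty) + \bm{r}_\infty^{\tp}(\bm{W} - \bm{W}_\infty)\bm{r}_\infty.
$$
The difference $\bm{r} - \bm{r}_\infty = \hat{\bm{m}} - \bm{m}(\psi_0(\infty))$ does not depend on $\psi$. Both $\bm{r}$ and $\bm{r}_\infty$ are uniformly bounded over $\bar{\Psi}$, since all moments lie in $[0,K]$. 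Finally, $\lVert \bm{W} \rVert \leq C'$ and $\bm{W} - \bm{W}_\infty \inprob 0$. Together these give the uniform convergence, with no stochastic equicontinuity argument needed because the only $\psi$-dependence is through the deterministic $\bm{m}(\psi)$.

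With (i)--(iii) in hand, the standard conclusion follows. For any open neighbourhood $U$ of $\psi_0(\infty)$, set $\epsilon = \inf_{\bar{\Psi}\setminus U} f_\infty > 0$; this is positive by compactness, continuity and unique minimization. Then $f(\hat{\psi}) \leq f(\psi_0(\infty))$, together with uniform convergence, shows that $\hat{\psi} \in U$ with probability tending to one.

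The main subtlety, rather than a true obstacle, is the triangular-array aspect: the truth $\psi_0 = \psi_0(n)$ moves with $n$ while the target is $\psi_0(\infty)$. This is handled entirely by centering the limiting objective at $\bm{m}(\psi_0(\infty))$ and using the convergence $\delta_1(n) \to \delta_1(\infty)$.
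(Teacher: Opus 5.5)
Your proposal is correct and follows the paper's proof essentially step for step. Both apply Newey--McFadden Theorem 2.1 to the limiting objective centered at $\bm{m}(\psi_0(\infty))$ with weight $\bm{W}_\infty$, getting continuity from Remark~\ref{rem:m_polynomial}, unique minimization from Assumption~\ref{assump:global_identification_generic}, and uniform convergence from a Cauchy--Schwarz bound using $\bm{W} \inprob \bm{W}_\infty$ and $\hat{\bm{m}} \inprob \bm{m}(\psi_0(\infty))$. Your version is slightly more explicit than the paper's on two points the paper leaves implicit: that $\bm{W}_\infty$ inherits the eigenvalue bounds, and that $\bm{m}(\psi_0(n)) \to \bm{m}(\psi_0(\infty))$ because $\delta_1(n) \to \delta_1(\infty)$.
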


\begin{proof}
  This result follows by an application of \citesupp{newey94largeB}, Theorem 2.1.
  Define the population analog of the GMM objective function,
  $$
    f_0(\psi) = (\bm{m}(\psi_0(\infty)) - \bm{m}(\psi))^{\tp} \bm{W}_{\infty} (\bm{m}(\psi_0(\infty)) - \bm{m}(\psi)).
  $$
  Note that $f_0(\psi) \geq 0$ for any $\psi \in \bar{\Psi}$. By Assumption~\ref{assump:global_identification}, and since $\bm{W}_{\infty}$ is positive-definite, $f_0(\psi)$ is uniquely minimized at $\psi = \psi_0(\infty)$.
  We also have that $\Psi$ is compact by construction, and $f_0(\psi)$ is continuous by Remark~\ref{rem:m_polynomial}.

  All that remains is to show that $f(\psi)$ converges to $f_0(\psi)$ uniformly over $\Psi$.
  \begin{align*}
    \lvert f(\psi) - f_0(\psi) \rvert &= \left\lvert \hat{\bm{m}}^{\tp} \bm{W} \hat{\bm{m}} - \{\bm{m}(\psi_0(\infty))\}^{\tp}\bm{W}_{\infty} \bm{m}(\psi_0(\infty)) - 2\bm{m}(\psi)^{\tp} \bm{W} \hat{\bm{m}} + 2\bm{m}(\psi)^{\tp} \bm{W}_{\infty} \bm{m}(\psi_0^{\infty}) \right\rvert \\
    &\leq \lvert \hat{\bm{m}}^{\tp} ( \bm{W} - \bm{W}_{\infty} ) \hat{\bm{m}} \rvert + \lvert \hat{\bm{m}}^{\tp} \bm{W}_{\infty} \hat{\bm{m}} - \{\bm{m}(\psi_0(\infty))\}^{\tp}\bm{W}_{\infty} \bm{m}(\psi_0(\infty)) \rvert + \cdots \\
    &\cdots + \lvert 2 \bm{m}(\psi)^{\tp} (\bm{W} - \bm{W}_{\infty}) \hat{\bm{m}} \rvert + \lvert 2 \bm{m}(\psi)^{\tp} \bm{W}_{\infty} \{ \hat{\bm{m}} - \bm{m}(\psi_0(\infty)) \} \rvert.
  \end{align*}
  Note that
  $$
    \sup_{\psi \in \bar{\Psi}} \lVert \bm{m}(\psi) \rVert_2 \leq C < \infty
  $$
  by Remark~\ref{rem:m_polynomial}, where $C$ is a constant that is free of $\psi$.
  Thus, by Cauchy-Schwarz inequality,
  \begin{align*}
    \sup_{\psi \in \bar{\Psi}} \lvert f(\psi) - f_0(\psi) \rvert &\leq \lvert \hat{\bm{m}}^{\tp} ( \bm{W} - \bm{W}_{\infty} ) \hat{\bm{m}} \rvert + \lvert \hat{\bm{m}}^{\tp} \bm{W}_{\infty} \hat{\bm{m}} - \{\bm{m}(\psi_0(\infty))\}^{\tp}\bm{W}_{\infty} \bm{m}(\psi_0(\infty)) \rvert + \cdots \\
    &\cdots + 2C \lVert (\bm{W} - \bm{W}_{\infty}) \hat{\bm{m}} \rVert_2 + 2C \lVert \bm{W}_{\infty} \{ \hat{\bm{m}} - \bm{m}(\psi_0(\infty)) \} \rVert_2.
  \end{align*}
  The first and third terms converge in probability to zero by \eqref{w_condition1}, and since $\lVert \hat{\bm{m}} \rVert_2$ is (a.s.) bounded.
  The second and fourth terms converge to zero in probability by Lemma~\ref{lem:mstar_clt}, and \eqref{w_condition2}.
  Thus, we verify all the conditions of \citesupp{newey94largeB}, Theorem 2.1, and conclude $\hat{\psi} \inprob \psi_0(\infty)$.
\end{proof}

The following central limit theorem will be used to justify the asymptotic normality of $\hat{\psi}$.

\begin{lemma} \label{lem:gmm_clt}
  Under Assumptions~\ref{assump:delta_seq_compact} and \ref{assump:w_conditions},
  $$
    \binom{n}{2}^{1/2} \bm{Dm}(\psi_0)^{\tp} \bm{W} \left\{ \hat{\bm{m}} - \bm{m}(\psi_0) \right\}
    \indist \mathcal{N}\left( \bm{0}_5~,~ \widetilde{\bm{\Sigma}}^{(\psi)}(\psi_0(\infty)) \right),
  $$
  where $\psi_0(\infty) = (\delta_1(\infty),\theta)$
  $$
    \widetilde{\bm{\Sigma}}^{(\psi)}(\psi_0(\infty))= \bm{Dm}(\psi_0(\infty))^{\tp} \bm{W}_{\infty} \left\{ \delta_1(\infty) \bm{\Sigma}_1(\theta) + (1 - \delta_1(\infty)) \bm{\Sigma}_0(\theta) \right\} \bm{W}_{\infty} \bm{Dm}(\psi_0(\infty))
  $$
  with $\psi_0(\infty) = (\delta_1(\infty),\theta)$, and $\bm{\Sigma}_s(\theta)$ defined as in \eqref{sigma_s} for $s=0,1$.
\end{lemma}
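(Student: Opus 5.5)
The plan is to split the statistic into a deterministic-matrix part and a perturbation from the random weight, and then use Slutsky's Theorem. Write
$$
  \binom{n}{2}^{1/2} \bm{Dm}(\psi_0)^{\tp} \bm{W} \left\{ \hat{\bm{m}} - \bm{m}(\psi_0) \right\}
  = \bm{Dm}(\psi_0)^{\tp} \bm{W}_{\infty} \bm{Z}_n + \bm{Dm}(\psi_0)^{\tp} (\bm{W} - \bm{W}_{\infty}) \bm{Z}_n,
$$
where $\bm{Z}_n = \binom{n}{2}^{1/2}\{\hat{\bm{m}} - \bm{m}(\psi_0)\}$ and $\psi_0 = \psi_0(n) = (\delta_1(n),\theta)$. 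The first task is a CLT for $\bm{Z}_n$ under the generic mapping $m$. The proof of Lemma~\ref{lem:mstar_clt} never used the specific form of $m^*$: it only used that, conditional on $\Anetk{1}$, the vectors $\{m(\Yvec_{ij})\}_{(i,j)\in\mathcal{E}_s}$ are iid and bounded, since $m$ maps the finite set $\{0,1\}^K$. The same argument therefore gives
$$
  \bm{Z}_n \indist \mathcal{N}\left(\bm{0}_p, \delta_1(\infty)\bm{\Sigma}_1(\theta) + (1-\delta_1(\infty))\bm{\Sigma}_0(\theta)\right).
$$
The centering is correct because $\bm{m}(\psi_0) = \delta_1(n)\bm{m}_1(\theta) + (1-\delta_1(n))\bm{m}_0(\theta)$ is exactly the conditional mean of $\hat{\bm{m}}$. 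Assumption~\ref{assump:delta_seq_compact} guarantees $\delta_1(n)\to\delta_1(\infty)\in(0,1)$, so both classes $\mathcal{E}_s$ grow, as that proof requires.

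For the second term, $\bm{Z}_n = O_{\mathbb{P}}(1)$ by the CLT, and $\bm{W} - \bm{W}_{\infty} = o_{\mathbb{P}}(1)$ by \eqref{w_condition1}. The matrix $\bm{Dm}(\psi_0(n))$ has uniformly bounded entries by Remark~\ref{rem:m_polynomial}, so the product is $o_{\mathbb{P}}(1)$.

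For the first term, the matrix $\bm{Dm}(\psi_0(n))$ depends on $n$ through $\delta_1(n)$. By Remark~\ref{rem:m_polynomial}, $\bm{Dm}$ is continuous, so $\bm{Dm}(\psi_0(n)) \to \bm{Dm}(\psi_0(\infty))$ deterministically. Hence
$$
  \bm{Dm}(\psi_0(n))^{\tp}\bm{W}_{\infty}\bm{Z}_n = \bm{Dm}(\psi_0(\infty))^{\tp}\bm{W}_{\infty}\bm{Z}_n + o_{\mathbb{P}}(1).
$$
By the continuous mapping theorem, $\bm{Dm}(\psi_0(\infty))^{\tp}\bm{W}_{\infty}\bm{Z}_n$ converges to a normal with covariance
$$
  \bm{Dm}(\psi_0(\infty))^{\tp}\bm{W}_{\infty}\{\delta_1(\infty)\bm{\Sigma}_1 + (1-\delta_1(\infty))\bm{\Sigma}_0\}\bm{W}_{\infty}\bm{Dm}(\psi_0(\infty)),
$$
which is $\widetilde{\bm{\Sigma}}^{(\psi)}$. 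Here $\bm{W}_{\infty}$ is symmetric, being a limit of symmetric positive-definite matrices. Slutsky's Theorem then combines the two terms and gives the claimed limit.

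The main point needing care is the triangular-array nature of the CLT. The class sizes $|\mathcal{E}_s|$ change with $n$, and the data are conditioned on a sequence of networks $\Anetk{1}(n)$, so the summands form a triangular array. Within each class the summands are iid with a fixed, $n$-free bounded distribution, so the Lindeberg condition holds trivially. I would state this explicitly, or simply cite Lemma~\ref{lem:mstar_clt}'s argument verbatim with $m^*$ replaced by $m$, rather than redo it. The rest is routine bookkeeping of $o_{\mathbb{P}}(1)$ terms.
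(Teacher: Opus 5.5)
Your proposal is correct and follows essentially the paper's argument: a CLT for the centered moments (Lemma~\ref{lem:mstar_clt}), the fixed linear map $\bm{Dm}(\psi_0(\infty))^{\tp}\bm{W}_{\infty}$, and Slutsky's Theorem with an $o_{\mathbb{P}}(1)\cdot O_{\mathbb{P}}(1)$ remainder. The only differences are cosmetic. You split the remainder into the random-weight piece and the deterministic drift of $\bm{Dm}(\psi_0(n))$, whereas the paper bundles both as $\bm{W}\bm{Dm}(\psi_0) - \bm{W}_{\infty}\bm{Dm}(\psi_0(\infty)) = o_{\mathbb{P}}(1)$; and your remarks that Lemma~\ref{lem:mstar_clt} extends verbatim to a generic $m$, and that the triangular-array issue is harmless, make explicit steps the paper leaves implicit.
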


\begin{proof}
  Note that
  $$
    \binom{n}{2}^{1/2} \bm{Dm}(\psi_0)^{\tp} \bm{W} \left\{ \hat{\bm{m}} - \bm{m}(\psi_0) \right\} = \binom{n}{2}^{-1/2} \sum_{i < j} \bm{Dm}(\psi_0)^{\tp} \bm{W} \left\{ m(\Yvec_{ij}) - \bm{m}(\psi_0) \right\}
  $$
  Then, beginning from Lemma~\ref{lem:mstar_clt}, apply the linear transformation
  $$
    \bm{Dm}(\psi_0(\infty))^{\tp} \bm{W}_{\infty} \in \real^{5 \times p},
  $$
  which implies
  $$
    \binom{n}{2}^{-1/2} \sum_{i < j} \bm{Dm}(\psi_0(\infty))^{\tp} \bm{W}_{\infty} \left\{ m(\Yvec_{ij}) - \bm{m}(\psi_0) \right\}
  \indist \mathcal{N}\left( \bm{0}_5~,~ \widetilde{\bm{\Sigma}}^{(\psi)}(\psi_0(\infty)) \right).
  $$
  The result follows by Slutsky's Theorem, since
  \begin{equation*}
    \binom{n}{2}^{-1/2} \sum_{i < j} \left\{ \bm{W} \bm{Dm}(\psi_0) - \bm{W}_{\infty} \bm{Dm}(\psi_0(\infty)) \right\}^{\tp} \left\{ m(\Yvec_{ij}) - \bm{m}(\psi_0) \right\} = o_p(1).
  \end{equation*}
  This follows since
  $$
    \bm{W} \bm{Dm}(\psi_0) - \bm{W}_{\infty} \bm{Dm}(\psi_0(\infty)) = o_p(1),
  $$
  by \eqref{w_condition1} and continuity of $\bm{Dm}$; and
  $$
    \binom{n}{2}^{-1/2} \sum_{i < j} \left\{ m(\Yvec_{ij}) - \bm{m}(\psi_0) \right\} = O_p(1)
  $$
  by Lemma~\ref{lem:mstar_clt}.
\end{proof}

The main result of this section is a central limit theorem for a GMM estimator $\hat{\psi}$. For this result, we require a generic version of the local identification Assumption~\ref{assump:local_identification_body} on the derivative matrix of the edge-averaged moments.
\begin{assumption} \label{assump:local_identification}
  The derivative matrix $\bm{Dm}(\psi)$ has full column rank for any $\psi \in \bar{\Psi}$.
\end{assumption}

\begin{proposition} \label{prop:theta_limiting}
  Under Assumptions~\ref{assump:delta_seq_compact}, \ref{assump:global_identification_generic}, \ref{assump:w_conditions}, and \ref{assump:local_identification},
  $$
    \binom{n}{2}^{1/2} (\hat{\psi} - \psi) \indist \mathcal{N}\left(0 , \bm{\Sigma}^{(\psi)}(\psi_0(\infty))\right),
  $$
  where
  \begin{align*}
    &\bm{\Sigma}^{(\psi)}(\psi_0(\infty)) \\ = &\left\{ \bm{Dm}(\psi_0(\infty))^{\tp} \bm{W}_{\infty} \bm{Dm}(\psi_0(\infty))\right\}^{-1} \widetilde{\bm{\Sigma}}^{(\psi)}(\psi_0(\infty)) \left\{ \bm{Dm}(\psi_0(\infty))^{\tp} \bm{W}_{\infty} \bm{Dm}(\psi_0(\infty))\right\}^{-1}.
  \end{align*}
\end{proposition}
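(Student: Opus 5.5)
The plan is the standard GMM argument: a mean-value expansion of the first-order condition \eqref{first_order_condition} around the true parameter $\psi_0 = \psi_0(n)$, combined with Proposition~\ref{prop:theta_consistent} for consistency and Lemma~\ref{lem:gmm_clt} for the CLT of the score. The true parameter lies in $\bar{\Psi}(\xi)$ and $\hat\psi \inprob \psi_0(\infty)$. I will first check that $\psi_0(\infty)$ is an interior point of the region where the expansion is valid, so that with probability tending to one \eqref{first_order_condition} holds with $\hat\psi$ at a stationary point. Where this is delicate at the boundary of $\bar\Psi$, I will argue that the objective is smooth on an open neighbourhood: $\bm{m}$ is polynomial by Remark~\ref{rem:m_polynomial}, hence defined on all of $\real^5$. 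Since $\psi_0(n) \to \psi_0(\infty)$, the first-order condition then applies for large $n$ with probability tending to one.

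Next I expand $\bm{m}(\hat\psi) = \bm{m}(\psi_0) + \overline{\bm{Dm}}\,(\hat\psi - \psi_0)$. Here $\overline{\bm{Dm}}$ is the derivative matrix evaluated row-by-row at intermediate points $\bar\psi_\ell$ on the segment between $\hat\psi$ and $\psi_0$; the mean value theorem is applied coordinatewise. Substituting into \eqref{first_order_condition} and multiplying by $\binom{n}{2}^{1/2}$ gives
\[
\bm{Dm}(\hat\psi)^{\tp}\bm{W}\,\overline{\bm{Dm}}\;\binom{n}{2}^{1/2}(\hat\psi-\psi_0) = \binom{n}{2}^{1/2}\bm{Dm}(\hat\psi)^{\tp}\bm{W}\{\hat{\bm{m}}-\bm{m}(\psi_0)\}.
\]
Since $\hat\psi$, every $\bar\psi_\ell$, and $\psi_0(n)$ all converge in probability to $\psi_0(\infty)$, continuity of $\bm{Dm}$ and \eqref{w_condition1} give that the matrix on the left converges in probability to $\bm{G} := \bm{Dm}(\psi_0(\infty))^{\tp}\bm{W}_\infty\bm{Dm}(\psi_0(\infty))$. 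This matrix is invertible by Assumption~\ref{assump:local_identification} together with positive definiteness of $\bm{W}_\infty$, which follows from \eqref{w_condition2}. Hence the left matrix is invertible with probability tending to one, and its inverse converges to $\bm{G}^{-1}$.

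For the right side, I replace $\bm{Dm}(\hat\psi)$ by $\bm{Dm}(\psi_0)$. The error is $\{\bm{Dm}(\hat\psi)-\bm{Dm}(\psi_0)\}^{\tp}\bm{W}\binom{n}{2}^{1/2}\{\hat{\bm{m}}-\bm{m}(\psi_0)\}$, which is $o_p(1)\cdot O_p(1)$ by consistency and by the $O_p(1)$ bound from Lemma~\ref{lem:mstar_clt}. Lemma~\ref{lem:gmm_clt} then gives convergence of the right side to $\mathcal{N}(\bm{0},\widetilde{\bm{\Sigma}}^{(\psi)}(\psi_0(\infty)))$. Slutsky's theorem yields the sandwich covariance $\bm{G}^{-1}\widetilde{\bm{\Sigma}}^{(\psi)}\bm{G}^{-1}$.

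I expect the main obstacle to be the boundary/interior issue of the first step: justifying that the minimizer over the compact $\bar\Psi$ satisfies the first-order condition. Because $\psi_0(n)\in\bar\Psi$ may lie on the boundary, I will treat it by extending the objective to a neighbourhood. Alternatively, if the minimizer is constrained, I would use the Newey–McFadden (1994, Theorem 3.2) requirement that the true value be interior, noting that it implicitly needs $\psi_0(\infty)$ in the interior of $\bar\Psi$.
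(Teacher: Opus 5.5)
Your argument is correct, but it linearizes differently from the paper. The paper applies the mean value theorem to the whole score $\bm{Dm}(\psi)^{\tp}\bm{W}\{\hat{\bm{m}}-\bm{m}(\psi)\}$ between $\hat\psi$ and $\psi_0$. The matrix multiplying $\hat\psi-\psi_0$ is then the Hessian-type $D_2(\tilde\psi)$, which contains the second-derivative term $\mathcal{D}^2\bm{m}(\tilde\psi)\bar{\times}_3\{\bm{W}(\hat{\bm{m}}-\bm{m}(\tilde\psi))\}$. The paper must show separately that this tensor term is $o_p(1)$, using continuity and boundedness of $\mathcal{D}^2\bm{m}$, boundedness of $\bm{W}\hat{\bm{m}}$, and consistency. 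After that, the score is already evaluated at $\psi_0$, so Lemma~\ref{lem:gmm_clt} applies directly.

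You instead expand only $\bm{m}(\hat\psi)$ and keep $\bm{Dm}(\hat\psi)$ as the premultiplier, which is the Newey--McFadden GMM argument. No second derivatives appear and only continuity of $\bm{Dm}$ is used. The cost is an extra step replacing $\bm{Dm}(\hat\psi)$ by $\bm{Dm}(\psi_0)$ on the right-hand side, which you handle correctly via consistency and the $O_p(1)$ bound from Lemma~\ref{lem:mstar_clt}. Since $\bm{m}$ is polynomial, both routes are available here; yours is slightly more economical and would still work if $\bm{m}$ were only $C^1$.

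On the boundary issue, drop your first fix. Smoothness of $f$ on an open set containing $\bar\Psi$ does not help, because $\hat\psi$ is the minimizer over $\bar\Psi$. Assumption~\ref{assump:delta_seq_compact} permits $\psi_0(\infty)\in\partial\bar\Psi$, for example $\alpha=\xi$ or $\lambda+\mu=1-\xi$. In that case $\hat\psi$ sits on the boundary with probability not tending to zero, \eqref{first_order_condition} fails there, and the limit law is in general a projection of a Gaussian onto a cone rather than Gaussian.

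Your second option is the correct one: you need $\psi_0(\infty)\in\operatorname{int}\bar\Psi$. Then consistency puts $\hat\psi$ in the interior with probability tending to one, and \eqref{first_order_condition} holds on that event. This is an extra hypothesis beyond the stated assumptions. The paper uses it implicitly when it asserts, before the proposition, that the minimizer solves \eqref{first_order_condition} ``by differentiability of $f$''.
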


\begin{proof}
  Begin with an application of the mean value theorem to the coordinate functions of \eqref{first_order_condition} between $\hat{\psi}$ and $\psi_0$, giving
  \begin{equation} \label{gmm_expansion}
      \bm{0}_5 = \bm{Dm}(\psi_0)^{\tp} \bm{W} \left\{ \hat{\bm{m}} - \bm{m}(\psi_0) \right\} + D_2(\tilde{\psi}) (\hat{\psi} - \psi_0)
  \end{equation}
  where
  $$
    D_2(\psi) = \mathcal{D}^2\bm{m}(\psi) \bar{\times}_3 \left\{ \bm{W}_{\infty} (\hat{\bm{m}} - \bm{m}(\psi)) \right\} - \bm{Dm}(\psi)^{\tp} \bm{W}_{\infty} \bm{Dm}(\psi) \in \real^{5 \times 5}.
  $$
  and $\tilde{\psi}$ is on the line segment between $\psi_0$ and $\hat{\psi}$.

  We first show that $D_2(\tilde{\psi})$ converges in probability to limiting matrix which is free of $\tilde{\psi}$.
  \begin{scriptsize}
  \begin{align*}
    & \lVert D_2(\tilde{\psi}) - D_2(\psi_0(\infty)) \rVert_2 = \underbrace{\lVert \left\{ \mathcal{D}^2\bm{m}(\tilde{\psi}) - \mathcal{D}^2\bm{m}(\psi) \right\} \bar{\times}_3 \bm{W} \hat{\bm{m}} \rVert_2}_{(I)} + \cdots \\
    &\hspace{-0.95cm} \cdots + \underbrace{\lVert \mathcal{D}^2\bm{m}(\tilde{\psi}) \bar{\times}_3 \bm{W} \bm{m}(\tilde{\psi}) - \bm{Dm}(\tilde{\psi})^{\tp} \bm{W} \bm{Dm}(\tilde{\psi})
    - \mathcal{D}^2\bm{m}(\psi_0(\infty)) \bar{\times}_3 \bm{W} \bm{m}(\psi_0(\infty)) + \bm{Dm}(\psi_0(\infty))^{\tp} \bm{W} \bm{Dm}(\psi_0(\infty))\rVert_2}_{(II)}.
  \end{align*}
  \end{scriptsize}
  The second term is a difference of a continuous function of $\tilde{\psi}$ and $\psi_0(\infty)$, thus $(II) \inprob 0$ by Proposition~\ref{prop:theta_consistent} and \eqref{w_condition2}. 
  For the first term, by triangle inequality
  $$
    (I) \leq \sum_{\ell=1}^p \lvert (\bm{W}\hat{\bm{m}})_{\ell} \rvert \left\lVert \pd{^2 m_{\ell}}{\psi \partial \psi^{\tp}}(\tilde{\psi}) - \pd{^2 m_{\ell}}{\psi \partial \psi^{\tp}}(\psi_0(\infty)) \right\rVert_2.
  $$
  Then $(I) \inprob 0$ since the second derivatives are continuous, and the entries of $\bm{W}\hat{\bm{m}}$ are (a.s.) bounded: $\lVert \bm{W} \rVert_2$ is uniformly bounded by \eqref{w_condition2}, and $\hat{\bm{m}} \in [0,1]^p$, as they are empirical proportions.
  Thus,
  \begin{equation} \label{D2_convergence}
    \lVert D_2(\tilde{\psi}) - D_2(\psi_0(\infty)) \rVert_2 \inprob 0.
  \end{equation}

  Next, note that
  \begin{equation} \label{D2_randompart}
    \lVert \mathcal{D}^2\bm{m}(\psi_0(\infty)) \bar{\times}_3 \left\{ \bm{W} (\hat{\bm{m}} - \bm{m}(\psi_0(\infty)) \right\} \rVert_2 \inprob 0
  \end{equation}
  again by Proposition~\ref{prop:theta_consistent}, and since the entries of $\mathcal{D}^2\bm{m}$ are bounded by Remark~\ref{rem:m_polynomial}.
  Combining \eqref{D2_convergence}, \eqref{D2_randompart}, and finally \eqref{w_condition1}, we get
  $$
    D_2(\tilde{\psi}) = -\bm{Dm}(\psi_0(\infty))^{\tp} \bm{W}_{\infty} \bm{Dm}(\psi_0(\infty)) + o_p(1).
  $$

  Thus, \eqref{gmm_expansion} becomes
  $$
    \left\{ -\bm{Dm}(\psi_0(\infty))^{\tp} \bm{W}_{\infty} \bm{Dm}(\psi_0(\infty)) + o_p(1) \right\} (\hat{\psi} - \psi_0) = - \bm{Dm}(\psi_0)^{\tp} \bm{W} \left\{ \hat{\bm{m}} - \bm{m}(\psi_0) \right\}.
  $$
  By Assumption~\ref{assump:local_identification}, the leading term is invertible.
  Rearranging, and applying Lemma~\ref{lem:gmm_clt}, we get the desired limiting distribution for
  $$
    \binom{n}{2}^{1/2}(\hat{\psi} - \psi_0).
  $$
\end{proof}

\begin{remark} \label{rem:covariance_estimation}
  Proposition~\ref{prop:theta_limiting} justifies asymptotic normality for $\hat{\psi}$, however its asymptotic covariance matrix depends on the unknown parameters.
  It is natural to estimate this covariance parametrically by plugging in the estimate of $\psi$.
  Normalizing by the estimated covariance will still lead to consistent inference for $\psi$.
\end{remark}

\begin{remark} \label{rem:covariance_optimality}
  Proposition~\ref{prop:theta_limiting} can be used to find the weighting matrix which produces an asymptotically efficient estimator of $\hat{\psi}$.
  In particular, the minimial covariance matrix in terms of the partial order induced by is achieved for a sequence of weighting matrices which satisfy Assumption~\ref{assump:w_conditions}, and converge to
  $$
    \bm{W}_{\infty} = \left\{ \delta_1(\infty) \bm{\Sigma}_1(\theta) + (1 - \delta_1(\infty)) \bm{\Sigma}_0(\theta) \right\}^{-1}.
  $$
  The resulting asymptotic covariance matrix of $\hat{\psi}$ is
  $$
    \bm{Dm}(\psi_0)^{\tp} \left\{ \delta_1(\infty) \bm{\Sigma}_1(\theta) + (1 - \delta_1(\infty)) \bm{\Sigma}_0(\theta) \right\}^{-1} \bm{Dm}(\psi_0),
  $$
  where $\psi_0 = (\delta_0,\theta)$.
\end{remark}

We now complete this section by using the above results to prove Proposition~\ref{prop:gmm_adaptive_clt}.

\begin{proof}
  Recall that Assumption~\ref{assump:delta_seq_compact} is assumed throughout.

  Our goal will be to apply Proposition~\ref{prop:gmm_adaptive_clt} with the mapping $m^*$ and adaptive weighting matrices $\widehat{\bm{W}}$.
  The result will follow if we can verify the assumptions of Proposition~\ref{prop:gmm_adaptive_clt}.

  We first apply Proposition~\ref{prop:theta_consistent} with the mapping $m^{(\mathrm{init})}$ and fixed weighting matrices $\bm{W}_n = \bm{I}_8$.
  Assumption~\ref{assump:global_identification_generic} follows from Assumption~\ref{assump:global_identification}.
  Assumption~\ref{assump:w_conditions} holds vacuously.
  Thus, by Proposition~\ref{prop:theta_consistent},
  $$
    (\hat{\delta}_1^{(\mathrm{init})},\hat{\theta}^{(\mathrm{init})}) \inprob (\delta_1(\infty),\theta).
  $$

  By Remark~\ref{rem:m_polynomial}, $\widehat{\bm{W}}$ is a continuous function  of $(\hat{\delta}_1^{(\mathrm{init})},\hat{\theta}^{(\mathrm{init})})$, so
  \begin{equation} \label{w_limit}
    \widehat{\bm{W}} \inprob \left\{ \delta_1(\infty) \bm{\Sigma}_1(\theta) + (1 - \delta_1(\infty)) \bm{\Sigma}_0(\theta) \right\}^{-1}.
  \end{equation}
  Moreover, \eqref{w_condition2} follows from Assumption~\ref{assump:cov_spectrum}.
  Thus, the sequence of adaptive weighting matrices $\widehat{\bm{W}}$ satisfy Assumption~\ref{assump:w_conditions}, with optimal probability limit given by \eqref{w_limit} (see Remark~\ref{rem:covariance_optimality}).

  We complete the verification of the assumptions of Proposition~\ref{prop:gmm_adaptive_clt}.   Assumption~\ref{assump:global_identification_generic} follows from Assumption~\ref{assump:global_identification},
  and Assumption~\ref{assump:local_identification} follows from Assumption~\ref{assump:local_identification_body}.
  This completes the verification of the assumptions of Proposition~\ref{prop:gmm_adaptive_clt}.
\end{proof}

\section{Theory for higher-order subgraph densities} \label{app:ho_theory}

\subsection{Inference with known \texorpdfstring{$\theta$}{theta}}

To formally state our results about higher-order subgraph densities, we introduce notation originally defined by \citesupp{chang22estimationB}. Recall that we are working with a subgraph $H$ on $L$ edges, based on the configuration of the vertex pairs in each $\bm{v} = (v_1,\ldots,v_L) \in \mathcal{V}$.

For some $s \in \{1,\ldots,L-1\}$ and $1 \leq \ell_1 < \cdots < \ell_s \leq L$, define
$$
    \mathcal{G}_{\ell_1,\ldots,\ell_s}(\bm{v}) = \{(u_1,\ldots,u_s) : (v_1,\ldots,v_{\ell_1-1},u_1,v_{\ell_1+1},\ldots,v_{\ell_s-1},u_s,v_{\ell_s+1},\ldots,v_L) \in \mathcal{V} \},
$$
and 
$$
    \aleph_{\mathcal{V}}(s) = \max_{\bm{v} \in \mathcal{V}} \max_{\ell_1 < \cdots < \ell_s} \lvert \mathcal{G}_{\ell_1,\ldots,\ell_s}(\bm{v}) \rvert.
$$
Roughly, for $1 \leq s \leq L-1$, $\aleph_{\mathcal{V}}(L-s)$ enumerates the (maximal) number of subgraphs that can be constructed, after fixing exactly $s$ of the vertex pairs. 
The following is a restatement of \citesupp{chang22estimationB}, Proposition 2 in our dynamic setting. The proof is identical to theirs.

\begin{corollary} \label{cor:ho_order}
  Suppose
  $y_k(\theta) \neq x_k(\theta)$ for all $k=1,\ldots,K$.
  Then
  \[
    \lvert \widetilde{C}_H^{(k)} - C_H \rvert = O_{\mathbb{P}}\left( \sqrt{\frac{\aleph_{\mathcal{V}}(L-1)}{\lvert \mathcal{V} \rvert}} \right)
  \]
\end{corollary}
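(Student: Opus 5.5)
The plan is a second-moment bound via Chebyshev's inequality. Fix $k$ and condition on $\Anetk{1}$ throughout. By \eqref{ho_density_expectation}, $\widetilde{C}_H^{(k)}$ is unbiased for $C_H$. The first ingredient is that, conditional on $\Anetk{1}$, the entries $\{\Ynetk{k}_{ij}\}_{i<j}$ are mutually independent. Indeed, the error and evolution mechanisms act independently across node pairs, so $\Ynetk{k}_{ij}$ is a function of the $(i,j)$ chain only. Its conditional mean is $x_k(\theta)$ or $y_k(\theta)$ according to $\Anetk{1}_{ij}$. Since $y_k(\theta)\neq x_k(\theta)$, the normalizing factor $\{y_k(\theta)-x_k(\theta)\}^{-L}$ is a finite constant free of $n$. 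It therefore suffices to show
\[
\operatorname{Var}\Big( \frac{1}{\lvert\mathcal{V}\rvert}\sum_{\bm{v}\in\mathcal{V}} \prod_{\ell=1}^L \phi^{(k)}_{\ell,\theta}(\Ynetk{k}_{v_\ell}) \Big) = O\Big(\frac{\aleph_{\mathcal{V}}(L-1)}{\lvert\mathcal{V}\rvert}\Big).
\]

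Write $Z_{\bm v}=\prod_\ell \phi^{(k)}_{\ell,\theta}(\Ynetk{k}_{v_\ell})$. Because the $L$ pairs in $\bm v$ are distinct, $Z_{\bm v}$ is a product of independent bounded factors, with $\lvert\phi\rvert\le 1$. The variance expands as $\lvert\mathcal V\rvert^{-2}\sum_{\bm v,\bm v'}\operatorname{Cov}(Z_{\bm v},Z_{\bm v'})$. By independence across node pairs, the covariance vanishes unless $\bm v$ and $\bm v'$ share at least one node pair, and it is always bounded by a constant.

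The counting step is the following. Fix $\bm v$, and group the $\bm v'$ by which positions $\ell_1<\dots<\ell_s$ of $\bm v'$ carry the shared pairs, and by which pairs of $\bm v$ they match. There are at most $L^{2L}$ such patterns. Given a pattern, $\bm v'$ is determined by the remaining free coordinates, so the number of such $\bm v'$ is at most the quantity $\aleph_{\mathcal V}(L-s)\le \aleph_{\mathcal V}(L-1)$ for $s\geq1$. This uses monotonicity: fixing more pairs cannot increase the count, since any completion with $s$ fixed pairs is in particular a completion with one fixed pair. Hence $\sum_{\bm v'}\lvert\operatorname{Cov}\rvert\le C_L\,\aleph_{\mathcal V}(L-1)$ for each $\bm v$. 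Summing over the $\lvert\mathcal V\rvert$ choices of $\bm v$ and dividing by $\lvert\mathcal V\rvert^2$ gives the variance bound, and Chebyshev's inequality then gives the $O_{\mathbb P}$ statement.

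The only delicate point, and the main obstacle, is matching this counting to the precise definition of $\mathcal{G}_{\ell_1,\ldots,\ell_s}$, in which $s$ coordinates are left free rather than fixed. The count of completions with one pair fixed is governed by $\aleph_{\mathcal V}(L-1)$. I will follow the bookkeeping of \citesupp{chang22estimationB}, Proposition 2 verbatim. Nothing there depends on the dynamics, since after conditioning on $\Anetk{1}$ the snapshot $\Ynetk{k}$ is just an independent-edge noisy network with known edgewise means $x_k(\theta)$ or $y_k(\theta)$.
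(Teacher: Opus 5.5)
Your proposal is correct and is essentially the paper's proof. The paper simply carries over Proposition 2 of Chang et al.\ (2022), whose argument is this same second-moment bound: independence of the entries of $\Ynetk{k}$ across node pairs given $\Anetk{1}$, boundedness of $\phi^{(k)}_{\ell,\theta}$, and overlap counting through $\aleph_{\mathcal{V}}$, followed by Chebyshev's inequality. The only cosmetic difference is bookkeeping: their proof expands $\prod_\ell \phi^{(k)}_{\ell,\theta}$ into centered factors and bounds the order-$s$ terms by $\aleph_{\mathcal{V}}(L-s)/\lvert\mathcal{V}\rvert$ (the $S_H^{(k)}$ plus remainder split reused later in the paper), whereas your direct count of overlapping pairs $(\bm{v},\bm{v}')$ reaches the same $O(\aleph_{\mathcal{V}}(L-1)/\lvert\mathcal{V}\rvert)$ variance bound in one step.
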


We also require some regularity conditions on the index sets described by $\mathcal{V}$.

\begin{assumption}[\citesupp{chang22estimationB}, Assumption 2] \label{assump:c22}
    ~~
    \begin{enumerate}
        \item $\aleph_{\mathcal{V}}(s) / \aleph_{\mathcal{V}}(L-1) \rightarrow 0$ for any $1 \leq s \leq L-2$.
        \item $\max_{\bm{v} \in \mathcal{V}} \max_{\ell_1 < \cdots < \ell_{L-1}} \lvert \mathcal{G}_{\ell_1,\ldots,\ell_{L-1}}(\bm{v}) \rvert \asymp \min_{\bm{v} \in \mathcal{V}} \min_{\ell_1 < \cdots < \ell_{L-1}} \lvert \mathcal{G}_{\ell_1,\ldots,\ell_{L-1}}(\bm{v}) \rvert$.
    \end{enumerate}
\end{assumption}

Assumption~\ref{assump:c22} says that (i) the count with more than one edge fixed is asymptotically dominated by the count with exactly one edge fixed, and (ii) up to constant factors, this maximum does not depend on which vertex pairs are fixed.
To replace $\widetilde{C}_H^{(k)}$ by the linearized $S_H^{(k)}$ after centering and scaling, we also require that $n^2 \aleph_{\mathcal{V}}(L-1) \lesssim \lvert \mathcal{V} \rvert$, which roughly says that we can enumerate the index sets in $\mathcal{V}$ by fixing any one vertex pair, then the cardinality of the set of index sets with that fixed vertex pair is $\aleph_{\mathcal{V}}(L-1)$, up to constant factors.
The following is a restatement of \citesupp{chang22estimationB}, Proposition 3 in our dynamic setting. The proof is identical to theirs.

\begin{corollary} \label{cor:ho_S_approx}
  Suppose Assumption~\ref{assump:c22} holds,
  \[
      \aleph_{\mathcal{V}}(L-1) / \lvert \mathcal{V} \rvert \asymp n^{-2} 
  \]
  and $y_k(\theta) \neq x_k(\theta)$.
  Then we have that
  \[
    \binom{n}{2}^{1/2} (\widetilde{C}_H^{(k)} - C_H) = S_H^{(k)} + o_{\mathbb{P}}(1).
  \]
\end{corollary}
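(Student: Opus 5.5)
The plan is to follow the proof of Proposition 3 in \citesupp{chang22estimationB} for a fixed snapshot $k$, since $\widetilde{C}_H^{(k)}$ has exactly their form. The only difference is that the adjustment constants $x_k(\theta)$ and $y_k(\theta)$ replace $\alpha$ and $1-\beta$. Conditional on $\Anetk{1}$, the entries $\{\Ynetk{k}_{ij}\}_{i<j}$ of the single snapshot $k$ are mutually independent, with means $x_k(\theta)$ or $y_k(\theta)$ according to $\Anetk{1}_{ij}$. This holds because each edge sequence evolves and is observed independently across node pairs. That independence across node pairs, together with $y_k(\theta)\neq x_k(\theta)$, is all their argument uses.

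\textbf{Step 1: Hoeffding-type expansion.} Write $\phi^{(k)}_{\ell,\theta}(\Ynetk{k}_{v_\ell}) = \mu_{v_\ell} + (-1)^{1-\tau_\ell}\varepsilon_{v_\ell}$, where $\varepsilon_{v}=\Ynetk{k}_{v}-\expect_\theta\Ynetk{k}_{v}$ and $\mu_{v_\ell}=\expect_\theta\phi^{(k)}_{\ell,\theta}(\Ynetk{k}_{v_\ell})$. This is valid because $\phi^{(k)}_{\ell,\theta}$ is affine, with slope $(-1)^{1-\tau_\ell}$. Expand $\prod_{\ell=1}^L$ as a sum over subsets $T\subseteq\{1,\ldots,L\}$ of $\prod_{\ell\in T}(\pm\varepsilon_{v_\ell})\prod_{\ell\notin T}\mu_{v_\ell}$.
\begin{itemize}
\item The term $T=\emptyset$, after the normalization by $\lvert\mathcal{V}\rvert\{y_k-x_k\}^L$, equals $C_H$ by \eqref{ho_density_expectation}.
\item The terms with $\lvert T\rvert=1$, after scaling by $\binom{n}{2}^{1/2}$, give exactly $S_H^{(k)}$ in \eqref{SH_linearized}.
\item It remains to show that the sum $R$ of the terms with $\lvert T\rvert\geq 2$ satisfies $\binom{n}{2}^{1/2}R=o_{\mathbb{P}}(1)$.
\end{itemize}

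\textbf{Step 2: second moment of the remainder.} Fix $s=\lvert T\rvert\geq 2$ and compute $\expect(R_T^2)$ conditional on $\Anetk{1}$.
\begin{itemize}
\item The $\varepsilon$'s are independent and mean zero, so a cross term $\bm v,\bm v'$ survives only when the multisets $\{v_\ell\}_{\ell\in T}$ and $\{v'_\ell\}_{\ell\in T}$ coincide.
\item The factors $\mu$ and $\varepsilon$ are bounded by constants, and $\lvert y_k-x_k\rvert$ is bounded away from zero.
\item Hence $\expect R_T^2\lesssim \lvert\mathcal{V}\rvert^{-2}\sum_{\bm v}\#\{\bm v' : \bm v' \text{ shares those } s \text{ pairs}\}\lesssim \lvert\mathcal{V}\rvert^{-1}\aleph_{\mathcal{V}}(L-s)$, up to a constant depending on $L$ for permutations.
\end{itemize}
For $s\geq 2$ we have $L-s\leq L-2$, so part 1 of Assumption~\ref{assump:c22} gives $\aleph_{\mathcal{V}}(L-s)=o(\aleph_{\mathcal{V}}(L-1))$. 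The case $s=L$ needs $\aleph_{\mathcal{V}}(0)$, which is a constant at most $L!$, and it is handled the same way. Combining with $\aleph_{\mathcal{V}}(L-1)/\lvert\mathcal{V}\rvert\asymp n^{-2}$ yields $\binom{n}{2}\expect R_T^2 = o(1)$. Chebyshev's inequality and a finite union over the $2^L$ subsets $T$ then finish the proof.

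\textbf{Main obstacle.} The delicate part is the counting in Step 2. Two tuples $\bm v,\bm v'$ may share the same set of node pairs in different coordinate positions, and overlapping node indices must not be confused with overlapping pairs. I would handle this as \citesupp{chang22estimationB} do: bound the number of admissible $\bm v'$ by a constant multiple of $\max_{\ell_1<\cdots<\ell_{L-s}}\lvert\mathcal{G}_{\ell_1,\ldots,\ell_{L-s}}(\bm v)\rvert$, obtained by summing over the positions those pairs occupy. Part 2 of Assumption~\ref{assump:c22} is only needed to keep $\var(S_H^{(k)})$ of exact order one, so that the remainder is negligible relative to it.
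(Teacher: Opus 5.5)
Your proposal is correct and takes essentially the paper's route. The paper invokes Proposition 3 of Chang et al.\ (2022), and in the proof of Lemma~\ref{lem:C_to_S_weights} it writes $\binom{n}{2}^{1/2}(\widetilde{C}_H^{(k)}-C_H)=S_H^{(k)}+R_H^{(k)}$, where $R_H^{(k)}$ collects exactly your $\lvert T\rvert\geq 2$ terms and satisfies $\var(R_H^{(k)})\lesssim\binom{n}{2}\aleph_{\mathcal V}(L-2)/\lvert\mathcal V\rvert=o(1)$; this is your Step 2 bound once you note that $\aleph_{\mathcal V}(\cdot)$ is nondecreasing. One fine point is your $s=L$ term when $L=2$: there part 1 of Assumption~\ref{assump:c22} is vacuous, so you need $\aleph_{\mathcal V}(1)\to\infty$ separately, which holds for any genuine two-edge subgraph and is left implicit in the paper as well.
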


We next prove Proposition~\ref{prop:boot_sampler}, which justifies the choice of bootstrap distribution for $\vec{\bm{Y}}^{\dagger}_{ij}$.

\begin{proof}
Let $\Delta(2^K)$ denote the probability simplex on $\{0,1\}^K$.
Without loss of generality, let $\pi^{(0)} \in \Delta(2^K)$ denote the distribution of $\vec{\bm{Y}}_{12}$ for $\bm{A}_{12}^{(1)}=0$ and $\pi^{(1)} \in \Delta(2^K)$ denote the distribution of $\vec{\bm{Y}}_{12}$ for $\bm{A}_{12}^{(1)}=1$.
Note that $\pi^{(0)}$ and $\pi^{(1)}$ are fully parameterized by the error and evolution parameters $\theta = (\alpha,\beta,\lambda,\mu)$.
Thus $\bm{\Sigma}^{(s)} = \operatorname{Cov}_{\pi^{(s)}}(\Yvec)$
for $s=0,1$, which matches the definition from Section~\ref{sec:ho}.

For each pair of snapshot indices $k_1 \leq k_2$, the conditional covariance constraint \eqref{ho_bootcov} can be rewritten as
\begin{align*}
  &\sum_{\bm{y} \in \{0,1\}^K} \operatorname{Cov}(\mathcal{B}(\bm{y})^{(k_2)},\mathcal{B}(\bm{y})^{(k_2)})\{ (1-\bm{A}_{12}^{(1)})\pi^{(0)}_{\bm{y}} + \bm{A}_{12}^{(1)}\pi^{(1)}_{\bm{y}} \} \\
  = &(1-\bm{A}_{12}^{(1)})\left[ \bm{\Sigma}^{(0)} \right]_{k_1k_2} + \bm{A}_{12}^{(1)}\left[ \bm{\Sigma}^{(1)} \right]_{k_1k_2}.
\end{align*}
Expanding as a linear function of $\bm{A}^{(1)}$ leads to two sets of moment constraints on the bootstrapped edge sequence $\mathcal{B}(\bm{y})$:
\begin{align} \label{ho_bootcov_constraints}
  \sum_{\bm{y} \in \{0,1\}^K} \operatorname{Cov}(\mathcal{B}(\bm{y})^{(k_1)},\mathcal{B}(\bm{y})^{(k_2)})\pi^{(0)}_{\bm{y}} &= \left[ \bm{\Sigma}^{(0)} \right]_{k_1k_2} \nonumber \\
  \sum_{\bm{y} \in \{0,1\}^K} \operatorname{Cov}(\mathcal{B}(\bm{y})^{(k_1)},\mathcal{B}(\bm{y})^{(k_2)})\pi^{(1)}_{\bm{y}} &= \left[ \bm{\Sigma}^{(1)} \right]_{k_1k_2}.
\end{align}

To simplify this calculation, we reduce the number of conditional distributions, by assuming that the distribution of $\mathcal{B}(\bm{y})$ depends only on the first observed edge $\bm{y}_1$.
In particular, we will assume
\[
  \mathcal{B}(\bm{y}) \sim \mathcal{N}(\bm{0}_K,\bm{\Sigma}^{\dagger}_{\bm{y}_1}).
\]
Then \eqref{ho_bootcov_constraints} reduces to the system of equations
\begin{align*} 
  \sum_{\bm{y} \in \{0,1\}^K} \left[ \bm{\Sigma}^{\dagger}_{\bm{y}_1} \right]_{k_1k_2} \pi^{(0)}_{\bm{y}} &= \left[ \bm{\Sigma}^{(0)} \right]_{k_1k_2} \nonumber \\
  \sum_{\bm{y} \in \{0,1\}^K} \left[ \bm{\Sigma}^{\dagger}_{\bm{y}_1} \right]_{k_1k_2} \pi^{(1)}_{\bm{y}} &= \left[ \bm{\Sigma}^{(1)} \right]_{k_1k_2},
\end{align*}
which further reduces to
\begin{align*} 
  \left[ \bm{\Sigma}^{\dagger}_{0} \right]_{k_1k_2} \prob(\bm{Y}_{12}^{(1)}=0 ~\vert~ \bm{A}_{12}^{(1)}=0) +  \left[ \bm{\Sigma}^{\dagger}_{1} \right]_{k_1k_2} \prob(\bm{Y}_{12}^{(1)}=1 ~\vert~ \bm{A}_{12}^{(1)}=0) &= \left[ \bm{\Sigma}^{(0)} \right]_{k_1k_2} \nonumber \\
  \left[ \bm{\Sigma}^{\dagger}_{0} \right]_{k_1k_2} \prob(\bm{Y}_{12}^{(1)}=0 ~\vert~ \bm{A}_{12}^{(1)}=1) +  \left[ \bm{\Sigma}^{\dagger}_{1} \right]_{k_1k_2} \prob(\bm{Y}_{12}^{(1)}=1 ~\vert~ \bm{A}_{12}^{(1)}=1) &= \left[ \bm{\Sigma}^{(1)} \right]_{k_1k_2}, \nonumber
\end{align*}
a $2 \times 2$ system of linear equations. Note that the coefficient matrix of this system,
\[
  \begin{pmatrix} 1-\alpha & \alpha \\ \beta & 1 - \beta \end{pmatrix}
\]
is invertible under the restriction $\alpha + \beta < 1$. This implies the closed form solutions
\begin{equation*}
  \bm{\Sigma}^{\dagger}_{0} = \frac{(1-\beta)\bm{\Sigma}^{(0)} - \alpha \bm{\Sigma}^{(1)}}{1 - \alpha - \beta}, \quad
  \bm{\Sigma}^{\dagger}_{1} = \frac{(1-\alpha)\bm{\Sigma}^{(1)} - \beta \bm{\Sigma}^{(0)}}{1 - \alpha - \beta}.
\end{equation*}
Thus, $\Yvec^{\dagger}_{ij} \sim \mathcal{N} \left( \bm{0}_K,\bm{\Sigma}^{\dagger}_{\Ynetk{1}_{ij}} \right)$ will satisfy \eqref{ho_bootcov}, as desired.
\end{proof}

We proceed to prove Theorem~\ref{thm:ho_bootdist}.

\begin{proof}
  If $\bm{a} = \bm{0}_K$, the result holds vacuously. Otherwise, from \eqref{ho_S_approx}, we have that
  \begin{align*}
    &\binom{n}{2}^{1/2} \bm{a}^{\tp}(\widetilde{\bm{C}}_H - C_H\bm{1}_K) \\
    =  &\binom{n}{2}^{1/2} \sum_{k=1}^K \frac{\bm{a}_k}{\lvert \mathcal{V} \rvert \{y_k(\theta) - x_k(\theta)\}^L} \cdot \sum_{j=1}^L (-1)^{1-\tau_j} \sum_{\bm{v} \in \mathcal{V}} \left\{\bm{Y}^{(k)}_{v_j} - \mathbb{E}_\theta \bm{Y}^{(k)}_{v_j} \right\} \prod_{\ell \neq j} \mathbb{E}_{\theta} \left\{ \phi_{\ell,\theta}(\bm{Y}^{(k)}_{v_{\ell}}) \right\} + o_{\mathbb{P}}(1).
  \end{align*}

  Define
  $$
    \eta = \expect\left\{ \left( \sum_{k=1}^K \frac{\bm{a}_k}{\lvert \mathcal{V} \rvert \{y_k(\theta) - x_k(\theta)\}^L} \cdot \sum_{j=1}^L (-1)^{1-\tau_j} \sum_{\bm{v} \in \mathcal{V}} \left\{\bm{Y}^{(k)}_{v_j} - \mathbb{E}_\theta \bm{Y}^{(k)}_{v_j} \right\} \prod_{\ell \neq j} \mathbb{E}_{\theta} \left\{ \phi_{\ell,\theta}(\bm{Y}^{(k)}_{v_{\ell}}) \right\} \right)^2 \right\}.
  $$
  Then by an application of Berry-Esseen Theorem, similar to \citesupp{chang22estimationB}, we have
  \begin{equation} \label{BE_S}
    \sup_{z \in \real} \left\lvert \prob\left\{ \binom{n}{2}^{1/2} \bm{a}^{\tp}(\widetilde{\bm{C}}_H - C_H\bm{1}_K) \leq z \right\} - \Phi\left\{ \binom{n}{2}^{-1/2} \eta^{-1/2} z \right\} \right\rvert = o(1).
  \end{equation}

  Expanding the argument in $\eta$, we will get expectations of product terms
  $$
    \sum_{k_1,k_2=1}^K \sum_{j_1,j_2=1}^L \sum_{\bm{v},\tilde{\bm{v}} \in \mathcal{V}} (\cdots)_{k_1,j_1,\bm{v}} (\cdots)_{k_2,j_2,\tilde{\bm{v}}}.
  $$
  For fixed $k_1,k_2,j_1,j_2$, consider the expectations of the internal terms which depend on $\bm{v}$ and $\tilde{\bm{v}}$,
  \begin{equation} \label{c22_s7}
     \sum_{\bm{v},\tilde{\bm{v}} \in \mathcal{V}} \expect\left( \left\{\bm{Y}^{(k_1)}_{v_{j_1}} - \mathbb{E}_\theta \bm{Y}^{(k_1)}_{v_{j_1}} \right\} \left\{\bm{Y}^{(k_2)}_{\tilde{v}_{j_2}} - \mathbb{E}_\theta \bm{Y}^{(k_2)}_{\tilde{v}_{j_2}} \right\} \prod_{\ell \neq j_1} \mathbb{E}_{\theta} \left\{ \phi_{\ell,\theta}(\bm{Y}^{(k_1)}_{v_{\ell}}) \right\} \prod_{\ell' \neq j_2} \mathbb{E}_{\theta} \left\{ \phi_{\ell',\theta}(\bm{Y}^{(k_2)}_{\tilde{v}_{\ell'}}) \right\} \right).
  \end{equation}
  By independence of edges, we have that this term will only be non-zero if $v_{j_1} = \tilde{v}_{j_2}$.
  Otherwise we have
  $$
    \expect\left( \left\{\bm{Y}^{(k_1)}_{v_{j_1}} - \mathbb{E}_\theta \bm{Y}^{(k_1)}_{v_{j_1}} \right\} \left\{\bm{Y}^{(k_2)}_{\tilde{v}_{j_2}} - \mathbb{E}_\theta \bm{Y}^{(k_2)}_{\tilde{v}_{j_2}} \right\} \right) = \cov \left( \bm{Y}^{(k_1)}_{v_{j_1}}, \bm{Y}^{(k_2)}_{v_{j_1}} \right).
  $$
  Define
  \begin{equation} \label{Vj_set}
    \mathcal{V}_{j_1,j_2}(\bm{v}) = \{ \tilde{\bm{v}} \in \mathcal{V} : \tilde{v}_{j_2} = v_{j_1}\}.
  \end{equation}
  Then \eqref{c22_s7} becomes
  \begin{equation} \label{gamma_defn}
    \Gamma_{k_1,k_2,j_1,j_2} := \sum_{\bm{v} \in \mathcal{V}} \cov \left( \bm{Y}^{(k_1)}_{v_{j_1}}, \bm{Y}^{(k_2)}_{v_{j_1}} \right) \prod_{\ell \neq j_1} \mathbb{E}_{\theta} \left\{ \phi_{\ell,\theta}(\bm{Y}^{(k_1)}_{v_{\ell}}) \right\} \sum_{\tilde{\bm{v}} \in \mathcal{V}_{j_1,j_2}(\bm{v})} \prod_{\ell' \neq j_2} \mathbb{E}_{\theta} \left\{ \phi_{\ell',\theta}(\bm{Y}^{(k_2)}_{\tilde{v}_{\ell'}}) \right\}.
  \end{equation}
  for $k_1,k_2 \in \{1,\ldots,K\}$ and $j_1,j_2 \in \{1,\ldots,L\}$, and we can write
  $$
    \eta = \sum_{k_1,k_2=1}^K \sum_{j_1,j_2=1}^L \frac{\bm{a}_{k_1}\bm{a}_{k_2}}{\{y_{k_1}(\theta) - x_{k_1}(\theta)\}^L\{y_{k_2}(\theta) - x_{k_2}(\theta)\}^L} (-1)^{2-\tau_{j_1}-\tau_{j_2}} \lvert \mathcal{V} \rvert^{-2} \Gamma_{k_1,k_2,j_1,j_2}.
  $$

  From \eqref{ho_Sboot}, we can apply the Berry-Esseen Theorem conditional on $\bm{Y} = (\Ynetk{1}, \ldots, \Ynetk{K})$, and perform similar manipulations on the variance term to conclude
  \begin{equation} \label{BE_bootstrap}
    \sup_{z \in \real} \left\lvert \prob\left\{ \bm{a}^{\tp}\bm{S}_H^{\dagger} \leq z ~\vert~ \bm{Y} \right\} - \Phi\left\{ \binom{n}{2}^{-1/2} (\eta^{\dagger})^{-1/2} z \right\} \right\rvert = o(1),
  \end{equation}
  where
  $$
    \eta^{\dagger} = \sum_{k_1,k_2=1}^K \sum_{j_1,j_2=1}^L \frac{\bm{a}_{k_1}\bm{a}_{k_2}}{\{y_{k_1}(\theta) - x_{k_1}(\theta)\}^L\{y_{k_2}(\theta) - x_{k_2}(\theta)\}^L} (-1)^{2-\tau_{j_1}-\tau_{j_2}} \lvert \mathcal{V} \rvert^{-2} \Gamma^{\dagger}_{k_1,k_2,j_1,j_2},
  $$
  and
  \begin{equation} \label{gammadagger_defn}
    \Gamma^{\dagger}_{k_1,k_2,j_1,j_2} = \sum_{\bm{v} \in \mathcal{V}} \cov \left( \bm{Y}^{\dagger,(k_1)}_{v_{j_1}}, \bm{Y}^{\dagger,(k_2)}_{v_{j_1}} ~\vert~ \vec{\bm{Y}}_{v_{j_1}} \right) \prod_{\ell \neq j_1}  \phi_{\ell,\theta}(\bm{Y}^{(k_1)}_{v_{\ell}}) \sum_{\tilde{\bm{v}} \in \mathcal{V}_{j_1,j_2}(\bm{v})} \prod_{\ell' \neq j_2} \phi_{\ell',\theta}(\bm{Y}^{(k_2)}_{\tilde{v}_{\ell'}}).
  \end{equation}
  Note that we use that the edges of the bootstrapped network sequence are independent (across node pairs) conditional on $\bm{Y}$.

  By \eqref{BE_S} and \eqref{BE_bootstrap}, it suffices to show that
  $$
    \lvert \eta^{\dagger} - \eta \vert = o_{\prob}\left\{ \binom{n}{2}^{-1} \right\} = o_{\prob}(n^{-2})
  $$
  which follows if
  \begin{equation} \label{gamma_kj}
      \lvert \mathcal{V} \rvert^{-2} \lvert \Gamma^{\dagger}_{k_1,k_2,j_1,j_2} - \Gamma_{k_1,k_2,j_1,j_2} \vert = o_{\prob}(n^{-2})
  \end{equation}
  for all $k_1,k_2,j_1,j_2$.

  To collapse some notation, for $\ell=1,\ldots,L$, define
  $$
    \tilde{\phi}_{\ell,\theta}(\vec{\bm{Y}}_{v_{\ell}}) = \begin{cases}
        \cov \left( \bm{Y}^{\dagger,(k_1)}_{v_{j_1}}, \bm{Y}^{\dagger,(k_2)}_{v_{j_1}} ~\vert~ \vec{\bm{Y}}_{v_{j_1}} \right), \quad &\ell = j_1 \\
        \phi_{\ell,\theta}(\bm{Y}_{v_{\ell}}^{(k_1)}), \quad &\ell \neq j_1.
  \end{cases}
  $$
  Note that
  $$
    \expect \left\{ \cov \left( \bm{Y}^{\dagger,(k_1)}_{v_{j_1}}, \bm{Y}^{\dagger,(k_2)}_{v_{j_1}} ~\vert~ \vec{\bm{Y}}_{v_{j_1}} \right) \right\} = \cov \left( \bm{Y}^{(k_1)}_{v_{j_1}}, \bm{Y}^{(k_2)}_{v_{j_1}} \right)
  $$
  by construction.
  Then we can decompose
  \begin{align*}
   &\Gamma^{\dagger}_{k_1,k_2,j_1,j_2} - \Gamma_{k_1,k_2,j_1,j_2} \\
   = &\sum_{\bm{v} \in \mathcal{V}} \prod_{\ell = 1}^L   \tilde{\phi}_{\ell,\theta}(\vec{\bm{Y}}_{v_{\ell}}) \sum_{\tilde{\bm{v}} \in \mathcal{V}_{j_1,j_2}(\bm{v})} \prod_{\ell' \neq j_2} \phi_{\ell',\theta}(\bm{Y}^{(k_2)}_{\tilde{v}_{\ell'}}) \\
   &\quad - \sum_{\bm{v} \in \mathcal{V}} \prod_{\ell=1}^L \mathbb{E}_{\theta} \left\{ \tilde{\phi}_{\ell,\theta}(\vec{\bm{Y}}_{v_{\ell}}) \right\} \sum_{\tilde{\bm{v}} \in \mathcal{V}_{j_1,j_2}(\bm{v})} \prod_{\ell' \neq j_2} \mathbb{E}_{\theta} \left\{ \phi_{\ell',\theta}(\bm{Y}^{(k_2)}_{\tilde{v}_{\ell'}}) \right\} \\
   = &\sum_{\bm{v} \in \mathcal{V}} \left[ \prod_{\ell = 1}^L   \tilde{\phi}_{\ell,\theta}(\vec{\bm{Y}}_{v_{\ell}}) - \prod_{\ell=1}^L \mathbb{E}_{\theta} \left\{ \tilde{\phi}_{\ell,\theta}(\vec{\bm{Y}}_{v_{\ell}}) \right\}\right] \sum_{\tilde{\bm{v}} \in \mathcal{V}_{j_1,j_2}(\bm{v})} \prod_{\ell' \neq j_2} \phi_{\ell',\theta}(\bm{Y}^{(k_2)}_{\tilde{v}_{\ell'}}) \\
   &\quad + \sum_{\bm{v} \in \mathcal{V}} \prod_{\ell = 1}^L   \mathbb{E}_{\theta} \left\{ \tilde{\phi}_{\ell,\theta}(\vec{\bm{Y}}_{v_{\ell}}) \right\} \sum_{\tilde{\bm{v}} \in \mathcal{V}_{j_1,j_2}(\bm{v})} \left[ \prod_{\ell' \neq j_2} \phi_{\ell',\theta}(\bm{Y}^{(k_2)}_{\tilde{v}_{\ell'}}) - \prod_{\ell' \neq j_2} \mathbb{E}_{\theta} \left\{ \phi_{\ell',\theta}(\bm{Y}^{(k_2)}_{\tilde{v}_{\ell'}}) \right\} \right] \\
   &\quad + \sum_{\bm{v} \in \mathcal{V}} \left[ \prod_{\ell = 1}^L   \tilde{\phi}_{\ell,\theta}(\vec{\bm{Y}}_{v_{\ell}}) - \prod_{\ell=1}^L \mathbb{E}_{\theta} \left\{ \tilde{\phi}_{\ell,\theta}(\vec{\bm{Y}}_{v_{\ell}}) \right\}\right] \cdot \sum_{\tilde{\bm{v}} \in \mathcal{V}_{j_1,j_2}(\bm{v})} \left[ \prod_{\ell' \neq j_2} \phi_{\ell',\theta}(\bm{Y}^{(k_2)}_{\tilde{v}_{\ell'}}) - \prod_{\ell' \neq j_2} \mathbb{E}_{\theta} \left\{ \phi_{\ell',\theta}(\bm{Y}^{(k_2)}_{\tilde{v}_{\ell'}}) \right\} \right].
  \end{align*}
  This three term decomposition is identical to \citesupp{chang22estimationB}, Proof of Theorem 4.
  Their proof can be used to conclude \eqref{gamma_kj}, noting that $\tilde{\phi}_{\ell,\theta}(\vec{\bm{Y}}_{v_{\ell}})$ is a bounded random variable for all $\ell=1,\ldots,L$.
\end{proof}

Towards the proof of Proposition~\ref{prop:adaptive_distn}., we state and prove the following supporting lemmas.

\begin{lemma} \label{lem:C_to_S_weights}
  Suppose Assumptions~\ref{assump:c22} and \ref{assump:ho_cov} both hold.
  Let
  \begin{align*}
    \bm{w}_*(\theta) &= \frac{\operatorname{Cov}_{\theta}^{-1}(\widetilde{\bm{C}}_H) \bm{1}_K}{\bm{1}_K^{\tp} \operatorname{Cov}_{\theta}^{-1}(\widetilde{\bm{C}}_H) \bm{1}_K} \\
    \bm{w}_{**}(\theta) &= \frac{\operatorname{Cov}_{\theta}^{-1}(\bm{S}_H) \bm{1}_K}{\bm{1}_K^{\tp} \operatorname{Cov}_{\theta}^{-1}(\bm{S}_H) \bm{1}_K}.
  \end{align*}
  Then
  \[
    \lVert \bm{w}_*(\theta) - \bm{w}_{**}(\theta) \rVert_2 \rightarrow 0
  \]
  as $n \rightarrow \infty$.
\end{lemma}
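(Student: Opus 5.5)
The plan is to compare the two normalized covariance matrices directly and then use continuity of the weight map on the set of well-conditioned matrices. The normalizing factor $\binom{n}{2}$ cancels in the ratio defining the weights. So it suffices to work with $\Sigma_C(n)$ from \eqref{sigma_C} and $\Sigma_S(n) = \operatorname{Cov}_\theta(\bm{S}_H)$, writing
\[
\bm{w}_*(\theta) = g(\Sigma_C(n)), \qquad \bm{w}_{**}(\theta) = g(\Sigma_S(n)), \qquad g(\Sigma) = \frac{\Sigma^{-1}\bm{1}_K}{\bm{1}_K^{\tp}\Sigma^{-1}\bm{1}_K}.
\]
The key intermediate claim is $\lVert \Sigma_C(n) - \Sigma_S(n) \rVert_2 \to 0$.

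\textbf{Step 1: the two covariance matrices are asymptotically equal.} Write $\bm{R}_n = \binom{n}{2}^{1/2}(\widetilde{\bm{C}}_H - C_H\bm{1}_K) - \bm{S}_H$. Then
\[
\Sigma_C(n) = \Sigma_S(n) + \operatorname{Cov}(\bm{S}_H,\bm{R}_n) + \operatorname{Cov}(\bm{R}_n,\bm{S}_H) + \operatorname{Cov}(\bm{R}_n).
\]
Corollary~\ref{cor:ho_S_approx} only gives $\bm{R}_n = o_{\mathbb{P}}(1)$, which does not control covariances, so I would upgrade it to $\expect\lVert \bm{R}_n\rVert_2^2 \to 0$.

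The natural route is the Hoeffding-type decomposition used in \citesupp{chang22estimationB}. Expand each product $\prod_\ell \phi^{(k)}_{\ell,\theta}(\bm{Y}^{(k)}_{v_\ell})$ around its mean edge by edge. The degree-one term gives $S_H^{(k)}$. Each higher-degree term is a sum of products of at least two centered, independent (across node pairs) bounded variables. Its second moment is bounded by $|\mathcal{V}|^{-2}\sum_{\bm{v}}\#\{\tilde{\bm{v}}\text{ sharing }\geq 2\text{ pairs}\} \lesssim \aleph_{\mathcal{V}}(L-2)/|\mathcal{V}|$. Multiplied by $\binom{n}{2}$, this becomes $n^2 \aleph_{\mathcal{V}}(L-2)/|\mathcal{V}|$. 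By the rate condition $\aleph_{\mathcal{V}}(L-1)/|\mathcal{V}| \asymp n^{-2}$ and Assumption~\ref{assump:c22}(1), this is $o(1)$, giving $\expect\lVert\bm{R}_n\rVert_2^2 = o(1)$. (I read the relevant index so that the count with at least two pairs fixed is $o(\aleph_{\mathcal{V}}(L-1))$.)

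The cross terms then follow by Cauchy--Schwarz: $\lVert\operatorname{Cov}(\bm{S}_H,\bm{R}_n)\rVert \le (\expect\lVert\bm{S}_H\rVert^2)^{1/2}(\expect\lVert\bm{R}_n\rVert^2)^{1/2}$. The first factor is $O(1)$ because $\Sigma_S(n) = \binom{n}{2}^{-1}\sum \bm{M}_{ij}\operatorname{Cov}(\Yvec_{ij})\bm{M}_{ij}$ has bounded entries. The diagonal entries of $\bm{M}_{ij}$ are $O(1)$, since each is $\binom{n}{2}\,|\mathcal{V}|^{-1}$ times a count of order $\aleph_{\mathcal{V}}(L-1)$ with bounded factors, with $y_k-x_k\neq 0$ fixed. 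Hence $\lVert\Sigma_C(n)-\Sigma_S(n)\rVert_2\to 0$.

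\textbf{Step 2: continuity of the weights.} By Assumption~\ref{assump:ho_cov}, $\lambda_{\min}(\Sigma_C(n))\ge c$. Weyl's inequality and Step 1 then give $\lambda_{\min}(\Sigma_S(n))\ge c/2$ for large $n$, and both spectra are bounded above by $2C$. On the compact set $\{\Sigma \succeq 0: c/2 \le \lambda_{\min},\ \lambda_{\max}\le 2C\}$, the map $g$ is Lipschitz. Indeed, $\lVert\Sigma_1^{-1}-\Sigma_2^{-1}\rVert \le (2/c)^2\lVert\Sigma_1-\Sigma_2\rVert$, and the denominator satisfies $\bm{1}_K^{\tp}\Sigma^{-1}\bm{1}_K \ge K/(2C) > 0$. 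Therefore $\lVert\bm{w}_*-\bm{w}_{**}\rVert_2 \le L_g\lVert\Sigma_C(n)-\Sigma_S(n)\rVert_2\to 0$.

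\textbf{Main obstacle.} The hard part is Step 1: converting the in-probability linearization of Corollary~\ref{cor:ho_S_approx} into an $L^2$ statement. This requires redoing the variance computation of the higher-order Hoeffding terms, tracking how $\aleph_{\mathcal{V}}$ counts overlapping tuples, and checking that every such term is $o(n^{-2})$ in variance. If that bookkeeping proves delicate, I would fall back on uniform integrability. The quantity $\binom{n}{2}(\widetilde{C}_H^{(k)}-C_H)^2$ has bounded expectation by the computation behind Corollary~\ref{cor:ho_order}; if I can also show a bounded higher moment for it, then $\bm{R}_n = o_{\mathbb{P}}(1)$ together with uniform integrability of $\lVert\bm{R}_n\rVert^2$ gives $\expect\lVert\bm{R}_n\rVert^2\to0$.
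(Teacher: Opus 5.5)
Your proposal is correct and follows essentially the same route as the paper. The paper reduces to $\lVert \Sigma_C(n) - \operatorname{Cov}_\theta(\bm{S}_H)\rVert_2 \to 0$, splits off the higher-order remainder with $\operatorname{Var}(R_H^{(k)}) \lesssim \binom{n}{2}\aleph_{\mathcal{V}}(L-2)/\lvert\mathcal{V}\rvert = o(1)$, bounds the covariance entries by Cauchy--Schwarz, and concludes via continuity of the weight map; your Step~2 is exactly Lemma~\ref{lem:cov_to_weights}, and like the paper you also need the rate condition $\aleph_{\mathcal{V}}(L-1)/\lvert\mathcal{V}\rvert \asymp n^{-2}$, which is not among the lemma's stated hypotheses.
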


In order to prove Lemma~\ref{lem:C_to_S_weights}, we state and prove another technical lemma.

\begin{lemma} \label{lem:cov_to_weights}
  Suppose $\{A_n\}_{n=1}^{\infty}$ and $\{B_n\}_{n=1}^{\infty}$ are two sequences of $q \times q$ symmetric matrices such that
  $$
    0 < c \leq \lambda_{\min}(A_n) \leq \lambda_{\max}(A_n) \leq C < \infty
  $$
  uniformly over $n$, and $\lVert A_n - B_n \rVert_2 = o(1)$. Then
  $$
    \left\lVert \frac{A_n^{-1} \bm{1}_q}{\bm{1}_q^{\tp} A_n^{-1} \bm{1}_q} - \frac{B_n^{-1} \bm{1}_q}{\bm{1}_q^{\tp} B_n^{-1} \bm{1}_q} \right\rVert_2 = o(1).
  $$
\end{lemma}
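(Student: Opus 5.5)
The plan is to show that the map $A \mapsto A^{-1}\bm{1}_q/(\bm{1}_q^{\tp}A^{-1}\bm{1}_q)$ is uniformly continuous on the set of symmetric matrices with spectrum in $[c/2, 2C]$, and that $B_n$ eventually lies in that set. First, by Weyl's inequality, $\lvert \lambda_i(B_n) - \lambda_i(A_n) \rvert \leq \lVert A_n - B_n \rVert_2 = o(1)$. Hence for all $n$ large enough, $\lambda_{\min}(B_n) \geq c/2$ and $\lambda_{\max}(B_n) \leq 2C$. In particular $B_n$ is invertible, and $\lVert B_n^{-1} \rVert_2 \leq 2/c$, while $\lVert A_n^{-1} \rVert_2 \leq 1/c$.

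Next, I would control the difference of inverses through the resolvent identity
$$A_n^{-1} - B_n^{-1} = A_n^{-1}(B_n - A_n)B_n^{-1}.$$
This gives $\lVert A_n^{-1} - B_n^{-1} \rVert_2 \leq (2/c^2)\lVert A_n - B_n\rVert_2 = o(1)$. Write $a_n = A_n^{-1}\bm{1}_q$ and $b_n = B_n^{-1}\bm{1}_q$. Then $\lVert a_n - b_n \rVert_2 \leq \sqrt{q}\, o(1)$ and $\lvert \bm{1}_q^{\tp}a_n - \bm{1}_q^{\tp}b_n \rvert \leq q\, o(1)$. The numerators are bounded: $\lVert a_n \rVert_2 \leq \sqrt{q}/c$ and $\lVert b_n \rVert_2 \leq 2\sqrt{q}/c$.

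The key point, and the only real obstacle, is that the denominators must stay bounded away from zero. This holds because the Rayleigh quotient gives $\bm{1}_q^{\tp}A_n^{-1}\bm{1}_q \geq q/\lambda_{\max}(A_n) \geq q/C$, and similarly $\bm{1}_q^{\tp}B_n^{-1}\bm{1}_q \geq q/(2C)$, since both matrices are positive definite. If the $B_n$ were only symmetric and not eventually positive definite, this step would fail; Weyl's inequality is what rescues it.

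Finally, I would use the standard decomposition
$$\frac{a_n}{\alpha_n} - \frac{b_n}{\beta_n} = \frac{a_n - b_n}{\alpha_n} + b_n\,\frac{\beta_n - \alpha_n}{\alpha_n\beta_n},$$
where $\alpha_n = \bm{1}_q^{\tp}a_n$ and $\beta_n = \bm{1}_q^{\tp}b_n$. Bounding each term with the estimates above gives an $o(1)$ bound in $\lVert\cdot\rVert_2$, with constants depending only on $q$, $c$ and $C$. This completes the argument.
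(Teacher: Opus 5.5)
Your proof is correct and follows essentially the same route as the paper: the same split $(a_n-b_n)/\alpha_n + b_n(1/\alpha_n-1/\beta_n)$, the same Rayleigh lower bound $\bm{1}_q^{\tp}A_n^{-1}\bm{1}_q \geq q/C$, and the same $O(\lVert A_n-B_n\rVert_2)$ bound on $\lVert A_n^{-1}-B_n^{-1}\rVert_2$. The only difference is cosmetic: you control $\lVert B_n^{-1}\rVert_2$ and $\bm{1}_q^{\tp}B_n^{-1}\bm{1}_q$ through Weyl's inequality, whereas the paper perturbs the corresponding $A_n$ quantities (e.g.\ $\bm{1}_q^{\tp}B_n^{-1}\bm{1}_q \geq q/C - \lvert \bm{1}_q^{\tp}(A_n^{-1}-B_n^{-1})\bm{1}_q\rvert$), so positive definiteness of $B_n$ is not as essential to the denominator bound as your remark suggests.
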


\begin{proof}
  Note that
  $$
    \bm{1}_q^{\tp} A_n^{-1} \bm{1}_q \geq q \lambda_{\min}(A_n^{-1}) = \frac{q}{\lambda_{\max}(A_n)} \geq \frac{q}{C},
  $$
  so
  $$
    \frac{1}{\bm{1}_q^{\tp} A_n^{-1} \bm{1}_q} \leq \frac{C}{q}.
  $$
  Fix $n$ sufficiently large such that $\lVert A_n - B_n \rVert_2 \leq c/2$ and $\lvert \bm{1}_q^{\tp} (A_n - B_n) \bm{1}_q \rvert \leq q/(2C)$.
  \begin{align*}
    &\left\lVert \frac{A_n^{-1} \bm{1}_q}{\bm{1}_q^{\tp} A_n^{-1} \bm{1}_q} - \frac{B_n^{-1} \bm{1}_q}{\bm{1}_q^{\tp} B_n^{-1} \bm{1}_q} \right\rVert_2 \\
    \leq &\frac{\sqrt{q}}{\bm{1}_q^{\tp} A_n^{-1} \bm{1}_q} \lVert A_n^{-1} - B_n^{-1} \rVert_2 + \sqrt{q} \lVert B_n^{-1} \rVert_2 \left\lvert \frac{1}{\bm{1}_q^{\tp} A_n^{-1} \bm{1}_q} - \frac{1}{\bm{1}_q^{\tp} B_n^{-1} \bm{1}_q} \right\rvert \\
    \leq &\frac{C}{\sqrt{q}} \lVert A_n^{-1} - B_n^{-1} \rVert_2 + \left( \frac{\sqrt{q}}{c} + \sqrt{q} \lVert A_n^{-1} - B_n^{-1} \rVert_2 \right) \left\lvert \frac{1}{\bm{1}_q^{\tp} A_n^{-1} \bm{1}_q} - \frac{1}{\bm{1}_q^{\tp} B_n^{-1} \bm{1}_q} \right\rvert \\
  \end{align*}
  The result will follow if
  \begin{align*}
    \lVert A_n^{-1} - B_n^{-1} \rVert_2 &= o(1), \\
    \left\lvert \frac{1}{\bm{1}_q^{\tp} A_n^{-1} \bm{1}_q} - \frac{1}{\bm{1}_q^{\tp} B_n^{-1} \bm{1}_q} \right\rvert &= o(1).
  \end{align*}

  Since $\lVert A_n - B_n \rVert_2 = o(1)$, suppose $n$ is sufficiently large such that $\lVert A_n - B_n \rVert_2 < c/2$. Then
  \begin{align*}
    \lVert A_n^{-1} - B_n^{-1} \rVert_2 &\leq \frac{\lVert A_n - B_n \rVert_2}{\lambda_{\min}(A_n) \left\{ \lambda_{\min}(A_n) - \lVert A_n - B_n \rVert_2 \right\} } \\
    &\leq \frac{\lVert A_n - B_n \rVert_2}{c \left\{ c - \lVert A_n - B_n \rVert_2 \right\} } = o(1).
  \end{align*}

  Then
  $$
    \lvert \bm{1}_q^{\tp} (A_n^{-1} - B_n^{-1}) \bm{1}_q \rvert < q \lVert A_n^{-1} - B_n^{-1} \rVert_2 = o(1),
  $$
  so also suppose $n$ is sufficiently large such that
  $$
    \lvert \bm{1}_q^{\tp} (A_n^{-1} - B_n^{-1}) \bm{1}_q \rvert \leq q/(2C).
  $$
  \begin{align*}
    \left\lvert \frac{1}{\bm{1}_q^{\tp} A_n^{-1} \bm{1}_q} - \frac{1}{\bm{1}_q^{\tp} B_n^{-1} \bm{1}_q} \right\rvert &\leq \frac{\lvert \bm{1}_q^{\tp} (A_n^{-1} - B_n^{-1})\bm{1}_q \rvert}{\bm{1}_q^{\tp} A_n^{-1} \bm{1}_q \left\{ \bm{1}_q^{\tp} A_n^{-1} \bm{1}_q - \lvert \bm{1}_q^{\tp} (A_n^{-1} - B_n^{-1}) \bm{1}_q \rvert \right\} } \\
    &\leq \frac{\lvert \bm{1}_q^{\tp} (A_n^{-1} - B_n^{-1})\bm{1}_q \rvert}{(q/C) \left\{ (q/C)- \lvert \bm{1}_q^{\tp} (A_n^{-1} - B_n^{-1}) \bm{1}_q \rvert \right\} } = o(1).\\
  \end{align*}
\end{proof}

\begin{remark} \label{rem:cov_to_weights_prob}
  Lemma~\ref{lem:cov_to_weights} will continue to hold (with convergence in probability) if $B_n$ is a stochastic sequence of matrices, and $\lVert A_n - B_n \rVert = o_{\prob}(1)$.
\end{remark}

We now proceed to prove the original supporting Lemma~\ref{lem:C_to_S_weights}.

\begin{proof}
  First note that
  $$
    \bm{w}_*(\theta) = \frac{\operatorname{Cov}_{\theta}^{-1}(\widetilde{\bm{C}}_H) \bm{1}_K}{\bm{1}_K^{\tp} \operatorname{Cov}_{\theta}^{-1}(\widetilde{\bm{C}}_H) \bm{1}_K} =\frac{\Sigma_C^{-1}(n) \bm{1}_K}{\bm{1}_K^{\tp} \Sigma_C^{-1}(n) \bm{1}_K}
  $$
  where $\Sigma_C(n)$ is defined in \eqref{sigma_C}.

  By Lemma~\ref{lem:cov_to_weights} and Assumption~\ref{assump:ho_cov}, it suffices to show that
  $$
    \lVert \Sigma_C(n) - \operatorname{Cov}_{\theta}(\bm{S}_H) \rVert_2 = o(1).
  $$
  It is sufficient to show that the difference in $(k_1,k_2)$ entries of these matrices converge for each pair $k_1, k_2 \in \{1,\ldots,K\}$.

  For any $k$, we decompose
  $$
    \binom{n}{2}^{1/2}(\tilde{C}_H^{(k)} - C_H) = S_H^{(k)} + R_H^{(k)}
  $$
  where
  $$
    R_H^{(k)} = \binom{n}{2}^{1/2} \frac{1}{\{ y_k(\theta) - x_k(\theta) \}^L}\sum_{\substack{g_1 + \cdots + g_L \geq 2 \\ g_1,\ldots,g_L \in \{0,1\}}} \frac{1}{\lvert \mathcal{V} \rvert} \sum_{\bm{v} \in \mathcal{V}} \prod_{\ell = 1}^L \left\{ \phi_{\ell,\theta}(\bm{Y}^{(k)}_{\bm{v}_{\ell}}) - \expect \phi_{\ell,\theta}(\bm{Y}^{(k)}_{\bm{v}_{\ell}}) \right\}^{g_{\ell}} \left\{ \expect \phi_{\ell,\theta}(\bm{Y}^{(k)}_{\bm{v}_{\ell}}) \right\}^{1 - g_{\ell}},
  $$
  and both $S_H^{(k)}$ and $R_H^{(k)}$ have mean zero.
  The proof of Proposition 2 in \citesupp{chang22estimationB} can be easily adapted to show the following inequalities of the moments of $S_H^{(k)}$ and $R_H^{(k)}$:
  \begin{equation} \label{var_S}
    \var\left( S_H^{(k)} \right) \lesssim \binom{n}{2} \frac{\aleph_{\mathcal{V}}(L-1)}{\lvert \mathcal{V} \rvert} = O(1),
  \end{equation}
  and
  \begin{equation} \label{var_R}
    \var\left( R_H^{(k)} \right) \lesssim \binom{n}{2} \frac{\aleph_{\mathcal{V}}(L-2)}{\lvert \mathcal{V} \rvert} = \binom{n}{2} \frac{\aleph_{\mathcal{V}}(L-2)}{\aleph_{\mathcal{V}}(L-1)} \frac{\aleph_{\mathcal{V}}(L-1)}{\lvert \mathcal{V} \rvert} = o(1),
  \end{equation}
  both by assumption.

  Then
  \begin{align*}
    &\cov\left\{ \binom{n}{2}^{1/2}(\tilde{C}_H^{(k_1)} - C_H), \binom{n}{2}^{1/2}(\tilde{C}_H^{(k_2)} - C_H)\right\}  \\
    = &\cov\left( S_H^{(k_1)} + R_H^{(k_1)}, S_H^{(k_2)} + R_H^{(k_2)} \right),
  \end{align*}
  so that
  \begin{align*}
    &\lvert \cov\left\{ \binom{n}{2}^{1/2}(\tilde{C}_H^{(k_1)} - C_H), \binom{n}{2}^{1/2}(\tilde{C}_H^{(k_2)} - C_H)\right\} - \cov\left( S_H^{(k_1)}, S_H^{(k_2)} \right) \rvert \\
    \leq &\sqrt{\var(S_H^{(k_1)})\var(R_H^{(k_2)})} + \sqrt{\var(R_H^{(k_1)})\var(S_H^{(k_2)})} + \sqrt{\var(R_H^{(k_1)})\var(R_H^{(k_2)})} = o(1).
  \end{align*}
  by Cauchy-Schwarz inequality. This completes the proof.
\end{proof}

A final supporting lemma is used to justify the use of the bootstrapped samples to estimate a weight vector, and combine the subgraph density estimators from each network snapshot.
\begin{lemma} \label{prop:boot_weights}
  Under the conditions of Theorem~\ref{thm:ho_bootdist}, and Assumption~\ref{assump:ho_cov}, let
  \[
    \bm{w}^{\dagger}(\bm{Y};\theta) = \frac{\operatorname{Cov}^{-1}(\bm{S}^{\dagger}_H ~\vert~ \bm{Y}) \bm{1}_K}{\bm{1}_K^{\tp} \operatorname{Cov}^{-1}(\bm{S}^{\dagger}_H ~\vert~ \bm{Y}) \bm{1}_K}.
  \]
  Then
  \begin{equation*}
    \lVert \operatorname{Cov}(\bm{S}^{\dagger}_H ~\vert~ \bm{Y}) - \Sigma_C(n) \rVert_2 = o_{\prob}(1), \quad
    \lVert \bm{w}^{\dagger}(\bm{Y};\theta) - \bm{w}_*(\theta) \rVert_2 = o_{\prob}(1)
  \end{equation*}
  as $n \rightarrow \infty$, where $\bm{w}_*(\theta)$ is defined in \eqref{optwt}.
\end{lemma}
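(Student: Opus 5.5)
The plan is to prove the first (covariance) claim entrywise; the weight claim will then follow from Lemma~\ref{lem:cov_to_weights} together with Remark~\ref{rem:cov_to_weights_prob}. Concretely, take $A_n = \Sigma_C(n)$ and $B_n = \operatorname{Cov}(\bm{S}^{\dagger}_H \mid \bm{Y})$. Assumption~\ref{assump:ho_cov} provides the uniform eigenvalue bounds on $A_n$, and $\bm{w}_*(\theta)$ in \eqref{optwt} is exactly $A_n^{-1}\bm{1}_K/(\bm{1}_K^{\tp}A_n^{-1}\bm{1}_K)$, because $\binom{n}{2}$ cancels between numerator and denominator. Since $K$ is fixed, spectral-norm convergence is equivalent to convergence of each of the $K^2$ entries. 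So it suffices to show, for each pair $(k_1,k_2)$,
\[
  \bigl\lvert [\operatorname{Cov}(\bm{S}^{\dagger}_H \mid \bm{Y})]_{k_1k_2} - [\Sigma_C(n)]_{k_1k_2} \bigr\rvert = o_{\prob}(1).
\]

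To do this, I insert the intermediate matrix $\operatorname{Cov}_\theta(\bm{S}_H)$. The proof of Lemma~\ref{lem:C_to_S_weights} already gives $\lVert \Sigma_C(n) - \operatorname{Cov}_\theta(\bm{S}_H)\rVert_2 = o(1)$, using \eqref{var_S}, \eqref{var_R} and Cauchy--Schwarz. It therefore remains to compare $\operatorname{Cov}(\bm{S}^{\dagger}_H\mid\bm{Y})$ with $\operatorname{Cov}_\theta(\bm{S}_H)$. For this I reuse the computation in the proof of Proposition~\ref{thm:ho_bootdist}. Conditional on $\bm{Y}$, the $\Yvec^{\dagger}_{ij}$ are independent and mean zero across node pairs. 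Expanding \eqref{ho_Sboot} therefore gives
\[
  [\operatorname{Cov}(\bm{S}^{\dagger}_H\mid\bm{Y})]_{k_1k_2} = \binom{n}{2}\sum_{j_1,j_2}\frac{(-1)^{2-\tau_{j_1}-\tau_{j_2}}\,\Gamma^{\dagger}_{k_1,k_2,j_1,j_2}}{\lvert\mathcal{V}\rvert^{2}\{y_{k_1}(\theta)-x_{k_1}(\theta)\}^L\{y_{k_2}(\theta)-x_{k_2}(\theta)\}^L},
\]
with $\Gamma^{\dagger}$ as in \eqref{gammadagger_defn}. The same expansion of $\operatorname{Cov}_\theta(\bm{S}_H)$ gives the identical expression with $\Gamma$ from \eqref{gamma_defn} in place of $\Gamma^{\dagger}$. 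Equivalently, this is the choice $\bm{a}=\bm{e}_{k_1}+\bm{e}_{k_2}$ (with polarization) in the quantities $\eta,\eta^{\dagger}$ from that proof. The denominators are nonzero constants because $y_k(\theta)\neq x_k(\theta)$ for every $k$.

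The key step, and the main obstacle, is then \eqref{gamma_kj}: $\lvert\mathcal{V}\rvert^{-2}\lvert\Gamma^{\dagger}_{k_1,k_2,j_1,j_2}-\Gamma_{k_1,k_2,j_1,j_2}\rvert = o_{\prob}(n^{-2})$. Multiplying by $\binom{n}{2}\asymp n^2$ turns this into the required $o_{\prob}(1)$ entrywise bound. The proof of Proposition~\ref{thm:ho_bootdist} establishes this through the three-term decomposition borrowed from \citesupp{chang22estimationB}. That argument relies on two inputs. The first is Proposition~\ref{prop:boot_sampler}, which makes each $\tilde\phi_{j_1,\theta}$ unbiased for $\cov(\bm{Y}^{(k_1)}_{v_{j_1}},\bm{Y}^{(k_2)}_{v_{j_1}})$. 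The second is boundedness of every $\tilde\phi$ and $\phi$ factor; here $\Sigma^{\dagger}_0,\Sigma^{\dagger}_1$ are fixed matrices, so their entries are bounded. I would also check that the variance bounds used there are controlled by $\aleph_{\mathcal{V}}(L-1)/\lvert\mathcal{V}\rvert = O(n^{-2})$ and by part 1 of Assumption~\ref{assump:c22}. Those conditions are exactly the hypotheses of Proposition~\ref{thm:ho_bootdist}, which we are assuming here.

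Combining the two comparisons with the triangle inequality gives $\lVert \operatorname{Cov}(\bm{S}^{\dagger}_H\mid\bm{Y}) - \Sigma_C(n)\rVert_2 = o_{\prob}(1)$. Lemma~\ref{lem:cov_to_weights}, in its stochastic form from Remark~\ref{rem:cov_to_weights_prob}, then yields $\lVert\bm{w}^{\dagger}(\bm{Y};\theta)-\bm{w}_*(\theta)\rVert_2=o_{\prob}(1)$. As a byproduct, the same eigenvalue bounds ensure that $\operatorname{Cov}(\bm{S}^{\dagger}_H\mid\bm{Y})$ is invertible with probability tending to one, so $\bm{w}^{\dagger}$ is well defined.
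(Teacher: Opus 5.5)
Your proposal is correct and follows essentially the same route as the paper. You reduce to an entrywise comparison, express $\operatorname{Cov}(\bm{S}_H^{\dagger}\mid\bm{Y})$ and $\operatorname{Cov}_\theta(\bm{S}_H)$ through $\Gamma^{\dagger}$ and $\Gamma$, invoke \eqref{gamma_kj}, and pass to the weights via Lemma~\ref{lem:cov_to_weights} and Remark~\ref{rem:cov_to_weights_prob}; your explicit triangle-inequality step through $\operatorname{Cov}_\theta(\bm{S}_H)$, using $\lVert\Sigma_C(n)-\operatorname{Cov}_\theta(\bm{S}_H)\rVert_2=o(1)$ from inside the proof of Lemma~\ref{lem:C_to_S_weights}, simply makes explicit the first claim of the lemma, which the paper's proof leaves implicit.
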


\begin{proof}
  By Lemma~\ref{lem:C_to_S_weights}, Lemma~\ref{lem:cov_to_weights}, and Remark~\ref{rem:cov_to_weights_prob}, it suffices to show that
  \begin{equation} \label{ho_bootcov_convergence}
    \lVert \cov(\bm{S}_H^{\dagger} ~\vert~ \bm{Y}) - \cov(\bm{S}_H) \rVert_2 = o_{\prob}(1).
  \end{equation}
  Similar to the proof of Lemma~\ref{lem:C_to_S_weights}, we show this by proving that the $(k_1,k_2)$ entries converge in probability for fixed $k_1, k_2 \in \{1,\ldots,K\}$.

  Recall the definition,
  $$
    S_H^{(k)} = \binom{n}{2}^{1/2} \frac{1}{\{ y_k(\theta) - x_k(\theta) \}^L} \frac{1}{\lvert \mathcal{V} \rvert} \sum_{j=1}^L (-1)^{1-\tau_j} \sum_{\bm{v} \in \mathcal{V}} (\bm{Y}^{(k)}_{\bm{v}_j} - \expect \bm{Y}^{(k)}_{\bm{v}_j} ) \prod_{\ell \neq j} \expect \phi_{\ell,\theta}(\bm{Y}^{(k)}_{\bm{v}_{\ell}}).
  $$
  To compute $\cov(S_H^{(k_1)},S_H^{(k_2)})$, consider the pairwise covariances of the terms with indices $(k_1,j_1,\bm{v}), (k_2,j_2,\tilde{\bm{v}})$:
  $$
    \cov\left\{ (\bm{Y}^{(k_1)}_{\bm{v}_{j_1}} - \expect \bm{Y}^{(k_1)}_{\bm{v}_{j_1}} ) \prod_{\ell \neq j_1} \expect \phi_{\ell,\theta}(\bm{Y}^{(k_1)}_{\bm{v}_{\ell}}), (\bm{Y}^{(k_2)}_{\tilde{\bm{v}}_{j_2}} - \expect \bm{Y}^{(k_2)}_{\tilde{\bm{v}}_{j_2}} ) \prod_{\ell \neq j_2} \expect \phi_{\ell,\theta}(\bm{Y}^{(k_2)}_{\tilde{\bm{v}}_{\ell}}) \right\}
  $$
  Note that this covariance is zero unless $\bm{v}_{j_1} = \tilde{\bm{v}}_{j_2}$, that is for terms with $\tilde{\bm{v}} \in \mathcal{V}_{j_1,j_2}(\bm{v})$, as defined in \eqref{Vj_set}.

  Expanding over the summations, we can simplify
  $$
    \cov(S_H^{(k_1)},S_H^{(k_2)}) = \binom{n}{2} \frac{1}{\{ y_{k_1}(\theta) - x_{k_1}(\theta) \}^L\{ y_{k_2}(\theta) - x_{k_2}(\theta) \}^L} \sum_{j_1,j_2=1}^L (-1)^{2 - \tau_{j_1} - \tau_{j_2}}\lvert \mathcal{V} \rvert^{-2} \Gamma_{k_1,k_2,j_1,j_2},
  $$
  where $\Gamma_{k_1,k_2,j_1,j_2}$ is defined in \eqref{gamma_defn}.

  Similar manipulations can be performed to show that
  $$
    \cov(S_H^{\dagger,(k_1)},S_H^{\dagger,(k_2)} ~\vert~ \bm{Y}) = \binom{n}{2} \frac{1}{\{ y_{k_1}(\theta) - x_{k_1}(\theta) \}^L\{ y_{k_2}(\theta) - x_{k_2}(\theta) \}^L} \sum_{j_1,j_2=1}^L (-1)^{2 - \tau_{j_1} - \tau_{j_2}}\lvert \mathcal{V} \rvert^{-2} \Gamma^{\dagger}_{k_1,k_2,j_1,j_2},
  $$
  where $\Gamma^{\dagger}_{k_1,k_2,j_1,j_2}$ is defined in \eqref{gammadagger_defn}.

  Thus we have
  $$
  \left\lvert \cov(S_H^{\dagger,(k_1)},S_H^{\dagger,(k_2)} ~\vert~ \bm{Y}) - \cov(S_H^{(k_1)},S_H^{(k_2)}) \right\rvert \lesssim n^2 \sum_{j_1,j_2=1}^L \lvert \mathcal{V} \rvert^{-2} \lvert \Gamma^{\dagger}_{k_1,k_2,j_1,j_2} - \Gamma_{k_1,k_2,j_1,j_2} \rvert.
  $$
  By \eqref{gamma_kj} (from the Proof of Theorem~\ref{thm:ho_bootdist}), we have
  $$
    \left\lvert \cov(S_H^{\dagger,(k_1)},S_H^{\dagger,(k_2)} ~\vert~ \bm{Y}) - \cov(S_H^{(k_1)},S_H^{(k_2)}) \right\rvert = o_{\prob}(1)
  $$
  Which suffices to prove \eqref{ho_bootcov_convergence}.
\end{proof}

With these lemmas in hand, we prove Proposition~\ref{prop:adaptive_distn}.

\begin{proof}
  \begin{align*}
    \binom{n}{2}^{1/2} \left( \widehat{C}_H - C_H\right) &= \binom{n}{2}^{1/2} \left\{ \bm{w}^{\dagger}(\bm{Y};\theta)^{\tp} \widetilde{\bm{C}}_H - C_H \right\} \\
    &= \binom{n}{2}^{1/2} \left\{ \bm{w}^{\dagger}(\bm{Y};\theta)^{\tp}\widetilde{\bm{C}}_H  - \bm{w}_*(\theta)^{\tp} \widetilde{\bm{C}}_H + \bm{w}_*(\theta)^{\tp} \widetilde{\bm{C}}_H - C_H \right\}.
  \end{align*}
  Thus, by Slutsky's Theorem, it suffices to show
  \begin{equation} \label{adaptive_convergence}
    \binom{n}{2}^{1/2} \left\{ \bm{w}^{\dagger}(\bm{Y};\theta)^{\tp}\widetilde{\bm{C}}_H  - \bm{w}_*(\theta)^{\tp} \widetilde{\bm{C}}_H \right\} = o_{\prob}(1).
  \end{equation}

  By construction,
  $$
    1 = \bm{w}^{\dagger}(\bm{Y};\theta)^{\tp}\bm{1}_K = \bm{w}_*(\theta)^{\tp}\bm{1}_K,
  $$
  which implies that
  \begin{equation*} 
    0 = \bm{w}_*(\theta)^{\tp}(C_H\bm{1}_K) - \bm{w}^{\dagger}(\bm{Y};\theta)^{\tp}(C_H \bm{1}_K).
  \end{equation*}
  Thus,
  \begin{align*}
    &\binom{n}{2}^{1/2} \left\{ \bm{w}^{\dagger}(\bm{Y};\theta)^{\tp}\widetilde{\bm{C}}_H  - \bm{w}_*(\theta)^{\tp} \widetilde{\bm{C}}_H \right\} \\
    = &\binom{n}{2}^{1/2} \left\{ \bm{w}^{\dagger}(\bm{Y};\theta)^{\tp}\widetilde{\bm{C}}_H  - \bm{w}^{\dagger}(\bm{Y};\theta)^{\tp}(C_H \bm{1}_K) + \bm{w}_*(\theta)^{\tp}(C_H\bm{1}_K) - \bm{w}_*(\theta)^{\tp} \widetilde{\bm{C}}_H \right\} \\
    = &\bm{w}^{\dagger}(\bm{Y};\theta)^{\tp} \left\{ \binom{n}{2}^{1/2}(\widetilde{\bm{C}}_H - C_H \bm{1}_K) \right\} - \bm{w}_*(\theta)^{\tp}\left\{ \binom{n}{2}^{1/2}(\widetilde{\bm{C}}_H - C_H \bm{1}_K) \right\} \\
    = &\left\{  \bm{w}^{\dagger}(\bm{Y};\theta) - \bm{w}_*(\theta)  \right\}^{\tp} \left\{ \binom{n}{2}^{1/2}(\widetilde{\bm{C}}_H - C_H \bm{1}_K) \right\}.
  \end{align*}
  By Corollary~\ref{cor:ho_order},
  $$
    \binom{n}{2}^{1/2}(\widetilde{\bm{C}}_H - C_H \bm{1}_K) = O_{\prob}(1),
  $$
  and by Proposition~\ref{prop:boot_weights},
  $$
    \lVert \bm{w}^{\dagger}(\bm{Y};\theta) - \bm{w}_*(\theta) \rVert_2 = o_{\prob}(1),
  $$
  which together are sufficient to show \eqref{adaptive_convergence}.
\end{proof}

\subsection{Inference with unknown \texorpdfstring{$\theta$}{theta}} \label{app:ho_theta_unknown}

We state and prove the following lemma, similar to Proposition 4 in \citesupp{chang22estimationB}.

When $\theta = (\alpha,\beta,\lambda,\mu)$ is unknown, define the plug-in analog of $\tilde{C}_H^{(k)}$ for $k=1,\ldots,K$ by
$$
  \widehat{C}_H^{(k)} = \frac{1}{\lvert \mathcal{V} \rvert} \cdot \frac{1}{\{y_k(\hat{\theta}) - x_k(\hat{\theta})\}^L} \cdot \sum_{\bm{v} \in \mathcal{V}} \prod_{\ell=1}^L \phi_{\ell,\hat{\theta}}(\bm{Y}^{(k)}_{v_{\ell}}).
$$
where $\hat{\theta}$ is the GMM estimator developed in Section~\ref{sec:estimation}.

\begin{lemma} \label{lem:ho_order_theta_unknown}
  Suppose the Assumptions of Proposition~\ref{prop:gmm_adaptive_clt} hold, Assumption~\ref{assump:c22} holds,
  \[
      \aleph_{\mathcal{V}}(L-1) / \lvert \mathcal{V} \rvert \asymp n^{-2}
  \]
  and $y_k(\theta) \neq x_k(\theta)$.
  Then we have that
  \begin{equation} \label{ho_theta_approx}
    \binom{n}{2}^{1/2} (\widehat{C}_H^{(k)} - C_H) = S_H^{(k)} + \left( \Delta_H^{(k)} \right)^{\tp} \left\{ \binom{n}{2}^{1/2}(\hat{\theta} - \theta) \right\} + o_{\mathbb{P}}(1).
  \end{equation}
  where $S_H^{(k)}$ is defined in \eqref{SH_linearized} and $\Delta_H^{(k)} \in \real^4$ has a closed form defined in the proof.
\end{lemma}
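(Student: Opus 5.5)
We will split the plug-in error into a known-$\theta$ part and a perturbation part,
$$
  \binom{n}{2}^{1/2}(\widehat{C}_H^{(k)} - C_H) = \binom{n}{2}^{1/2}(\widetilde{C}_H^{(k)} - C_H) + \binom{n}{2}^{1/2}(\widehat{C}_H^{(k)} - \widetilde{C}_H^{(k)}),
$$
and treat the two terms separately. The first term equals $S_H^{(k)} + o_{\mathbb{P}}(1)$ by Corollary~\ref{cor:ho_S_approx}, whose hypotheses are exactly the ones assumed here. So it remains to linearize the second term in $\hat{\theta} - \theta$.

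\textbf{Smooth parametrization.} View $\widehat{C}_H^{(k)} = G_k(\hat{\theta};\bm{Y}^{(k)})$, where
$$
  G_k(\vartheta;\bm{Y}^{(k)}) = \frac{1}{\lvert \mathcal{V}\rvert\{y_k(\vartheta)-x_k(\vartheta)\}^L}\sum_{\bm{v}\in\mathcal{V}}\prod_{\ell=1}^L \phi^{(k)}_{\ell,\vartheta}(\bm{Y}^{(k)}_{v_\ell}),
$$
so that $\widetilde{C}_H^{(k)} = G_k(\theta;\bm{Y}^{(k)})$. By Lemma~\ref{lem:delta_asymptotic}, $x_k$ and $y_k$ are polynomials in $\theta$ and $y_k(\vartheta)-x_k(\vartheta) = (1-\alpha-\beta)(1-\lambda-\mu)^{k-1}$, which is bounded away from zero on a neighbourhood of $\theta$ inside $\bar{\Psi}$. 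Hence $G_k$ is $C^2$ in $\vartheta$ on that neighbourhood. Because $\bm{Y}^{(k)}_{v_\ell}\in\{0,1\}$, its first and second $\vartheta$-derivatives are bounded uniformly in the data.

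\textbf{Taylor expansion.} Expanding to second order around $\theta$ gives
$$
  G_k(\hat{\theta}) - G_k(\theta) = \nabla_\vartheta G_k(\theta)^{\tp}(\hat{\theta}-\theta) + O(\lVert\hat{\theta}-\theta\rVert^2).
$$
By Proposition~\ref{prop:gmm_adaptive_clt}, $\hat{\theta}-\theta = O_{\mathbb{P}}(n^{-1})$, so after multiplying by $\binom{n}{2}^{1/2}\asymp n$ the remainder is $O_{\mathbb{P}}(n^{-1}) = o_{\mathbb{P}}(1)$. This leaves
$$
  \nabla_\vartheta G_k(\theta)^{\tp}\left\{\binom{n}{2}^{1/2}(\hat{\theta}-\theta)\right\}.
$$

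\textbf{Replacing the gradient by its mean.} Define $\Delta_H^{(k)} = \expect_\theta\{\nabla_\vartheta G_k(\theta;\bm{Y}^{(k)})\}$. Since $\nabla_\vartheta G_k(\theta)$ is a finite sum of subgraph-type averages with bounded summands, it suffices to show $\nabla_\vartheta G_k(\theta) - \Delta_H^{(k)} = o_{\mathbb{P}}(1)$. Then the product with the $O_{\mathbb{P}}(1)$ factor $\binom{n}{2}^{1/2}(\hat{\theta}-\theta)$ is $o_{\mathbb{P}}(1)$.

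Differentiate the product by the product rule:
$$
  \partial_\vartheta\{(y_k-x_k)^{-L}\} \cdot \widetilde{C}\text{-type average} \;+\; (y_k-x_k)^{-L}\,\frac{1}{\lvert\mathcal{V}\rvert}\sum_{\bm{v}}\sum_{j}\partial_\vartheta\phi^{(k)}_{j,\vartheta}(\bm{Y}^{(k)}_{v_j})\prod_{\ell\neq j}\phi^{(k)}_{\ell,\vartheta}(\bm{Y}^{(k)}_{v_\ell}).
$$
Each of these is an average over $\mathcal{V}$ of products of functions of distinct independent edges. The same variance computation as in Corollary~\ref{cor:ho_order} bounds its variance by $O(\aleph_{\mathcal{V}}(L-1)/\lvert\mathcal{V}\rvert) = O(n^{-2})$. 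Chebyshev's inequality then gives concentration around the mean.

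The limit $\Delta_H^{(k)}$ is explicit. Using $\expect\phi^{(k)}_{\ell,\theta}(\bm{Y}^{(k)}_{v_\ell}) = (y_k-x_k)(\bm{A}^{(1)}_{v_\ell})^{\tau_\ell}(1-\bm{A}^{(1)}_{v_\ell})^{1-\tau_\ell}$ and $\partial_\vartheta\phi = \pm\partial_\vartheta x_k$ or $\pm\partial_\vartheta y_k$ (deterministic), the expectation becomes
$$
  -L\frac{\nabla(y_k-x_k)}{y_k-x_k}C_H \;+\; \frac{1}{y_k-x_k}\sum_j(\text{sign})\,\nabla x_k\text{ or }\nabla y_k \cdot(\text{density of }H\text{ with edge }j\text{ removed}).
$$

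\textbf{Main obstacle.} The hardest step is the third one. It requires checking that the Chang et al.\ variance bound applies to the differentiated summands. These are no longer of the exact form $\prod_\ell\phi_{\ell}$, but the only properties the bound uses are edge independence and boundedness, both of which still hold. A secondary subtlety is that $\Delta_H^{(k)}$ depends on $\bm{A}^{(1)}$, and hence on $n$, through lower-order subgraph densities. This is harmless because the lemma only claims the representation \eqref{ho_theta_approx} with $n$-dependent deterministic coefficients.
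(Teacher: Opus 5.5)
Your proposal is correct and is essentially the paper's argument. It uses Corollary~\ref{cor:ho_S_approx} for the known-$\theta$ part, the $\binom{n}{2}^{-1/2}$ rate of $\hat{\theta}$ from Proposition~\ref{prop:gmm_adaptive_clt} to make the quadratic terms negligible, and the Chang et al.\ variance calculation to replace the random leave-one-out averages by their expectations.

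The only difference is packaging: you Taylor-expand $G_k(\vartheta;\bm{Y}^{(k)})$ as a whole with a data-uniform Hessian bound, rather than treating the normalizer $\{y_k-x_k\}^L$ and the numerator separately via $T_H^{(k)},\widetilde{T}_H^{(k)},\widehat{T}_H^{(k)}$ as the paper does. Your expected gradient reproduces the paper's first, unsimplified expression for $\Delta_H^{(k)}$.
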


\begin{proof}
  By Proposition~\ref{prop:gmm_adaptive_clt}, we have
  $$
    \lVert \hat{\theta} - \theta \rVert_2 = O_{\prob}\left\{ \binom{n}{2}^{-1/2} \right\}.
  $$
  Denote its asymptotic covariance matrix by $\bm{\Sigma}$, which is a $4 \times 4$ block of $\bm{\Sigma}^{(\mathrm{GMM})}(\delta_1(\infty),\theta)$.

  Define
  \begin{align*}
    T_H^{(k)} &= \{y_k(\theta) - x_k(\theta)\}^L \cdot C_H, \\
    \widetilde{T}_H^{(k)} &= \{y_k(\theta) - x_k(\theta)\}^L \widetilde{C}_H, \\
    \widehat{T}_H^{(k)} &= \{y_k(\hat{\theta}) - x_k(\hat{\theta})\}^L \widehat{C}_H.
  \end{align*}

  By calculations similar to \citesupp{chang22estimationB}, display (S.9),
  \begin{align} \label{chang_s9}
    &\binom{n}{2}^{1/2} (\widehat{C}_H^{(k)} - C_H) \nonumber \\
    = &\binom{n}{2}^{1/2} \frac{\widehat{T}_H^{(k)} - T_H^{(k)}}{\{ y_k(\theta) - x_k(\theta)\}^L} + \frac{L \cdot C_H \binom{n}{2}^{1/2}\{ x_k(\hat{\theta}) - x_k(\theta) \}}{y_k(\theta) - x_k(\theta)} + \frac{L \cdot C_H \binom{n}{2}^{1/2}\{ y_k(\theta) - y_k(\hat{\theta}) \} }{y_k(\theta) - x_k(\theta)} + o_{\prob}(1) \nonumber \\
    = &S_H^{(k)} + \binom{n}{2}^{1/2} \frac{\widehat{T}_H^{(k)} - \widetilde{T}_H^{(k)}}{\{ y_k(\theta) - x_k(\theta)\}^L} + \frac{L \cdot C_H \{ \nabla x_k(\theta) - \nabla y_k(\theta) \} }{y_k(\theta) - x_k(\theta)} \left\{ \binom{n}{2}^{1/2}(\hat{\theta} - \theta) \right\} + o_{\prob}(1),
  \end{align}
  where we use the fact that both $x_k$ and $y_k$ are bounded polynomials in $\theta$, and $\lVert \hat{\theta} - \theta \rVert_2 = O_{\prob}\left\{ \binom{n}{2}^{-1/2} \right\}$, thus
  \begin{align*}
    x_k(\hat{\theta}) - x_k(\theta) &= \{\nabla x_k(\theta)\}^{\tp} (\hat{\theta} - \theta) + O_{\prob}\left\{ \binom{n}{2}^{-1} \right\}, \\
    y_k(\hat{\theta}) - y_k(\theta) &= \{\nabla y_k(\theta)\}^{\tp} (\hat{\theta} - \theta) + O_{\prob}\left\{ \binom{n}{2}^{-1} \right\}.
  \end{align*}

  It remains to analyze the difference $\widehat{T}_H^{(k)} - \widetilde{T}_H^{(k)}$.

  Again following the same calculations as \citesupp{chang22estimationB}, p. S15-S16, we find
  $$
    \widehat{T}_H^{(k)} - \widetilde{T}_H^{(k)} = \sum_{j=1}^L \{ x_k(\theta) - x_k(\hat{\theta}) \}^{\tau_j} \{ y_k(\hat{\theta}) - y_k(\theta) \}^{1-\tau_j} \lvert \mathcal{V} \rvert^{-1} \sum_{\bm{v} \in \mathcal{V}} \prod_{\ell \neq j} \expect_{\theta}\left\{ \phi_{\ell,\theta}(\bm{Y}^{(k)}_{v_{\ell}}) \right\} + O_{\prob}\left\{ \binom{n}{2}^{-1} \right\}.
  $$

  Also recall that
  $$
    \frac{\expect_{\theta}\left\{ \phi_{\ell,\theta}(\bm{Y}^{(k)}_{v_{\ell}}) \right\}}{y_k(\theta) - x_k(\theta)} = \left( A^{(1)}_{v_{\ell}} \right)^{\tau_{\ell}}\left( 1 - A^{(1)}_{v_{\ell}}\right)^{1-\tau_{\ell}}.
  $$
  For each $j=1,\ldots,L$, define
  $$
    U_H^{(-j)} = \lvert \mathcal{V} \rvert^{-1} \sum_{\bm{v} \in \mathcal{V}} \prod_{\ell \neq j} \left( A^{(1)}_{v_{\ell}} \right)^{\tau_{\ell}}\left( 1 - A^{(1)}_{v_{\ell}}\right)^{1-\tau_{\ell}}.
  $$
  This quantity $U_H^{(-j)}$ is a {\em leave-one-out} subgraph density, which enumerates all the configurations for subgraph $H$ but ignores the status of the $j$th edge.

  Combining these last four displays, we get
  $$
    \frac{\widehat{T}_H^{(k)} - \widetilde{T}_H^{(k)}}{\{y_k(\theta) - x_k(\theta)\}^L} = \left\{ \frac{\sum_{j : \tau_j=0} U_H^{(-j)} \nabla y_k(\theta) - \sum_{j : \tau_j=1} U_H^{(-j)} \nabla x_k(\theta)}{y_k(\theta) - x_k(\theta)}\right\}^{\tp} (\hat{\theta} - \theta) + O_{\prob}\left\{ \binom{n}{2}^{-1} \right\}.
  $$
  Plugging this approximation into \eqref{chang_s9} completes the proof of \eqref{ho_theta_approx}, with
  \begin{align*}
    \Delta_H^{(k)} &= \frac{\sum_{j : \tau_j=0} U_H^{(-j)} \nabla y_k(\theta) - \sum_{j : \tau_j=1} U_H^{(-j)} \nabla x_k(\theta) + L \cdot C_H \{ \nabla x_k(\theta) - \nabla y_k(\theta) \}}{y_k(\theta) - x_k(\theta)} \\
    &= \frac{\sum_{j : \tau_j=0} (U_H^{(-j)} - C_H) \nabla y_k(\theta) - \sum_{j : \tau_j=1} (U_H^{(-j)} - C_H) \nabla x_k(\theta) }{y_k(\theta) - x_k(\theta)}. 
  \end{align*}
\end{proof}

\begin{remark} \label{rem:loo_subgraphs}
    If $H$ is a triangle, then for $j=1$, the leave-one-out subgraph density considers all ordered triples of distinct nodes $(i_1,i_2,i_3)$ and counts whether the $i_2 \leftrightarrow i_3$ and $i_3 \leftrightarrow i_1$ edges are present. By rotational symmetry this subgraph density is the same for $j=2$ and $j=3$ as well. We refer to this as the {\em two-chain} density.

    If $H$ is a two-star, then for $j=1$ the leave-one-out subgraph density is the same two-chain density desribed above for triangles.
    However, if $j=2$, the leave-one-out subgraph density considers all ordered triples of distinct nodes $(i_1,i_2,i_3)$ and counts whether the $i_1 \leftrightarrow i_2$ edge is absent and the $i_3 \leftrightarrow i_1$ edge is present. By rotational symmetry, this leave-one-out subgraph density is the same for $j=3$. 
    We refer to this as the {\em half-chain} density.

    Including triangle density and two-star density, denote these four subgraph densities on ordered node triples by
    $$
        \subtri{C}, \quad \subtstar{C}, \quad \subtchain{C}, \quad \subhchain{C}
    $$
    respectively.     

    Then, for inference on triangle and two-star density, we can derive simplified expressions for $\Delta_H^{(k)}$ as follows.

    \begin{align*}
        \subtri{\Delta}^{(k)} &= \frac{-3 \subtstar{C} \nabla x_k(\theta)}{y_k(\theta) - x_k(\theta)}, \\
        \subtstar{\Delta}^{(k)} &= \frac{\subtri{C} \nabla y_k(\theta) - 2(\subhchain{C} - \subtstar{C}) \nabla x_k(\theta)}{y_k(\theta) - x_k(\theta)},
    \end{align*}
    where we use the relationship $\subtri{C} + \subtstar{C} = \subtchain{C}$.
\end{remark}

Combining the general result of Lemma~\ref{lem:ho_order_theta_unknown} over $k=1,\ldots,K$, we have
$$
  \binom{n}{2}^{1/2} \left( \widehat{\bm{C}}_H - C_H \bm{1}_K \right) = \bm{S}_H + \bm{\Delta}_H \left\{ \binom{n}{2}^{1/2}(\hat{\theta} - \theta) \right\} + o_{\prob}(1),
$$
where
$$
  \bm{\Delta}_H^{\tp} = \begin{pmatrix}
    \Delta_H^{(1)} & \cdots & \Delta_H^{(K)}
\end{pmatrix} \in \real^{4 \times K}.
$$

Based on the approximation in \eqref{ho_theta_approx}, an analog of \citepsupp{chang22estimationB}, Theorem 5 will also continue to hold in this setting, which we can use to develop an estimator for the asymptotic covariance matrix for $\widehat{\bm{C}}_H$.
In particular, if $\bm{\Delta}_H$ converges, we have
$$
  \acov\left( \widehat{\bm{C}}_H \right) = \acov(\bm{S}_H) + \bm{\Delta}_H \acov(\hat{\theta}) \bm{\Delta}_H^{\tp} + \acov(\bm{S}_H,\hat{\theta}) \bm{\Delta}_H^{\tp} + \bm{\Delta}_H\acov(\bm{S}_H,\hat{\theta})^{\tp}
$$

The first term $\acov(\bm{S}_H)$ can be consistently estimated using the bootstrapping approach described in Section~\ref{sec:ho} and by plugging in the GMM estimator $\hat{\theta}$.
For the second term, a consistent estimator for $\acov(\hat{\theta})$ is given in Section~\ref{subsec:theta_unknown}; $\bm{\Delta}_H$ can be estimated by plugging in the GMM estimator $\hat{\theta}$, $\widehat{C}_H^{(1)}$, and
\begin{equation*}
  \widehat{U}_H^{(-j)} = \frac{1}{\lvert \mathcal{V} \rvert \{ y_1(\hat{\theta}) - x_1(\hat{\theta})
  \}^{L-1}} \sum_{\bm{v} \in \mathcal{V}} \prod_{\ell \neq j} \phi_{\ell,\hat{\theta}}(\Ynetk{1}_{v_{\ell}})
\end{equation*}
for $j=1,\ldots,L$ based on the first noisy snapshot.
All that remains is to find a consistent estimator of the asymptotic cross-covariance matrix $\acov(\bm{S}_H,\hat{\theta})$.

Following the proof of Proposition~\ref{prop:theta_limiting}, we can write
$$
  \binom{n}{2}^{1/2} (\hat{\theta} - \theta) = \tilde{\bm{M}}(\psi) \left\{ \binom{n}{2}^{1/2} (\hat{\bm{m}} - \bm{m}(\psi))\right\} + o_{\prob}(1)
$$
for a matrix $\tilde{\bm{M}}(\psi)$ made up of the second through fifth rows of
$$
  \bm{M}(\psi) = (\bm{Dm}(\psi)^{\tp} \bm{W} \bm{Dm}(\psi))^{-1} \bm{Dm}(\psi)^{\tp} \bm{W}.
$$
In practice, to evaluate $\tilde{\bm{M}}(\psi)$ we substitute the optimal asymptotic weight matrix for $\bm{W}$:
 \begin{equation*} 
    \bm{W} = \left\{ \delta_1 \bm{\Sigma}_1(\theta) + (1 - \delta_1) \bm{\Sigma}_0(\theta) \right\}^{-1}.
  \end{equation*}
This matrix is a continuous function of the unknown edge density, error and evolution parameters, and recalling that $\bm{m}$ collects the edge averaged moments used for estimation.

Thus, it suffices to find the coordinate-wise covariances between $S^{(k)}_H$ and $\hat{\bm{m}}_h$ for fixed $k=1,\ldots,K$ and $h=1,\ldots,p$.
Note that both of these quantities decompose over the edges.
Define
\begin{equation*}
  \sigma_{s,kh}(\theta) = \cov_{\theta}\left( \Ynet_{ij}^{(k)}, m_h(\Yvec_{ij}) ~\vert~ \Anetk{1}_{ij}=s \right)
\end{equation*}
for $s \in \{0,1\}$, so that
$$
  \cov_{\theta}(\Ynetk{k}_{ij},m_h(\Yvec_{ij})) = \sigma_{0,kh}(\theta) + \Anetk{1}_{ij}\left\{ \sigma_{1,kh}(\theta)  - \sigma_{0,kh}(\theta) \right\}.
$$
Note that these covariances are continuous (polynomial) functions of the error and evolution parameters, and can be evaluated using observation operators. Then
\begin{align*}
  &\cov_{\theta}\left( S^{(k)}_H,\left[ \binom{n}{2}^{1/2} (\hat{\bm{m}} - \bm{m}(\psi))\right]_h \right) \\
  = &\frac{1}{2 \lvert \mathcal{V} \rvert \{y_k(\theta) - x_k(\theta)\}^L} \sum_{j=1}^L (-1)^{1-\tau_j} \sum_{\bm{v} \in \mathcal{V}} \sum_{s \neq t} \cov_{\theta}(\Ynetk{k}_{v_j},m_h(\Yvec_{st})) \prod_{\ell \neq j} \expect_{\theta}\left\{ \phi_{\ell,\theta}(\bm{Y}^{(k)}_{v_{\ell}}) \right\} \\
  = &\frac{\sigma_{0,kh}(\theta)}{2 \lvert \mathcal{V} \rvert \{y_k(\theta) - x_k(\theta)\}^L} \sum_{j=1}^L (-1)^{1-\tau_j} \sum_{\bm{v} \in \mathcal{V}} \prod_{\ell \neq j} \expect_{\theta}\left\{ \phi_{\ell,\theta}(\bm{Y}^{(k)}_{v_{\ell}}) \right\} + \cdots \\
  &\quad \cdots + \frac{\sigma_{1,kh}(\theta)  - \sigma_{0,kh}(\theta)}{2 \lvert \mathcal{V} \rvert \{y_k(\theta) - x_k(\theta)\}^L} \sum_{j=1}^L (-1)^{1-\tau_j} \sum_{\bm{v} \in \mathcal{V}} \Anetk{1}_{v_j} \prod_{\ell \neq j} \expect_{\theta}\left\{ \phi_{\ell,\theta}(\bm{Y}^{(k)}_{v_{\ell}}) \right\}
\end{align*}
Similar to the proof of Lemma~\ref{lem:ho_order_theta_unknown}, the first term can be simplified in terms of the leave-one-out subgraph densities $U_H^{(-j)}$.
To simplify the second term, we define {\em add-one} subgraph densities
$$
  W_H^{(j)} = \lvert \mathcal{V} \rvert^{-1} \sum_{\bm{v} \in \mathcal{V}} \Anetk{1}_{v_j} \cdot \prod_{\ell \neq j} \left( \Anetk{1}_{v_{\ell}} \right)^{\tau_{\ell}}\left( 1 - \Anetk{1}_{v_{\ell}}\right)^{1-\tau_{\ell}}.
$$
which enumerates the same configurations as the original subgraph $H$ but requiring the $j$th edge to be present. Then
\begin{align*}
  &\cov_{\theta}\left( S^{(k)}_H,\left[ \binom{n}{2}^{1/2} (\hat{\bm{m}} - \bm{m}(\psi))\right]_h \right) \\
  = &\frac{\sigma_{0,kh}(\theta)}{2 \{y_k(\theta) - x_k(\theta)\}} \sum_{j=1}^L (-1)^{1-\tau_j} U_H^{(-j)} + \frac{\sigma_{1,kh}(\theta)  - \sigma_{0,kh}(\theta)}{2 \{y_k(\theta) - x_k(\theta)\}} \sum_{j=1}^L (-1)^{1-\tau_j} W_H^{(j)}.
\end{align*}
Define these covariance matrices by $\Omega_{H,n}$, and suppose the sequence converges to a limit $\Omega_H \in \real^{K \times p}$.
Then
$$
  \acov(\bm{S}_H,\hat{\theta}) = \Omega_H\tilde{\bm{M}}(\psi)^{\tp} \in \real^{K \times 4}.
$$
Under our regularity conditions, this asymptotic covariance can be consistently estimated by plugging in the GMM estimator $\hat{\psi}$, as well as estimators
\begin{equation*}
  \widehat{W}_H^{(j)} = \frac{1}{\lvert \mathcal{V} \rvert \{ y_1(\hat{\theta}) - x_1(\hat{\theta}) \}^{L}} \sum_{\bm{v} \in \mathcal{V}} \left\{ \Ynetk{1}_{v_{\ell}} - x_1(\hat{\theta}) \right\} \prod_{\ell \neq j} \phi_{\ell,\hat{\theta}}(\Ynetk{1}_{v_{\ell}})
\end{equation*}
for $j=1,\ldots,L$, based on the first noisy snapshot.

\begin{remark}
    Similar to Remark~\ref{rem:loo_subgraphs}, we can explicitly describe the add-one subgraph densities, and the resulting $\Omega_H$ in the case where $H$ is either a triangle or a two-star.
    If $H$ is a triangle, then it is easy to see that $\subtri{W}^{(j)}$ is also the triangle density $\subtri{C}$ for $j=1, 2, 3$.
    If $H$ is a two-star, then $\subtstar{W}^{(j)}$ is also the two-star density $\subtstar{C}$ for $j=2, 3$. If $j=1$, then $\subtstar{W}^{(1)}$ is the triangle density $\subtri{C}$.

    we can derive simplified expressions for the $(k,h)$ entry of $\Omega_H$ as follows.
    \begin{align*}
        \left[ \subtri{\Omega} \right]_{kh} &= \frac{3 \sigma_{1,kh}(\theta) \subtri{C} + 3 \sigma_{0,kh}(\theta)\subtstar{C}}{2 \left\{ y_k(\theta) - x_k(\theta) \right\}}, \\
        \left[ \subtstar{\Omega} \right]_{kh} &= \frac{\sigma_{0,kh}(\theta) (2 \subhchain{C} - \subtchain{C}) + \{\sigma_{1,kh}(\theta)  - \sigma_{0,kh}(\theta) \} (2 \subtstar{C} - \subtri{C})}{2 \{y_k(\theta) - x_k(\theta)\}},
    \end{align*}
    where we use the relationship $\subtchain{C} = \subtri{C} + \subtstar{C}$.
\end{remark}

\begin{remark} \label{rem:ustats}
  In principle, these same types of expansions could be used to evaluate the analytical covariances of $S_H^{(k_1)}$ and $S_H^{(k_2)}$ for any $k_1$ and $k_2$ without the need for an edge bootstrap.
  However, the resulting expressions would involve $(L^2)$-many sums over configurations of edges in $\mathcal{V} \times \mathcal{V}$, and thus will be computationally expensive to evaluate when $n$ is large, even relative to the computational cost of the bootstrap.
  For instance, the variance of triangle density would require sums with $O(n^6)$ terms, relative to sums with $O(n^3)$ terms to compute each bootstrap replicate $\bm{S}_H^{\dagger}$.
\end{remark}

\begin{remark}
  As in \citesupp{chang22estimationB}, similar results will hold for joint inference on multiple subgraph densities, where the covariance among densities can be calculated by estimating both on the same set of bootstrap replicates.
  The covariance between subgraph densities and $\hat{\theta}$ can be estimated using the methods presented in this section.
\end{remark}

\section{Theoretical sketches for the dynamic comparison model} \label{app:comparison_theory}

\subsection{GMM estimation}

We state an analog of Lemma~\ref{lem:mstar_clt} under the comparison model. The proof is identical to the proof of Lemma~\ref{lem:mstar_clt} with some analogous assumptions stated below.

\begin{lemma} \label{lem:mstar_clt_compare}
  Define $m^*_{\mathcal{C}}$ as in Section~\ref{subsec:gmm_comparison}, and suppose $\rho_1, \rho_K, \rho_{1K} \in (0,1)$.
  Then
  \begin{equation} \label{mstar_clt_compare}
    \hspace{-1cm}
    \binom{n}{2}^{-1/2}\left[ \sum_{i < j} m^*_{\mathcal{C}}(\Yvec_{ij}) - \binom{n}{2} \bm{m}^*_{\mathcal{C}}(\rho_1,\rho_K,\rho_{1K},\theta) \right]
    \indist \mathcal{N}\left( \bm{0}_{K+5}, \bm{\Sigma}^*_{\mathcal{C}}(\rho_1,\rho_K,\rho_{1K},\theta)  \right)
  \end{equation}
  as $n \rightarrow \infty$, where
  $$
    \bm{\Sigma}^*_{\mathcal{C}}(\rho_1,\rho_K,\rho_{1K},\theta) = \rho_{1K} \bm{\Sigma}^*_{\mathcal{C},11}(\theta) + \rho_1 \bm{\Sigma}^*_{\mathcal{C},10}(\theta) + \rho_K \bm{\Sigma}^*_{\mathcal{C},01}(\theta) + (1 - \rho_1 - \rho_K - \rho_{1K})\bm{\Sigma}^*_{\mathcal{C},00}(\theta)
  $$
  and
  $$
    \bm{\Sigma}^*_{\mathcal{C},st}(\theta) = \cov_{\theta} \left\{ m^*_{\mathcal{C}}\left( \Yvec_{12} \right) ~\big\vert~ \Anetk{1}_{12}=s, \Anetk{K}_{12}=t \right\} \in \real^{(K+5) \times (K+5)}.
  $$
\end{lemma}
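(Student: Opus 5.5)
The plan is to mirror the proof of Lemma~\ref{lem:mstar_clt}, replacing the two-way partition of node pairs by $\Anetk{1}_{ij}$ with a four-way partition by the pair $(\Anetk{1}_{ij},\Anetk{K}_{ij})$. For $(s,t)\in\{0,1\}^2$ define
$$
  \mathcal{E}_{st} = \{(i,j) : \Anetk{1}_{ij}=s,~\Anetk{K}_{ij}=t\},
$$
and set $N_{st} = \lvert \mathcal{E}_{st} \rvert$. Then $N_{11} = \binom{n}{2}\rho_{1K}$, $N_{10} = \binom{n}{2}\rho_1$, $N_{01} = \binom{n}{2}\rho_K$ and $N_{00} = \binom{n}{2}(1-\rho_1-\rho_K-\rho_{1K})$.

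The analog of Assumption~\ref{assump:delta_seq} will be that $(\rho_1,\rho_K,\rho_{1K}) = (\rho_1(n),\rho_K(n),\rho_{1K}(n))$ converges to a limit with all four cell proportions, including $1-\rho_1-\rho_K-\rho_{1K}$, strictly positive. This guarantees $N_{st}\to\infty$ for every cell. The covariance in \eqref{mstar_clt_compare} should then be read at the limiting proportions, or equivalently along the sequence, since it is continuous in the $\rho$'s.

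Conditional on $\Anetk{1}$ and $\Anetk{K}$, which are treated as fixed parameters, the edge sequences $\{\Yvec_{ij}\}$ are independent across node pairs. Within each cell $\mathcal{E}_{st}$ they are identically distributed, following the inhomogeneous bridge HMM of Section~\ref{sec:comparison} started at $s$ and pinned at $t$. Hence $\{m^*_{\mathcal{C}}(\Yvec_{ij})\}_{(i,j)\in\mathcal{E}_{st}}$ is an iid family of bounded vectors (entries lie in $[0,K]$), with mean $\bm{m}^*_{\mathcal{C},st}(\theta)$ and covariance $\bm{\Sigma}^*_{\mathcal{C},st}(\theta)$. Both moments are computable by Lemma~\ref{lem:gmm_computation_comparison}.

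The argument then proceeds in three steps.

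\begin{enumerate}
\item Apply the multivariate CLT within each cell to get
$$
  N_{st}^{-1/2}\Big\{\sum_{\mathcal{E}_{st}} m^*_{\mathcal{C}}(\Yvec_{ij}) - N_{st}\,\bm{m}^*_{\mathcal{C},st}(\theta)\Big\} \indist \mathcal{N}\big(\bm{0},\bm{\Sigma}^*_{\mathcal{C},st}(\theta)\big).
$$
\item Rescale by $(N_{st}/\binom{n}{2})^{1/2}$, which converges to the square root of the limiting cell proportion, and apply Slutsky's Theorem.
\item Sum the four mutually independent limits. The centering terms add up exactly to $\binom{n}{2}\bm{m}^*_{\mathcal{C}}(\rho_1,\rho_K,\rho_{1K},\theta)$ by the decomposition of the edge-averaged expectation in Section~\ref{subsec:gmm_comparison}. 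The covariances add up to the stated mixture $\bm{\Sigma}^*_{\mathcal{C}}$.
\end{enumerate}

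The main point needing care is step 1 under a triangular array. The cell memberships change with $n$, so a Lindeberg–Feller CLT should be invoked rather than the iid CLT; Lindeberg's condition is trivial because the summands are uniformly bounded and $N_{st}\to\infty$. One must also verify that the conditional distribution within a cell depends only on $(s,t)$ and $\theta$, not on $n$ or $(i,j)$. This follows because the bridge transition probabilities, as in the displayed formula in Section~\ref{sec:comparison}, depend only on $k$, $K$, $t$ and $\theta$. No positive-definiteness of $\bm{\Sigma}^*_{\mathcal{C}}$ is needed for the CLT itself; a degenerate normal limit is allowed.
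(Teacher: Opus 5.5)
Your proposal is correct and is essentially the paper's argument. The paper says only that the proof is identical to that of Lemma~\ref{lem:mstar_clt}: partition the node pairs by their underlying states (here the four cells indexed by $(\Anetk{1}_{ij},\Anetk{K}_{ij})$), apply a multivariate CLT within each cell, rescale by the square roots of the cell proportions via Slutsky's Theorem, and add the independent limits. Your extra conditions (convergent $\rho$'s, with the $00$-cell proportion bounded away from zero) match the paper's Assumption~\ref{assump:delta_seq_compact_compare}. Lindeberg--Feller is harmless, but the ordinary iid CLT along $N_{st}\to\infty$ already suffices, because the within-cell law does not depend on $n$.
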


Using Lemma~\ref{lem:mstar_clt_compare}, derivation of the asymptotic distribution for the GMM estimators, including an asymptotic distribution for the linear combinations
$$
  \hat{\delta}_1 = \hat{\rho}_1 + \hat{\rho}_{1K}, \quad \hat{\delta}_K = \hat{\rho}_K + \hat{\rho}_{1K}
$$
will follow the same proof structure as in Section~\ref{app:proofs_gmm}, under analogous assumptions to Assumptions~\ref{assump:global_identification}-\ref{assump:cov_spectrum}.

Define
$$
  \Psi_{\mathcal{C}} = \{(\rho_1,\rho_K,\rho_{1K},\alpha,\beta,\lambda,\mu) \in (0,1)^7 : \alpha + \beta < 1, \lambda + \mu < 1\},
$$
and the compact restriction
$$
  \bar{\Psi}_{\mathcal{C}} = \bar{\Psi}_{\mathcal{C}}(\xi) =  \{(\rho_1,\rho_K,\rho_{1K},\alpha,\beta,\lambda,\mu) \in [\xi,1-\xi]^7 : \alpha + \beta \leq 1-\xi, \lambda + \mu \leq 1-\xi, \rho_1 + \rho_K + \rho_{1K} \leq 1 - \xi\}.
$$

\begin{assumption} \label{assump:global_identification_compare}
The mapping $\bm{m}_{\mathcal{C}}^{(\mathrm{init})}(\rho_1,\rho_K,\rho_{1K},\theta)$ is an injective function on $\Psi_{\mathcal{C}}$.
\end{assumption}

\begin{assumption} \label{assump:delta_seq_compact_compare}
  The true parameters satisfy $(\rho_1(n),\rho_K(n),\rho_{1K}(n),\alpha,\beta,\lambda,\mu) \in \bar{\Psi}_{\mathcal{C}}$ for all $n$, and the sequence is convergent as $n \rightarrow \infty$.
\end{assumption}

\begin{assumption} \label{assump:local_identification_body_compare}
  The derivative matrix
  $$
    \bm{Dm}^*(\rho_1,\rho_K,\rho_{1K},\theta)
  \in \real^{(K+5) \times 7}
  $$
  has full column rank for all $(\rho_1,\rho_K,\rho_{1K},\theta) \in \bar{\Psi}_{\mathcal{C}}$.
\end{assumption}

\begin{assumption} \label{assump:cov_spectrum_compare}
  The covariance matrix $\bm{\Sigma}^*_{\mathcal{C}}(\rho_1,\rho_K,\rho_{1K},\theta)$
  satisfies
  $$
    0 < c \leq \lambda_{\min}\left\{ \bm{\Sigma}^*_{\mathcal{C}}(\rho_1,\rho_K,\rho_{1K},\theta) \right\} \leq \lambda_{\max}\left\{ \bm{\Sigma}^*_{\mathcal{C}}(\rho_1,\rho_K,\rho_{1K},\theta) \right\} \leq C < \infty
  $$
  uniformly over all $(\rho_1,\rho_K,\rho_{1K},\theta) \in \bar{\Psi}_{\mathcal{C}}$ for constants $c$ and $C$ (which may depend on $\xi$).
\end{assumption}

\subsection{Computing bootstrap replicates under the dynamic comparison model}

Under the dynamic comparison model, we simplify the expressions in Section~\ref{sec:comparison}. 
As in that section, we drop ``$ij$'' subscripts and show how to find bootstrap replicates
$$
    \begin{pmatrix}
        \Ynetk{1,\dagger} \\ \Ynetk{K,\dagger}
    \end{pmatrix} \sim p(\cdot , \Ynetk{1},\Ynetk{K}) 
$$
With expected conditional covariance structure which mimics the covariance of the original:
$$
    \hspace{-0.5cm}
    \cov\left\{ \begin{pmatrix}
        \Ynetk{1} \\ \Ynetk{K}
    \end{pmatrix} ~\bigg\vert~ \Anetk{1}=s, \Anetk{K}=t \right\} = \begin{pmatrix}
        (1-s)\alpha(1-\alpha) + s\beta(1-\beta) & 0 \\ 0 & (1-t)\alpha(1-\alpha) + t\beta(1-\beta)
    \end{pmatrix}.
$$
As $\Ynetk{1}$ and $\Ynetk{K}$ are uncorrelated for any underlying state $(s,t)$, it is sufficient to independently bootstrap the two snapshots.

Define 
\begin{align*}
    \nu_0^{\dagger} = \alpha (1 - \beta), \quad \nu_1^{\dagger} = (1-\alpha)\beta.
\end{align*}
We claim that a valid bootstrap procedure will sample
$$
    \Ynetk{1,\dagger} \sim \mathcal{N}\left( 0,\nu_{\Ynetk{1}}^{\dagger} \right), \quad \Ynetk{K,\dagger} \sim \mathcal{N}\left( 0,\nu_{\Ynetk{K}}^{\dagger} \right)
$$
independently. This can be verified by checking that for $s \in \{0,1\}$ and $k \in \{1,K\}$,
\begin{align*}
    &\expect\left\{ \var( \Ynetk{k,\dagger} ~\vert~ \Ynetk{k} ) ~\vert~ \Anetk{k}=s \right\} \\
    &= \expect\{ \nu_1^{\dagger} \Ynetk{k} + \nu_0^{\dagger}(1 - \Ynetk{k}) ~\vert~ \Anetk{k}=s \} \\
    &= \nu_0^{\dagger} + (\nu_1^{\dagger} - \nu_0^{\dagger}) \expect( \Ynetk{k} ~\vert~ \Anetk{k}=s) \\
    &= \alpha - \alpha\beta + (\beta - \alpha)\{(1-s)\alpha + s(1-\beta)\} \\
    &= \alpha - \alpha\beta + \alpha\beta - s\alpha\beta + s\beta - s\beta^2 - \alpha^2 - s\alpha^2 - s\alpha + s\alpha\beta \\
    &= \alpha - \alpha^2 - s\alpha + s\alpha^2 + s \beta - s\beta^2 \\
    &= (1-s)\alpha(1-\alpha) + s \beta(1-\beta) \\
    &= \var( \Ynetk{k} ~\vert~ \Anetk{k} = s).
\end{align*}

\subsection{Higher-order subgraph density estimation with unknown \texorpdfstring{$\theta$}{theta}} \label{app:ho_comparison_theta_unknown}

In this section we develop detailed expressions to estimate the asymptotic covariance of the two plug-in estimators
\[
  \widehat{C}_{H,\mathcal{C}}^{(1)} =  \frac{1}{\lvert \mathcal{V} \rvert} \sum_{\bm{v} \in \mathcal{V}} \prod_{\ell=1}^L \phi_{\ell,\hat{\theta}}(\Ynetk{1}_{v_{\ell}}), \quad \widehat{C}_{H,\mathcal{C}}^{(K)} =  \frac{1}{\lvert \mathcal{V} \rvert} \sum_{\bm{v} \in \mathcal{V}} \prod_{\ell=1}^L \phi_{\ell,\hat{\theta}}(\Ynetk{K}_{v_{\ell}}), 
\]
of subgraph density developed in Section~\ref{subsec:ho_comparison}, where $\hat{\theta}$ is the GMM estimator of $(\alpha,\beta,\lambda,\mu)$ developed in Section~\ref{subsec:ho_comparison}.
Note that $\phi_{\ell,\theta}$ is free of $\lambda$ and $\mu$, and in fact coincides with the edge adjustment used in \citesupp{chang22estimationB} without dynamic observation.

Towards (joint) inference using these two estimators, we state a lemma which gives a simple asymptotic expansion of each of these estimators to isolate the additional variability induced by estimation of $\theta$.
The proof of this lemma omitted, as it is similar to the proof of Lemma~\ref{lem:ho_order_theta_unknown}, where we note that in this case,
$$
     x_k(\theta) = \alpha, \quad y_k(\theta) = 1-\beta, \quad \nabla x_k(\theta) = (1,0,0,0)^{\tp}, \quad \nabla y_k(\theta) = (0,-1,0,0)^{\tp}
$$
for $k \in \{1,K\}$.
\begin{lemma}
    Under the conditions of Lemma~\ref{lem:ho_order_theta_unknown}, for $k \in \{1,K\}$,
    \begin{equation} \label{approx_ho_comparison_unknown_theta}
        \binom{n}{2}^{1/2}(\widehat{C}_{H,\mathcal{C}}^{(k)} - C_H^{(k)}) = S_{H,\mathcal{C}}^{(k)} + \left( \bm{\Xi}_{H}^{(k)} \right)^{\tp} \left\{ \binom{n}{2}^{1/2} \begin{pmatrix} \hat{\alpha} - \alpha \\ \hat{\beta} - \beta \end{pmatrix} \right\} + o_{\prob}(1),
    \end{equation}
    where
    $$
        S_{H,\mathcal{C}}^{(k)} = \binom{n}{2}^{1/2} \cdot \frac{1}{\lvert \mathcal{V} \rvert} \cdot \sum_{j=1}^L (-1)^{1-\tau_j} \sum_{\bm{v} \in \mathcal{V}} \left\{\frac{\bm{Y}^{(k)}_{v_j} - \mathbb{E}_\theta \bm{Y}^{(k)}_{v_j}}{1 - \alpha - \beta} \right\} \prod_{\ell \neq j} \mathbb{E}_{\theta} \left\{ \phi_{\ell,\theta}(\bm{Y}^{(k)}_{v_{\ell}}) \right\}.
    $$
    $$
    \bm{\Xi}_{H}^{(k)} = \frac{1}{1- \alpha -\beta} \begin{pmatrix} L \cdot C_H^{(k)} - \sum_{j : \tau_j=1} U_H^{(-j,k)} \\ 
     L \cdot C_H^{(k)} - \sum_{j : \tau_j=0} U_H^{(-j,k)}
    \end{pmatrix}
  $$
  and 
  $$
    U_H^{(-j,k)} = \lvert \mathcal{V} \rvert^{-1} \sum_{\bm{v} \in \mathcal{V}} \prod_{\ell \neq j} \left( A^{(k)}_{v_{\ell}} \right)^{\tau_{\ell}}\left( 1 - A^{(k)}_{v_{\ell}}\right)^{1-\tau_{\ell}}.
  $$
  As above, $U_H^{(-j,k)}$ is a ``leave-one-out'' subgraph density for the $j$th edge in the subgraph configuration, and the $k$th underlying snapshot.
\end{lemma}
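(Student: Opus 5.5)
The plan mirrors the proof of Lemma~\ref{lem:ho_order_theta_unknown}, specialized to the comparison model with $x_k(\theta)=\alpha$ and $y_k(\theta)=1-\beta$ for $k\in\{1,K\}$. First, by the analog of Proposition~\ref{prop:gmm_adaptive_clt} under the comparison model (Lemma~\ref{lem:mstar_clt_compare} together with the analogous Assumptions~\ref{assump:global_identification_compare}--\ref{assump:cov_spectrum_compare}), we have $\lVert \hat\theta-\theta\rVert_2 = O_{\prob}\{\binom{n}{2}^{-1/2}\}$. Only $(\hat\alpha,\hat\beta)$ enter $\phi_{\ell,\hat\theta}$, so only their rate matters. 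Conditional on $\Anetk{1},\Anetk{K}$, the edges $\Ynetk{k}_{ij}$ are independent across node pairs, and by \eqref{comparison_phis} $\expect_\theta \phi_\theta(\Ynetk{k}_{v_\ell}) = \Anetk{k}_{v_\ell}$. This is exactly the single-snapshot setting of \citesupp{chang22estimationB}, with $\Anetk{k}$ as the deterministic truth, so Corollary~\ref{cor:ho_S_approx} (with $K=1$) gives
\[
\binom{n}{2}^{1/2}(\widetilde C^{(k)}_{H,\mathcal C}-C_H^{(k)}) = S^{(k)}_{H,\mathcal C}+o_{\prob}(1).
\]
Here the factor $1/(1-\alpha-\beta)$ inside $S^{(k)}_{H,\mathcal C}$ comes from the derivative of $\phi_\theta$.

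It then remains to expand $\widehat C^{(k)}_{H,\mathcal C}-\widetilde C^{(k)}_{H,\mathcal C}$ in $\hat\theta-\theta$. Rather than working through $T$-quantities, I would Taylor expand each product $\prod_\ell \phi_{\ell,\hat\theta}(\Ynetk{k}_{v_\ell})$ directly in $(\alpha,\beta)$. Write $\phi_\theta(x)=(x-\alpha)/(1-\alpha-\beta)$. Then
\[
\partial_\alpha\phi_\theta(x)=\frac{\phi_\theta(x)-1}{1-\alpha-\beta},\qquad \partial_\beta\phi_\theta(x)=\frac{\phi_\theta(x)}{1-\alpha-\beta},
\]
and correspondingly $\partial(1-\phi_\theta)=-\partial\phi_\theta$. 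Since all quantities are bounded on $\bar\Psi_{\mathcal C}$, the second-order remainder is $O_{\prob}(\lVert\hat\theta-\theta\rVert^2)=O_{\prob}(n^{-2})$, which is negligible after multiplying by $\binom n2^{1/2}$. The first-order term is the sum over $j$ of the derivative of the $j$th factor times $\prod_{\ell\neq j}\phi_{\ell,\theta}(\Ynetk{k}_{v_\ell})$, averaged over $\mathcal V$.

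Each such average is a random multilinear statistic. I would replace it by its expectation, using the same variance bound as in Corollary~\ref{cor:ho_order}: the fluctuations are $O_{\prob}(\sqrt{\aleph_{\mathcal V}(L-1)/|\mathcal V|})=o_{\prob}(1)$. After multiplying by $\binom n2^{1/2}(\hat\theta-\theta)=O_{\prob}(1)$, this contributes only $o_{\prob}(1)$. Taking expectations via \eqref{comparison_phis} turns factors $\phi_{\ell,\theta}$ into indicators of $\Anetk{k}$.

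For $\tau_j=1$, the $\alpha$-derivative yields $(C_H^{(k)}-U_H^{(-j,k)})/(1-\alpha-\beta)$ and the $\beta$-derivative yields $C_H^{(k)}/(1-\alpha-\beta)$. For $\tau_j=0$, the $\alpha$-derivative of $1-\phi_\theta$ gives $(1-\phi_\theta)/(1-\alpha-\beta)$, which yields $C_H^{(k)}/(1-\alpha-\beta)$. The $\beta$-derivative gives $-\phi_\theta/(1-\alpha-\beta)=\{(1-\phi_\theta)-1\}/(1-\alpha-\beta)$, which yields $(C_H^{(k)}-U_H^{(-j,k)})/(1-\alpha-\beta)$. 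Summing over $j$ produces exactly the two coordinates of $\bm\Xi^{(k)}_H$.

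I expect the main obstacle to be rigor in the replacement step: the random coefficients are evaluated at data-dependent $\hat\theta$ only through a Taylor remainder, so one must check that the remainder is uniformly bounded. That follows because $\phi_\theta$ is smooth and bounded on the compact set where $1-\alpha-\beta\ge\xi$. One must also control the second moments of the leave-one-out statistics under Assumption~\ref{assump:c22}, which reuses the bound \eqref{var_S} essentially verbatim.
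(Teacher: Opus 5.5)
Your proposal is correct and is essentially the paper's argument, namely the proof of Lemma~\ref{lem:ho_order_theta_unknown} specialized to $x_k(\theta)=\alpha$, $y_k(\theta)=1-\beta$. The one difference is that you Taylor-expand the normalized factors $\phi_{\ell,\theta}$ directly, rather than separating the $T$-quantities from the factor $\{y_k(\theta)-x_k(\theta)\}^{-L}$; your rate, remainder, and coefficient-replacement steps are all sound. Your coefficients reproduce $\bm{\Xi}_H^{(k)}$ exactly and match the first, unsimplified expression for $\Delta_H^{(k)}$ in that lemma, whereas its second ``simplified'' line puts only $\#\{j:\tau_j=1\}$ and $\#\{j:\tau_j=0\}$ copies of $C_H$ on $\nabla x_k$ and $\nabla y_k$ instead of $L$ (for the triangle it would give a zero $\beta$-coefficient instead of $3C_H^{(k)}/(1-\alpha-\beta)$), so your direct differentiation is the safer way to read off the coefficients.
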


\begin{remark}
    Recall, for triangle density, $\tau_1=\tau_2=\tau_3=1$, and $\subtri{U}^{(-j,k)} = \subtchain{C}^{(k)}$ for $j=1, 2, 3$. Thus,
    $$
    \subtri{\bm{\Xi}}^{(k)} = \frac{3}{1- \alpha -\beta} \begin{pmatrix}  \subtri{C}^{(k)} - \subtchain{C}^{(k)} \\ 
    \subtri{C}^{(k)}
    \end{pmatrix} = \frac{3}{1- \alpha -\beta} \begin{pmatrix}  -\subtstar{C}^{(k)} \\ 
    ~ \subtri{C}^{(k)}
    \end{pmatrix}
    $$
    for $k=1, K$.

    For two-star density, $\tau_1=0$ with $\subtstar{U}^{-1,k)} = \subtchain{C}^{(k)}$, and $\tau_2 = \tau_3 = 1$ with $\subtstar{U}^{-j,k)} = \subhchain{C}^{(k)}$ for $j=2, 3$. Thus,
    $$
        \subtstar{\bm{\Xi}}^{(k)} = \frac{1}{1- \alpha -\beta} \begin{pmatrix} 3 \subtstar{C}^{(k)} - 2 \subhchain{C}^{(k)} \\ 
     3 \subtstar{C}^{(k)} - \subtchain{C}^{(k)}
    \end{pmatrix}
    $$
    for $k=1, K$.
\end{remark}

From the expression in \eqref{approx_ho_comparison_unknown_theta}, a bootstrap procedure, plus analogous derivations to those in Section~\ref{app:ho_theta_unknown} to find the asymptotic variances
$$
    \operatorname{aVar}\left\{ \widehat{C}_{H,\mathcal{C}}^{(k)} \right\} = \operatorname{aVar}(S_{H,\mathcal{C}}^{(k)}) + 2 \cdot \operatorname{aCov}\left\{ S_{H,\mathcal{C}}^{(k)}, \begin{pmatrix}
        \hat{\alpha} \\ \hat{\beta}
    \end{pmatrix} \right\} \bm{\Xi}_H^{(k)} +  \left( \bm{\Xi}_H^{(k)} \right)^{\tp} \operatorname{aCov} \begin{pmatrix}
        \hat{\alpha} \\ \hat{\beta}
    \end{pmatrix} \bm{\Xi}_H^{(k)}
$$
for $k \in \{1,K\}$.
It is also possible to determine the asymptotic covariance between $\widehat{C}_{H,\mathcal{C}}^{(1)}$ and $\widehat{C}_{H,\mathcal{C}}^{(K)}$ which is induced by the common estimates of $\alpha$ and $\beta$. In particular,
$$
    \hspace{-0.5cm}
    \operatorname{aCov}\left\{ \widehat{C}_{H,\mathcal{C}}^{(1)}, \widehat{C}_{H,\mathcal{C}}^{(K)} \right\} = \operatorname{aCov}\left\{ S_{H,\mathcal{C}}^{(1)}, \begin{pmatrix}
        \hat{\alpha} \\ \hat{\beta}
    \end{pmatrix} \right\} \bm{\Xi}_H^{(K)} + \operatorname{aCov}\left\{ S_{H,\mathcal{C}}^{(K)}, \begin{pmatrix}
        \hat{\alpha} \\ \hat{\beta}
    \end{pmatrix} \right\} \bm{\Xi}_H^{(1)} + \left( \bm{\Xi}_H^{(1)} \right)^{\tp} \operatorname{aCov} \begin{pmatrix}
        \hat{\alpha} \\ \hat{\beta}
    \end{pmatrix} \bm{\Xi}_H^{(K)}
$$
where we use the fact that $S_{H,\mathcal{C}}^{(1)}$ and $S_{H,\mathcal{C}}^{(K)}$ are independent. 
Similar expressions can be found for the asymptotic covariance between multiple different higher order subgraphs.

As in Section~\ref{app:ho_theta_unknown} we derive the cross-covariance matrices between the subgraph densities and GMM estimator by a linear decomposition
$$
    \operatorname{aCov}\left\{ \begin{pmatrix} S_{H,\mathcal{C}}^{(1)} \\ S_{H,\mathcal{C}}^{(K)} \end{pmatrix}, \begin{pmatrix}
        \hat{\alpha} \\ \hat{\beta}
    \end{pmatrix} \right\} = \bm{\Lambda}_H \tilde{\bm{M}}(\psi_{\mathcal{C}})^{\tp}
$$
where $\tilde{\bm{M}}(\psi_{\mathcal{C}})$ is computed as part of the comparison model analog to Proposition~\ref{prop:gmm_adaptive_clt}, and $\Lambda_H$ can be found by computing the coordinate-wise covariances between $S^{(k)}_H$ and $\hat{\bm{m}}_h$ for fixed $k=1, K$ and $h=1,\ldots,p$.
Define
\begin{equation*}
  \sigma_{st,kh}(\theta) = \cov_{\theta}\left( \Ynet_{ij}^{(k)}, m_h(\Yvec_{ij}) ~\vert~ \Anetk{1}_{ij}=s, \Anetk{K}_{ij}=t \right)
\end{equation*}
for $s, t \in \{0,1\}$, so that
$$
  \cov_{\theta}(\Ynetk{k}_{ij},m_h(\Yvec_{ij})) = \sum_{s,t \in \{0,1\}} \sigma_{st,kh}(\theta) \mathbb{I}(\Anetk{1}_{ij}=s, \Anetk{K}_{ij}=t).
$$
Then
\begin{align*}
  &\cov_{\theta}\left( S^{(k)}_H,\left[ \binom{n}{2}^{1/2} (\hat{\bm{m}} - \bm{m}(\psi))\right]_h \right) \\
  = &\frac{1}{2 \lvert \mathcal{V} \rvert(1 - \alpha - \beta)} \sum_{j=1}^L (-1)^{1-\tau_j} \sum_{\bm{v} \in \mathcal{V}} \sum_{i_1 \neq i_2} \cov_{\theta}(\Ynetk{k}_{v_j},m_h(\Yvec_{i_1i_2})) \prod_{\ell \neq j} (\Anetk{k}_{v_{\ell}})^{\tau_{\ell}}(1 - \Anetk{k}_{v_{\ell}})^{1 - \tau_{\ell}} \\
  = &\frac{1}{2 \lvert \mathcal{V} \rvert(1 - \alpha - \beta)} \sum_{j=1}^L (-1)^{1-\tau_j} \sum_{\bm{v} \in \mathcal{V}} \cov_{\theta}(\Ynetk{k}_{v_j},m_h(\Yvec_{v_j}))  \prod_{\ell \neq j} (\Anetk{k}_{v_{\ell}})^{\tau_{\ell}}(1 - \Anetk{k}_{v_{\ell}})^{1 - \tau_{\ell}} \\
  = &\frac{1}{2 \lvert \mathcal{V} \rvert(1 - \alpha - \beta)} \sum_{j=1}^L (-1)^{1-\tau_j} \sum_{\bm{v} \in \mathcal{V}} \left\{ \sum_{s,t \in \{0,1\}} \sigma_{st,kh}(\theta) \mathbb{I}(\Anetk{1}_{v_j}=s, \Anetk{K}_{v_j}=t) \right\} \prod_{\ell \neq j}  (\Anetk{k}_{v_{\ell}})^{\tau_{\ell}}(1 - \Anetk{k}_{v_{\ell}})^{1 - \tau_{\ell}} \\
  = &\frac{1}{2 (1 - \alpha - \beta)} \sum_{s,t \in \{0,1\}} \sigma_{st,kh}(\theta) \sum_{j=1}^L (-1)^{1-\tau_j} \left\{ \frac{1}{\lvert \mathcal{V} \rvert} \sum_{\bm{v} \in \mathcal{V}} \mathbb{I}(\Anetk{1}_{v_{j}}=s, \Anetk{K}_{v_{j}}=t) \prod_{\ell \neq j}  (\Anetk{k}_{v_{\ell}})^{\tau_{\ell}}(1 - \Anetk{k}_{v_{\ell}})^{1 - \tau_{\ell}} \right\}
\end{align*}
Notice that for fixed $j$, $s$ and $t$, the final average in braces,
$$
    \frac{1}{\lvert \mathcal{V} \rvert} \sum_{\bm{v} \in \mathcal{V}} \mathbb{I}(\Anetk{1}_{v_{j}}=s, \Anetk{K}_{v_{j}}=t) \prod_{\ell \neq j}  (\Anetk{k}_{v_{\ell}})^{\tau_{\ell}}(1 - \Anetk{k}_{v_{\ell}})^{1 - \tau_{\ell}}
$$
is a subgraph density involving edges from both the 1st and $K$th underlying snapshots. These densities decompose the overall density of the $j$th leave-one-out subgraph density in the $k$th underlying snapshot, such that their sum over $s,t \in \{0,1\}$ is $U_H^{(-j,k)}$.
As in Section~\ref{app:ho_theta_unknown}, the asymptotic cross-covariance can be approximated as a linear combination of the entries of $\sigma_{st,kh}(\theta)$ for $s, t \in \{0,1\}$.

\begin{remark}
    When $H$ is the density of triangles or two-stars, we have
    $$
        U_H^{(-j,k)} \in \{ \subtchain{C}^{(k)}, \subhchain{C}^{(k)} \}
    $$
    for $k=1, K$.
    We denote the decomposition of these 4 subgraph densities by
    $$
        C_H^{(k)} = \sum_{s,t \in \{0,1\}} \bar{C}_{H}^{(k),st} 
    $$
    for $k=1, K$ and $H$ is either two-chains or half-chains.

    Simplifying for triangles, the entries of the cross-covariance are given by
    $$
        \left[ \subtri{\bm{\Lambda}} \right]_{kh} = \frac{3}{2(1-\alpha-\beta)} \sum_{s,t \in \{0,1\}} \sigma_{st,kh}(\theta) \subtchain{\bar{C}}^{(k),st},
    $$
    and for two-stars,
    $$
        \left[ \subtstar{\bm{\Lambda}} \right]_{kh} =  \frac{1}{2(1-\alpha-\beta)} \sum_{s,t \in \{0,1\}} \sigma_{st,kh}(\theta) \left\{ 2 \subhchain{\bar{C}}^{(k),st} - \subtchain{\bar{C}}^{(k),st} \right\}
    $$
    both for $k=1, K$ and $h$ indexing an edge-averaged moment in the GMM computation (see Section~\ref{sec:comparison}).
    As in Section~\ref{app:ho_theta_unknown}, these cross covariances can be consistently estimated by plugging in $\hat{\theta}$, and estimates of each of the 16 subgraph densities $\bar{C}_{H}^{(k),st}$ based on the 1st and $K$th observed snapshots.
\end{remark}

\section{Additional analysis of neuroimaging data} 
\label{app:realdata_2}

\begin{table}[ht]
\centering
\begin{tabular}{c ccc ccc}
\hline
 & \multicolumn{3}{c}{Resting} & \multicolumn{3}{c}{Listening} \\
Subject & $\Delta \delta$ & $\Delta \mathrm{cc}$ & $\Delta \mathrm{nt}$
        & $\Delta \delta$ & $\Delta \mathrm{cc}$ & $\Delta \mathrm{nt}$ \\
\hline
1  &   &   &   & - & - & + \\
2  & + &   &   & + &   & - \\
3  & - & + &   & + &   &   \\
4  &   &   &   & - & + &   \\
5  & - & - & + & - &   & + \\
6  & - &   &   &   &   &   \\
7  & + &   &   & + &   &   \\
8  & + &   &   & + & - & - \\
9  & + &   &   & - &   & + \\
10 & - & - & + & - & - & - \\
11 & - &   & + & + &   &   \\
12 & + &   & - & + &   &   \\
13 & + &   & - & + &   & - \\
14 & - &   &   & + &   & - \\
15 & + &   & + &   &   &   \\
16 & + &   & + &   & - & + \\
17 & + &   &   &   &   & + \\
18 & - &   &   & - &   &   \\
19 & + &   & - & + &   & - \\
20 & + &   & - & + &   & - \\
21 &   &   &   & - &   &   \\
22 & + &   & - & + &   & - \\
23 & - & - &   & - & + & + \\
\hline
\end{tabular}
\caption{Signs of significant changes in network summaries by condition, all subjects. \label{tab:neuro_tests}}
\end{table}

In this section, we present the full testing results for changes in edge density ($\Delta \delta)$, clustering coefficient ($\Delta \mathrm{cc}$), and normalized triangle density ($\Delta\mathrm{nt}$) at the Bonferroni-corrected 5\% level for the 23 subjects analyzed in Section~\ref{sec:real_data}. Table~\ref{tab:neuro_tests} also shows the signs of the significant changes, $+$ for increase and $-$ for decrease.

Due to the heterogeneity across subjects, as a case study we also present the detailed results for subject 1, for both resting state and listening task scans.
Table~\ref{tab:201_estimation} reports estimates of $(\delta_1,\delta_K,\alpha,\beta,\lambda,\mu)$ for both the resting state and listening task network sequences.
Table~\ref{tab:201_inference} reports estimates of $\Delta \delta$, $\Delta \mathrm{cc}$, and $\Delta \mathrm{nt}$.
All estimates are accompanied by asymptotic marginal $95\%$ confidence intervals. Asymptotic standard errors for higher-order summaries are based on a bootstrap sampler with $400$ samples. 
The results are reported in Tables~\ref{tab:201_estimation} and \ref{tab:201_inference}.

\begin{table}[ht]
\centering
\begin{tabular}{lcccc}
\toprule
 & \multicolumn{2}{c}{Rest} & \multicolumn{2}{c}{Listening} \\
\cmidrule(lr){2-3} \cmidrule(lr){4-5}
Parameter 
& Estimate & 95\% CI 
& Estimate & 95\% CI \\
\midrule
$\delta_1$ & 0.1674 & [0.1631, 0.1718] & 0.4623 & [0.4563, 0.4684] \\
$\delta_K$ & 0.1547 & [0.1504, 0.1589] & 0.1637 & [0.1591, 0.1682] \\
$\alpha$   & 0.0298 & [0.0276, 0.0320] & 0.0724 & [0.0686, 0.0761] \\
$\beta$    & 0.1573 & [0.1465, 0.1682] & 0.1086 & [0.1007, 0.1166] \\
$\lambda$  & 0.0078 & [0.0037, 0.0118] & 0.0002 & [-0.1392, 0.1396] \\
$\mu$      & 0.3579 & [0.2869, 0.4290] & 0.0180 & [-0.1361, 0.1721] \\
\bottomrule
\end{tabular}
\caption{Parameter estimates for resting state and listening task networks, subject 1. \label{tab:201_estimation}}
\end{table}

\begin{table}[ht]
\centering
\begin{tabular}{lcccc}
\toprule
 & \multicolumn{2}{c}{Rest} & \multicolumn{2}{c}{Listening} \\
\cmidrule(lr){2-3} \cmidrule(lr){4-5}
Parameter 
& Estimate & 95\% CI 
& Estimate & 95\% CI \\
\midrule
$\Delta\delta$ & -0.0127 & [-0.0175, -0.0080] & -0.2987 & [-0.3053, -0.2920] \\
$\Delta\mathrm{cc}$ & 0.0083 & [-0.0312, 0.0478] & 0.0575 & [-0.0248, 0.1399] \\
$\Delta\mathrm{nt}$ & -0.3490 & [-1.696, 0.9985] & 4.225 & [3.121, 5.330] \\
\bottomrule
\end{tabular}
\caption{Difference in (higher-order) summaries for resting state and listening task networks, subject 1. \label{tab:201_inference}}
\end{table}

We see only minor changes in properties of the resting state network: while the change in density is significant at the 5\% level, it is small in magnitude and does not survive a Bonferroni correction in the multi-subject data. 
Changes in the network coordination, measured by $\Delta \mathrm{cc}$ and $\Delta \mathrm{nt}$, are not significant.
On the other hand, the listening task networks show signficant changes which are large in magnitude in both $\Delta \delta$ and $\Delta \mathrm{nt}$. 
At the end of training, this subject shows a significant decrease edge density, suggesting less overall coordination in fMRI signals; but a significant increase in normalized triangle density, suggesting more tightly clustered regions of brain activity.

\bibliographystylesupp{abbrvnat}
\bibliographysupp{mybib0}

\end{document}